\documentclass{article}

\usepackage[utf8]{inputenc} 
\usepackage[T1]{fontenc} 
\usepackage[margin=1in]{geometry}
\usepackage{float}
\usepackage[colorlinks=true,linkcolor=black,citecolor=black]{hyperref}       
\usepackage{url}            
\usepackage{booktabs}       
\usepackage{amsfonts}       
\usepackage{nicefrac}       
\usepackage[expansion=false]{microtype}      
\usepackage{lipsum}
\usepackage{graphicx}
\graphicspath{ {./images/} }

\usepackage{natbib}
\usepackage{amsmath,amssymb,amsthm}

\usepackage{xcolor}
\usepackage{enumitem}
\usepackage{setspace}

\allowdisplaybreaks

\usepackage{xspace}

\newcommand{\Pb}{\mathbb{P}}

\newcommand{\E}{\mathbb{E}}

\newtheorem{theorem}{Theorem}[section]

\newtheorem{proposition}[theorem]{Proposition}

\numberwithin{equation}{section}

\title{Gaussian-efficient testing by betting on the mean of bounded data}

\author{
 Diego Martinez-Taboada$^{1}$ and Aaditya Ramdas$^{2}$ \\
  $^1$Department of Statistics \& Data Science, Carnegie Mellon University\\
$^2$Department of Statistics, Stanford University\\
  \texttt{diegomar@andrew.cmu.edu, aramdas@stanford.edu} } 
  
\begin{document}
\maketitle
\begin{abstract}
Given $[0,1]$-valued random variables $X_1,\dots,X_n$ such that $\E[X_i | X_1,\dots,X_{i-1}]= \mu$ for all $i$, we propose a new nonasymptotic confidence interval for $\mu$ that is obtained by inverting terminal e-values generated by a novel betting strategy. When the data are iid,  its limiting width matches that of the central limit theorem (``Gaussian-efficient''), finally surpassing the inefficient limits of  previous betting intervals. 
Our main conceptual advance involves designing betting fractions that track the conditional rejection probability of the
most powerful terminal test in a limiting Gaussian experiment.  When one
predictable variance estimator is shared across candidate means, the
deterministic inversion is an interval for every data sequence and its two
endpoints can be found easily.  
The width can be improved further with external randomization.
In simulations, our method yields the tightest intervals to
date; for every distribution tested and all sufficiently large $n$, our
deterministic version beats STaR-Bets and is competitive with Gaffke, while
the randomized improvement beats both.  It thus combines finite-sample validity under
martingale dependence, easy endpoint computation, Gaussian-efficient
inference for iid data, and excellent empirical performance.
We also extend the construction and its efficiency theory to sampling
without replacement, where it again achieves state-of-the-art empirical performance.
 
\end{abstract}

\section{Introduction}
\label{sec:introduction}

Constructing a confidence interval for the mean of bounded data is one of the
most basic problems in statistical inference.  Boundedness permits honest
finite-sample inference without a parametric model, but a useful interval
should also adapt to the observed variance, be sharp at realistic sample
sizes, and approach the classical central limit theorem interval when the sample is large (``Gaussian-efficient'').

Classical concentration inequalities provide distribution-free control
\citep{bennett1962probability,hoeffding1963probability,azuma1967weighted,
freedman1975tail,delapena1999general,bercu2008exponential,
bentkus2003inequality,pinelis2006binomial,bentkus2006domination,
kuchibhotla2024missing}. Sharper approaches include empirical-Bernstein,
CDF-envelope, small-sample, and finite-sample Gaussian constructions
\citep{maurer2009empirical,anderson1969confidence,fishman1991confidence,
rosenblum2009confidence,austern2022efficient}.  

These developments make
clear that validity and sharpness are distinct design goals.  Gaffke's
Dirichlet-randomization interval was the previous empirical state of the art
among finite-sample confidence intervals for bounded data \citep{gaffke2005three,
learnedmiller2020new,vlassis2026exact,ming2026gaffke}. However, Gaffke's interval is not valid under martingale dependence, and STaR-Bets~\citep{voracek2025star} is the empirical state-of-the-art in this setting, despite not being proven to strictly improve the width of other empirical-Bernstein constructions~\citep{waudby2024estimating}. We will extensively compare to these two methods in particular.

\paragraph{Three desiderata.} At a known horizon \(n\), we seek one two-sided \(1-\delta\) confidence
interval \(\mathcal I_n\) satisfying three requirements.  The first is
\textit{martingale validity}: whenever \(X_i\in[0,1]\) and
\(\E(X_i\mid\mathcal F_{i-1})=\mu\), the interval has the finite-sample
guarantee \(\Pb\{\mu\in\mathcal I_n\}\geq1-\delta\).  Independence is not
required, and conditional variances and higher moments may change with the
history.  The second is \textit{iid efficiency}: if the observations are
independent and identically distributed (iid) with mean \(\mu\in(0,1)\) and variance \(\sigma^2>0\), the width
attains the Gaussian benchmark
\begin{equation}\label{eq:intro-gaussian-efficiency}
  \operatorname{len}(\mathcal I_n)
  =\frac{2\sigma z_{1-\delta/2}}{\sqrt n}+o(n^{-1/2})
  \qquad\text{almost surely}.
\end{equation}
The third is \textit{efficient computation}: for every observed data sequence,
the accepted candidate means form one interval.  The
interval can therefore be computed by finding its two endpoints rather than
searching for an arbitrary subset of \([0,1]\).  These requirements concern
different properties: the conditional-mean model supplies martingale validity,
the iid submodel supplies the sharp local benchmark, and interval geometry
is a deterministic property of the update for each fixed data sequence.

Testing by betting gives a well-known route to the first requirement
\citep{waudby2024estimating}.
For a candidate mean \(m\), predictable betting fractions generate a
nonnegative test supermartingale under the corresponding null; its terminal
value is an e-value.  Markov's or Ville's inequality then controls rejection,
and inversion over \(m\) produces a confidence set;  this outline is now well known.
Write \(\Phi\), \(\phi\), and \(z_q=\Phi^{-1}(q)\) for the standard normal
cdf, density, and quantile.  Our new
ingredient is the use of an idealized hypothetical Gaussian experiment in order to derive the optimal betting strategy (which we refer to as \textit{Gaussian-efficient betting}, or \textit{GE-betting}), and then mimic that in finite samples. Define 
\[
 \psi(p)=\frac{\phi\{\Phi^{-1}(p)\}}p,\qquad 0<p<1.
\]
By convention, $\Phi^{-1}(0)=-\infty,\Phi^{-1}(1)=\infty$, $\psi(0)=\infty$, $\psi(1)=0$ (see Figure~\ref{fig:efficient-betting-function}).
If \(p=(\delta/2)K\) is current wealth divided by the rejection threshold,
\(t\in[0,1)\)
is the fraction of the horizon elapsed, and \(v\) is the variance per
observation in an idealized Gaussian design calculation, the optimal fraction of current
wealth bet per unit change in the
centered \(n^{-1/2}\)-scaled partial sum is
\begin{equation}\label{eq:intro-efficient-betting-rule}
 L(p,t,v)=\frac{\psi(p)}{\sqrt{(1-t)v}}.
\end{equation}
Theorem~\ref{thm:efficient-betting-efficiency} proves that the discrete
implementation below is martingale-valid, its inversion is an interval on
every sample path, and its width satisfies
\eqref{eq:intro-gaussian-efficiency} under iid sampling.  The same update also
permits optional external randomization of
the terminal threshold, which we include explicitly below.
Section~\ref{sec:optimal-gaussian-evalue} derives the following rule from the most
powerful fixed-horizon Gaussian test.

\paragraph{The proposed confidence interval.}
Discretizing the preceding strategy at the known horizon \(n\) gives the
following GE-betting implementable update.  Compute the single mean and variance
estimates, for \(t=0,\ldots,n-1\),
\begin{equation}\label{eq:regularized-variance-estimate}
 \widehat\mu_t=\frac{1/2+\sum_{j=1}^tX_j}{t+1},
 \qquad
 \widehat v_t=\frac{1/4+\sum_{j=1}^t
      (X_j-\widehat\mu_{j-1})^2}{t+1}.
\end{equation}
These estimates are available before \(X_{t+1}\), and the same
\(\widehat v_t\) is used for every candidate and both wealth sequences.  For
terminal calibration, draw
\(U_+,U_- \sim \operatorname{Unif}(0,1)\), independently
of the data, and hold this one pair fixed over every candidate mean (they could be drawn iid, or coupled, for example set to be equal or to sum to one).  If
external randomization is not desirable, simply set \(U_+=U_-=1\).  For each
\(m\in[0,1]\), initialize \(K_0^+(m)=K_0^-(m)=1\).  The upper-tail
process uses the increment \(X_i-m\), and the lower-tail process uses
\(m-X_i\).  Fix \(c\in(0,1]\); our default is \(c=1\).  At round
\(i=1,\ldots,n\), for each process
whose wealth lies strictly between zero and the rejection threshold
\(2/\delta\), define
\begin{align}
 \ell_{i,n}^+(m)
 &=\min\!\left\{
   \frac{\psi\{(\delta/2)K_{i-1}^+(m)\}}
        {\sqrt{(n-i+1)\widehat v_{i-1}}},\frac{c}m
 \right\}, \quad
 \ell_{i,n}^-(m)
 =\min\!\left\{
   \frac{\psi\{(\delta/2)K_{i-1}^-(m)\}}
        {\sqrt{(n-i+1)\widehat v_{i-1}}},\frac{c}{1-m}
 \right\},
 \label{eq:efficient-one-sided-fractions}
\end{align}
and update
\begin{align}
 K_i^+(m)
 &=\min\!\left\{\frac2\delta,
   K_{i-1}^+(m)\{1+\ell_{i,n}^+(m)(X_i-m)\}\right\},\quad
 K_i^-(m)
 =\min\!\left\{\frac2\delta,
   K_{i-1}^-(m)\{1+\ell_{i,n}^-(m)(m-X_i)\}\right\}.
 \label{eq:efficient-one-sided-updates}
\end{align}
At \(m=0\), define \(c/m=+\infty\), so the upper-tail cap is absent; at
\(m=1\), define \(c/(1-m)=+\infty\), so the lower-tail cap is absent.
Once a wealth reaches zero or \(2/\delta\), leave it unchanged.  The proposed
confidence interval for the chosen calibration pair is
\begin{equation}\label{eq:efficient-betting-interval}
 \mathcal I_n(U_+,U_-)
 =\left\{m\in[0,1]:K_n^+(m)<\frac{2U_+}{\delta},\
                    K_n^-(m)<\frac{2U_-}{\delta}\right\}.
\end{equation}
We write \(\mathcal I_n=\mathcal I_n(1,1)\) for the deterministic interval
analyzed in Theorem~\ref{thm:efficient-betting-efficiency}.  Drawing the two
uniforms gives uniformly randomized Markov calibration; the uniforms change
only the two terminal thresholds, not the betting updates.  Thus either
interval can be implemented directly from
\eqref{eq:regularized-variance-estimate}--%
\eqref{eq:efficient-betting-interval}, with no Gaussian approximation or
numerical dynamic program.

Empirically, the uniformly randomized version of this procedure produces the
tightest bounded-mean confidence intervals to date across our experiments,
generally improving on Gaffke's leading independence-based finite-sample
interval. Figure~\ref{fig:intro-deterministic-comparison} gives an early view of this
comparison for three representative distributions.  

\begin{figure}[t]
    \centering
    \includegraphics[width=0.96\textwidth]
      {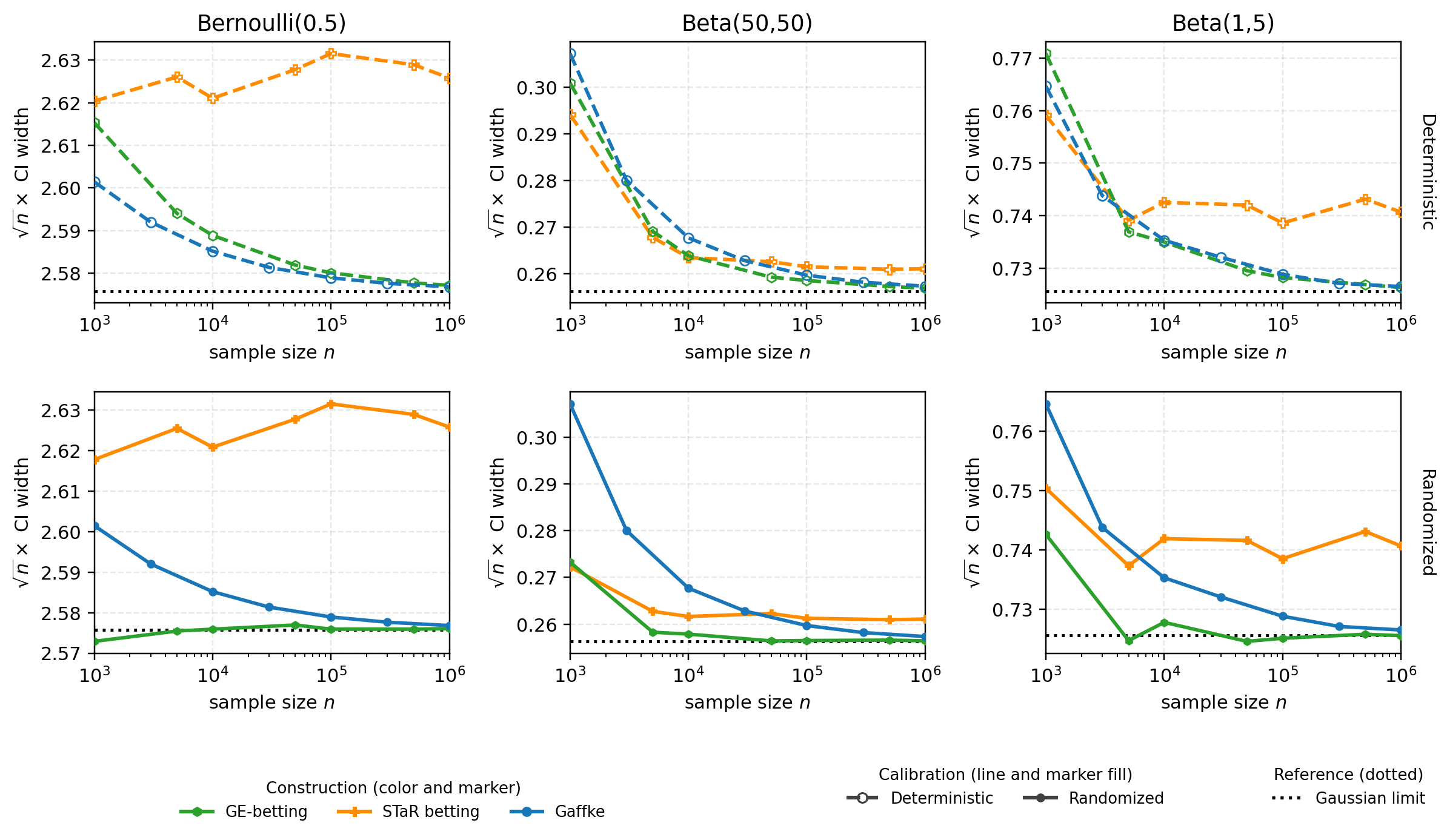}
    \caption{Mean \(\sqrt n\)-scaled widths of 99\% confidence
    intervals for three representative iid distributions.  Our GE-betting width rapidly approaches the
    Gaussian limit (dotted line). First row: all methods are deterministic, and
    GE-betting improves on STaR betting, which clearly converges to an
    inefficient limit, while being competitive with Gaffke, which is also Gaussian-efficient but not martingale-valid. Second row: all methods use external randomization,  and  GE-betting clearly dominates  Gaffke and STaR betting.   The full
    nine-distribution comparison    appears in Figure~\ref{fig:original-versus-star}.}
    \label{fig:intro-deterministic-comparison}
\end{figure}

\paragraph{Sampling without replacement.}
The same design principle applies when \(X_1,\ldots,X_n\) are drawn without
replacement from a fixed population \(x_{1:N}\in[0,1]^N\).  Unlike under iid
sampling, the centered partial sum is constrained to return to zero when the
population is exhausted:
\(S_N-N\mu_N=0\).  Its proportional-sampling Gaussian limit is therefore a
Brownian bridge, which is pinned at zero at time one, rather than Brownian
motion.  This pinning captures the negative dependence between draws and the
decline in uncertainty as more of the population is revealed.  For a
candidate population mean \(m\), let
\(S_i=\sum_{j=1}^iX_j\) and define
\[
 m_i(m)=\frac{Nm-S_{i-1}}{N-i+1},
 \qquad Y_i(m)=X_i-m_i(m).
\]
At the true population mean, \(m_i(m)\) is the conditional mean of the next
draw, so predictable nonnegative betting fractions applied to \(Y_i(m)\)
again generate valid wealth processes.  For a fixed horizon \(n<N\), the
Gaussian-bridge design calculation gives the raw fractions
\begin{equation}
 \widetilde\ell_{i,n}^{\mathrm{br},\pm}(m)
 =\psi\{(\delta/2)K_{i-1}^{\pm}(m)\}
  \sqrt{\frac{N-n}
  {(N-i)(n-i+1)\widehat v_{i-1}}}.
 \label{eq:intro-wor-fractions}
\end{equation}
Here \(\widehat v_{i-1}\) is the shared estimator in
\eqref{eq:regularized-variance-estimate}.  As above, we set \(c=1\); the
upper and lower fractions are capped at \(c/m_i(m)\) and
\(c/\{1-m_i(m)\}\), respectively.  The deterministic interval inverts the
two terminal wealths over the feasible range
\(\mathcal M_n=[S_n/N,\{S_n+N-n\}/N]\):
\begin{equation}
 \mathcal I_{N,n}^{\mathrm{br}}
 =\left\{m\in\mathcal M_n:
 K_n^+(m)<\frac2\delta,\ K_n^-(m)<\frac2\delta\right\}.
 \label{eq:intro-wor-interval}
\end{equation}

Section~\ref{sec:wor-bridge-betting} derives the bridge correction in
\eqref{eq:intro-wor-fractions}, gives the complete bounded-data update, and
establishes finite-population validity and Gaussian efficiency. Across our experiments, the proposed betting strategy yields the tightest bounded-mean confidence intervals to date. Figure~\ref{fig:intro-bridge-comparison} highlights this performance across three representative distributions. 

\begin{figure}[h!]
    \centering
    \includegraphics[width=0.96\textwidth]
      {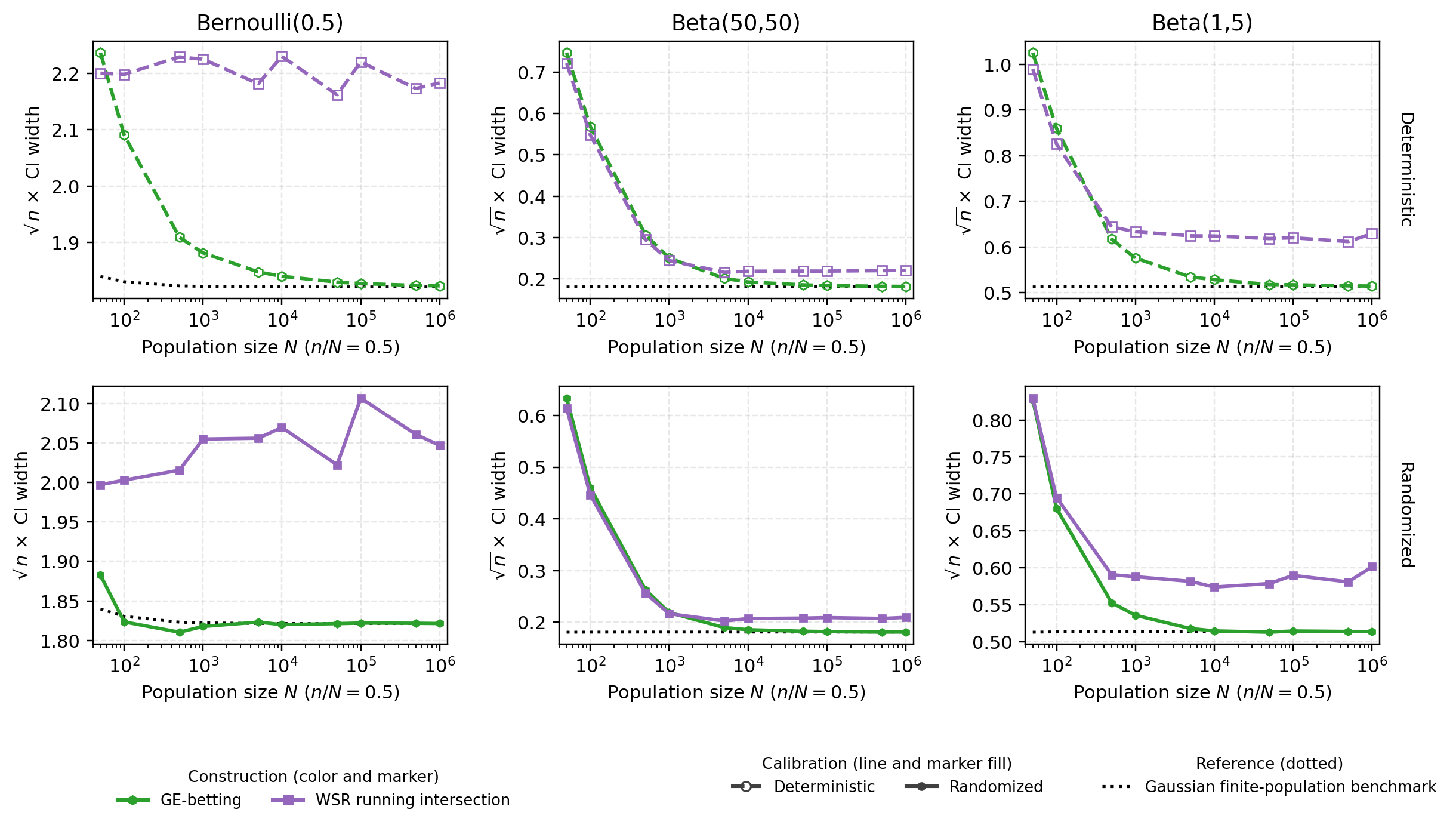}
    \caption{Mean \(\sqrt n\)-scaled widths of 99\% confidence intervals under
  sampling without replacement for three representative finite-population
  families, with sampling fraction \(n/N=.5\) and varying population size \(N\).
  Our GE-betting interval rapidly approaches the Gaussian
  finite-population benchmark (dotted line). GE-betting is compared to the confidence intervals from \cite{waudby2024estimating} (WSR running intersection), which serve as the tightest empirical baseline. First row: both methods use
  deterministic Markov calibration, and GE-betting is substantially
  narrower than WSR running intersection, for moderate and large \(N\).
  Second row: both methods use uniformly randomized Markov calibration, under
  which GE-betting again clearly improves upon WSR.  A full
  nine-population comparison, alongside further alternative confidence intervals such as those from~\citet{bardenet-maillard-2015}, appears in
  Figure~\ref{fig:wor-fixed-fraction}.}
    \label{fig:intro-bridge-comparison}
\end{figure}

\paragraph{Our contributions.}
First, we give a common derivation of product and STaR betting through
conditional expectations of terminal e-values in an auxiliary Gaussian
experiment.  At time \(t\), the
\emph{continuation value} is the conditional expectation of the terminal
e-value given the current Gaussian partial sum.  Its derivative with respect
to that partial sum gives the predictable betting amount.  This common
calculation reveals why a valid strategy can remain inefficient when it
tracks the wrong terminal event.

Second, applying the same calculation to the most powerful fixed-horizon Gaussian
rejection event yields the betting fraction above  and hence the
GE-betting update.  We prove its finite-sample martingale validity,
that its inversion is an interval for every data sequence when the estimator
is shared, and its iid Gaussian efficiency.  Our empirical study shows
that this theoretical efficiency translates into particularly sharp
finite-sample intervals, including in challenging low-variance settings.

Third, we extend the same design principle to sampling without replacement.
The relevant Gaussian experiment is then a Brownian bridge rather than
Brownian motion, and its continuation value yields a GE-betting
strategy that accounts for depletion of the population.  We prove
finite-population validity, interval-valued inversion, and Gaussian efficiency
under proportional sampling, and find that these guarantees are accompanied
by sharp empirical performance across the same range of bounded populations.

\paragraph{Outline.}
Having stated the proposed interval and its principal guarantee, we devote the
remainder of the paper to its derivation and analysis.
Section~\ref{sec:related-work} reviews related work, while
Section~\ref{sec:background} recaps testing by betting, product betting, and
STaR betting.  Section~\ref{sec:gaussian-continuation-design} then expresses
those existing methods through conditional Gaussian e-values, revealing which
terminal event each strategy tracks.  Using that comparison,
Section~\ref{sec:optimal-star-testing} derives the efficient fixed-horizon
strategy: Subsection~\ref{sec:optimal-gaussian-evalue} obtains
\(\phi\{\Phi^{-1}(p)\}\) from the optimal Gaussian event,
Subsection~\ref{sec:bounded-efficient-betting} explains its transfer to the
bounded-data update already displayed above, and
Subsection~\ref{sec:efficient-betting-efficiency} states the formal validity
and efficiency result.  Within
Section~\ref{sec:betting-to-confidence-intervals},
Subsection~\ref{sec:confidence-set-geometry} proves interval geometry,
Subsection~\ref{sec:uniformly-randomized-markov} discusses fixed-horizon
calibration, and Subsection~\ref{sec:experiments} reports the experiments.
Section~\ref{sec:wor-bridge-betting} extends the construction to sampling
without replacement: Subsection~\ref{sec:wor-gaussian-design} gives the
Gaussian-bridge design, Subsection~\ref{sec:wor-bounded-update} gives the
bounded-data update, Subsection~\ref{sec:wor-guarantees} states its guarantees,
and Subsection~\ref{sec:wor-experiments} reports the finite-population
experiments.
Section~\ref{sec:conclusion} concludes.

\section{Related work}
\label{sec:related-work}

\paragraph{Testing by betting.}
Nonnegative wealth processes unify gambling, likelihood ratios, e-values, and
safe tests \citep{vovk2021values,grunwald2024safe,ramdas2023game,
ramdas2025hypothesis}.  For bounded means, \citet{waudby2024estimating}
develop variance-adaptive betting intervals, \citet{shekhar2023near} study
their near-optimality, and \citet{voracek2025star} introduce target
recalculation.  Recent work also learns horizon-dependent strategies
\citep{taga2026learning}.  Related finite-horizon approaches formulate the
choice of betting fractions as a dynamic-programming or stochastic-control
problem
\citep{clerico2026time,baas2026adaptive}.  Our construction instead starts
from a chosen terminal Gaussian payoff, computes its conditional expectation,
and differentiates that expectation with respect to the current Gaussian
partial sum to obtain the next betting amount; it requires no numerical dynamic
program.  The
uniformly randomized Markov inequality \citep{ramdas2026randomized} is an
optional terminal calibration applicable to every fixed-horizon e-value
considered here.  Our focus is instead the construction of a fixed-horizon
e-value whose inversion is Gaussian efficient and empirically sharp.  The
conditional-e-value calculation also identifies the terminal event tracked by
each earlier strategy and thereby explains its width penalty.

\paragraph{Finite-sample intervals for bounded means.}
Fixed-sample intervals can also be obtained by inverting distributional or
concentration bounds.  Examples include Hoeffding and empirical-Bernstein
intervals \citep{hoeffding1963probability,maurer2009empirical}, CDF-envelope
methods \citep{anderson1969confidence}, and small-sample bounded-mean
constructions \citep{fishman1991confidence,rosenblum2009confidence}.  Sharper
recent benchmarks include finite-sample Gaussian approximations
\citep{austern2022efficient} and Gaffke's  interval,
the previous empirical state of the art among finite-sample confidence
intervals for bounded data
\citep{gaffke2005three,learnedmiller2020new,vlassis2026exact,ming2026gaffke}.
\citet{ming2026gaffke} further propose a randomized product-orthant p-value,
whose inversion need not improve Gaffke's endpoints despite its pointwise
p-value improvement.  These methods exploit independence or distributional
envelopes directly.  Our experiments compare their sharpness with that of a
procedure valid under martingale dependence. GE-betting is generally
tighter;  independence is imposed only for  efficiency analysis. 

\paragraph{Sampling without replacement.}
Classical concentration results for sampling without replacement begin with
Hoeffding's comparison inequality and Serfling's finite-population correction
\citep{hoeffding1963probability,serfling-1974}.
\citet{bardenet-maillard-2015} sharpen these bounds and derive an empirical
Bernstein--Serfling interval.  In the betting literature,
\citet{waudby2024estimating} center each draw at the candidate mean of the
remaining population to obtain variance-adaptive confidence intervals and
confidence sequences.  More recently, \citet{shekhar-ramdas-2026} derive
empirical-rate intervals for finite alphabets and an eventually valid
almost-sure interval for general bounded populations.  Exact hypergeometric
inversion provides an additional benchmark for binary populations, but does
not extend directly to arbitrary values in \([0,1]\).  Our construction uses
the same remaining-mean identity as earlier betting methods, but chooses its
fixed-horizon fractions by tracking the optimal Gaussian-bridge rejection
event.

\paragraph{Tracking terminal decisions.}
The event--martingale viewpoint behind our construction goes back to
\citet{ville1939etude}.  Suppose \(A\) is an event (perhaps a terminal rejection event for a null $\Pb_0$) with 
\(\Pb_0(A)=\alpha\).  At time \(t\),
\(\Pb_0(A\mid\mathcal F_t)\) is simply the probability, given the data so far,
that $A$ will eventually occur.  These probabilities form a martingale (called the Doob martingale). Ville's theorem states that for any event $A$ of probability $\alpha$, there exists a nonnegative martingale which reaches $1/\alpha$ if $A$ occurs: this martingale is simply \(\Pb_0(A\mid\mathcal F_t)/\Pb_0(A)\) if $\alpha  > 0$; Ville's deep contribution was handling the $\alpha=0$ case. 
 The
same conditional-rejection
probability appeared also in stochastic curtailment
\citep{lan1982stochastically}.  
\citet{ramdas2022admissible} show that an
admissible point-null sequential test is obtained by thresholding the
normalized Doob martingale obtained from any other existing sequential test (of which fixed-sample tests are of course a special case). The same Doob martingale construction was later explored by  
\citet{koning2026anytime} and \citet{holmes2026predicting} from more practical perspectives.  Our objective is
different: we track the Neyman--Pearson event in an auxiliary Gaussian
experiment to derive bounded-data betting fractions whose fixed-horizon
confidence-set inversion is Gaussian efficient.

\section{Background}
\label{sec:background}

Let \((\mathcal F_i)_{i=0}^n\) be a filtration, with
\(X_i\in[0,1]\) measurable with respect to \(\mathcal F_i\).  We assume
\begin{equation}\label{eq:conditional-mean-model}
    \E[X_i\mid\mathcal F_{i-1}]=\mu,
    \qquad i=1,\ldots,n,
\end{equation}
for an unknown \(\mu\in[0,1]\); no independence is required.  A nonnegative
random variable \(E\) is an \emph{e-variable} for a null hypothesis if its
null expectation is at most one; its realized value is an \emph{e-value}.
As is customary, we also use ``e-value'' for the random variable when no
confusion can arise.  An adapted nonnegative process \((K_t)\), with
\(K_0=1\), is a test supermartingale if
\(\E(K_t\mid\mathcal F_{t-1})\leq K_{t-1}\) under the null.  Every \(K_t\)
is then an e-variable, and Ville's inequality gives
\[
 \Pb\!\left\{\max_{t\leq n}K_t\geq1/\alpha\right\}\leq\alpha.
\]
Markov's inequality gives the corresponding fixed-time guarantee,
\(\Pb\{K_n\geq1/\alpha\}\leq\alpha\).  This is the source of our
finite-sample guarantees under martingale dependence.
\subsection{Testing by betting}
\label{section:testing_by_betting}

To test the point null \(H_m:\mu=m\), we construct nonnegative wealth
processes from predictable bets.  Fix \(m\).  Nonnegative fractions
\(\lambda_i^+(m)\) and \(\lambda_i^-(m)\) generate the upper- and lower-tail
wealth processes
\begin{align}
 K_t^+(m)&=\prod_{i=1}^t\{1+\lambda_i^+(m)(X_i-m)\},\nonumber\\
 K_t^-(m)&=\prod_{i=1}^t\{1+\lambda_i^-(m)(m-X_i)\},
 \qquad K_0^+(m)=K_0^-(m)=1.
 \label{eq:product-wealth-background}
\end{align}
The fraction used at round \(i\) may depend on the past but not on \(X_i\).
We impose the nonnegativity constraints
\begin{equation}\label{eq:product-solvency}
 0\leq\lambda_i^+(m)\leq\frac1m,
 \qquad
 0\leq\lambda_i^-(m)\leq\frac1{1-m}
\end{equation}
which keep the factors nonnegative for every \(X_i\in[0,1]\).  Under
\(\mu\leq m\), \(K^+(m)\) is a nonnegative supermartingale; under
\(\mu\geq m\), the same is true of \(K^-(m)\).  At the point null
\(H_m:\mu=m\), both are test martingales.  Hence their average and their
Bonferroni-scaled maximum,
\[
 E_t^{\rm av}(m)=\tfrac12\{K_t^+(m)+K_t^-(m)\},
 \qquad
 E_t^{\max}(m)=\tfrac12\max\{K_t^+(m),K_t^-(m)\}
\]
are both e-values for \(H_m\); indeed,
\(E_t^{\max}(m)\leq E_t^{\rm av}(m)\).  Their Markov inversions are
\begin{equation}\label{eq:fixed-confidence-set}
\begin{aligned}
 \mathcal C_n^{\rm av}
 &=\{m:E_n^{\rm av}(m)<1/\delta\},\\
 \mathcal C_n^{\max}
 &=\{m:E_n^{\max}(m)<1/\delta\}\\
 &=\{m:K_n^+(m)<2/\delta,\ K_n^-(m)<2/\delta\}.
\end{aligned}
\end{equation}
Each set contains \(\mu\) with probability at least \(1-\delta\).
Here \(\delta\) is the two-sided error probability, and each one-sided test is
allocated error probability \(\delta/2\).  Predictability and the
conditional-mean restriction control expected wealth, while
\eqref{eq:product-solvency} keeps wealth nonnegative.  These two facts
establish validity; the procedures below differ only in their predictable
betting fractions.

Betting inversions are not automatically convex.  For example, averaging a
decreasing upper-tail e-value and an increasing lower-tail e-value can produce a
disconnected sublevel set.  Our main results instead use the maximum
inversion, which intersects the two one-sided acceptance sets.
Proposition~\ref{prop:shared-scale-betting-interval} shows that, when the two
processes share a predictable variance scale, this inversion is an interval on every
sample path.  The confidence intervals in the main results are therefore
convex and easy to compute by locating their two endpoints.

\subsection{The fixed-horizon product bet}

The fixed-horizon strategy of \citet{waudby2024estimating} sets its betting
fraction using a predictable variance estimate.  With the regularized
estimator \(\widehat v_t\) already defined in
\eqref{eq:regularized-variance-estimate}, its fraction before enforcing the
nonnegativity bound is
\[
 \widetilde\lambda_i
 =\sqrt{\frac{2\log(2/\delta)}{n\widehat v_{i-1}}}.
\]
The actual fractions are
\begin{equation}\label{eq:fixed-horizon-product-bet}
 \lambda_i^+(m)=\min\!\left\{\widetilde\lambda_i,\frac c m\right\},
 \qquad
 \lambda_i^-(m)=\min\!\left\{\widetilde\lambda_i,
                                  \frac c{1-m}\right\},
 \qquad 0<c\leq1.
\end{equation}
We set \(c=1\) in the experiments.  The variance estimate is adaptive, but
the target and horizon in the unclipped fraction are fixed in advance.

This strategy also illustrates the distinction in
\eqref{eq:fixed-confidence-set}.  Because the same predictable variance
estimate is used for every candidate \(m\), each one-sided wealth is monotone
in the appropriate direction.  The inversion of \(E_n^{\max}\) is therefore
an interval.  The inversion of \(E_n^{\rm av}\), which adds the two one-sided
wealths before thresholding, need not be an interval and may require
convexification.

\subsection{The STaR bet}
\label{section:star_betting}

STaR recalculates its plan after every observation
\citep{voracek2025star}.  Let \(T=2/\delta\) be the wealth threshold for
one-sided rejection, and let
\(\widehat v_t^\pm(m)\) estimate the next squared centered increment for the
two one-sided processes.  At time \(t<n\), define the remaining log-wealth gaps
\[
 g_t^\pm=\{\log(T/K_t^\pm)\}_+.
\]
STaR betting uses
\begin{align}
 \lambda_{t+1}^{+,\mathrm{STaR}}(m)
 &=\mathbf1\{K_t^+<T\}
   \min\!\left\{
     \sqrt{\frac{2g_t^+}{(n-t)\widehat v_t^+(m)}},\frac c m
   \right\},\nonumber\\
 \lambda_{t+1}^{-,\mathrm{STaR}}(m)
 &=\mathbf1\{K_t^-<T\}
   \min\!\left\{
     \sqrt{\frac{2g_t^-}{(n-t)\widehat v_t^-(m)}},\frac c{1-m}
   \right\}.
 \label{eq:original-star-background-bet}
\end{align}
The experiments use \(c=1\).  After each update, the strategy solves the same
remaining-horizon problem.  Equivalently, if
\(p_t^\pm=K_t^\pm/T=(\delta/2)K_t^\pm\), the wealth-dependent factor in its
betting fraction is
\(f_{\rm sqrt}(p)=\sqrt{2\log(1/p)}\).  Once a wealth reaches \(T\), it is
held fixed, so later data cannot reverse rejection.  The use of the
remaining wealth gap and remaining horizon changes power, but not the e-value
validity established in Subsection~\ref{section:testing_by_betting}.

\citet{voracek2025star} note that their candidate-dependent STaR inversion need
not be convex.
Their implementation estimates the centered variance separately for each
candidate null, so the update need not preserve an order in \(m\).  This does
not rule out convex inversion for STaR betting itself.
Section~\ref{sec:betting-to-confidence-intervals} shows that
using one predictable variance estimator shared across candidate means makes
the maximum-based inversion an interval pathwise.

\subsection{Betting under sampling without replacement}
\label{sec:wor-background}

Fix a population \(x_{1:N}\in[0,1]^N\), and reveal its values in a uniformly
random order.  Let
\begin{equation}
 S_t=\sum_{j=1}^tX_j,
 \qquad \mathcal F_t=\sigma(X_1,\ldots,X_t),
 \qquad 0\leq t\leq N.
 \label{eq:wor-partial-sums}
\end{equation}
The population mean and variance, with denominator \(N\), are
\begin{equation}
 \mu_N=\frac1N\sum_{j=1}^Nx_j,
 \qquad
 \sigma_N^2=\frac1N\sum_{j=1}^N(x_j-\mu_N)^2.
 \label{eq:wor-population-moments}
\end{equation}
For a simple random sample of size \(n\), the standard error of
\(\bar X_n=S_n/n\) is
\begin{equation}
 \tau_{N,n}=\sigma_N\sqrt{\frac{N-n}{n(N-1)}}.
 \label{eq:wor-standard-error}
\end{equation}
After \(t\) draws, the observed values can be completed to a population in
\([0,1]^N\) with mean \(m\) if and only if
\begin{equation}
 m\in\mathcal M_t
 :=\left[\frac{S_t}{N},\frac{S_t+N-t}{N}\right].
 \label{eq:wor-feasible-range}
\end{equation}

This completion identity also gives the centered observations used for
betting.  Immediately before draw \(i\), a population with candidate mean
\(m\) must have mean
\begin{equation}
 m_i(m)=\frac{Nm-S_{i-1}}{N-i+1}
 \label{eq:wor-remaining-mean}
\end{equation}
among its remaining values.  Thus
\(m_i(\mu_N)=\E(X_i\mid\mathcal F_{i-1})\).  If \(\mu_N\leq m\), the actual
conditional mean is at most \(m_i(m)\); if \(\mu_N\geq m\), it is at least
\(m_i(m)\).  Define
\begin{equation}
 Y_i(m)=X_i-m_i(m).
 \label{eq:wor-innovation}
\end{equation}
For predictable fractions satisfying
\[
 0\leq\lambda_i^+(m)\leq\frac1{m_i(m)},
 \qquad
 0\leq\lambda_i^-(m)\leq\frac1{1-m_i(m)},
\]
the one-sided updates are
\begin{align}
 K_i^+(m)&=K_{i-1}^+(m)
   \{1+\lambda_i^+(m)Y_i(m)\},\nonumber\\
 K_i^-(m)&=K_{i-1}^-(m)
   \{1+\lambda_i^-(m)(-Y_i(m))\}.
 \label{eq:wor-generic-updates}
\end{align}
The caps keep both multiplicative factors nonnegative.  The upper process is
a test supermartingale under \(\mu_N\leq m\), and the lower process is one
under \(\mu_N\geq m\).  Every admissible predictable choice is therefore
valid; the choice of fractions determines the width of the resulting
interval.

\section{Conditional Gaussian e-values as a design principle}
\label{sec:gaussian-continuation-design}

The Gaussian calculations in this section are design devices, not
distributional assumptions on the data.  The observations remain bounded and
thus non-Gaussian; under a candidate null, their centered partial sums are
sums of martingale differences.  We use an auxiliary Gaussian experiment (a
null experiment in which a variance-normalized Gaussian partial sum evolves
as Brownian motion) to derive predictable betting rules, and then apply those
rules to the bounded observations.  Their exact finite-sample validity follows
from predictability, the conditional-mean restriction, and clipping, not from
the Gaussian approximation.

Product betting and STaR betting initially look quite different: product
betting uses a fixed fraction, whereas STaR recomputes its fraction from
current wealth and the remaining horizon.  In the auxiliary experiment, each
process is obtained by conditioning a terminal e-value on the current
Gaussian partial sum.  Product betting uses an exponential terminal e-value;
STaR uses the indicator of a path-dependent barrier-hitting event.  In each case, the
derivative with respect to the current Gaussian partial sum determines the
next betting amount.  This common representation identifies the event rewarded by
each strategy and explains its fixed-horizon performance.  It also motivates
Section~\ref{sec:optimal-star-testing}, where we instead track the
most powerful terminal rejection event.

\subsection{Product betting from the exponential test function}
\label{sec:recovering-testing-by-betting}

We first apply the conditional-e-value construction in the auxiliary Gaussian
experiment.  Let \((W_t)_{0\leq t\leq1}\) be standard Brownian motion with
natural filtration \((\mathcal F_t)\), and let \(Z\sim\mathcal N(0,1)\) be
independent of this process.  Given \(g\geq0\) satisfying
\(0<\E g(Z)<\infty\), normalize the terminal payoff as
\(E_1=g(W_1)/\E g(Z)\).  Conditional on \(W_t=x\), the future increment has
variance \(v=1-t\).  Define
\[
 u(v,x)=\E g(x+\sqrt vZ).
\]
The function \(u\) satisfies the heat equation, and
\[
 E_t=\frac{u(1-t,W_t)}{u(1,0)}
     =\E_0(E_1\mid\mathcal F_t)
\]
is a nonnegative martingale starting from one, and hence an e-process; here
\(\E_0\) denotes expectation under the null Brownian law.  Itô's
formula gives
\[
 \mathrm dE_t
 =\frac{\partial_xu(1-t,W_t)}{u(1,0)}\,\mathrm dW_t.
\]
Thus \(\partial_xu/u(1,0)\) is the amount bet per unit increment in \(W_t\),
whereas \(\partial_xu/u\) is the corresponding fraction of current wealth.
For product betting, take \(g_\eta(x)=e^{\eta x}\).  Its conditional
expectation at \(W_t=x\), with
\(v\) units of variance remaining, is
\[
    u_\eta(v,x)=\E e^{\eta(x+\sqrt vZ)}
      =e^{\eta x+\eta^2v/2}.
\]
After normalization by the initial value \(u_\eta(1,0)\), the value process is
\[
 \frac{u_\eta(1-t,W_t)}{u_\eta(1,0)}
 =\exp\{\eta W_t-\eta^2t/2\},
\]
the familiar Gaussian exponential e-process.  The parameter \(\eta\) is the
betting fraction: a larger value makes wealth more responsive to favorable
increments but also incurs a larger variance penalty.  Since
\(\partial_xu_\eta(v,x)=\eta u_\eta(v,x)\), the amount bet is always the
fraction \(\eta\) of current wealth.  Indeed, if current wealth is
\(K=u_\eta(v,x)/u_\eta(1,0)\), then the amount bet per unit increment is
\[
 \frac{\partial_xu_\eta(v,x)}{u_\eta(1,0)}
 =\eta\frac{u_\eta(v,x)}{u_\eta(1,0)}
 =\eta K,
\]
so the amount bet is exactly \(\eta K\).  The conditional-e-value
construction therefore recovers ordinary product betting.
Thus, for a bounded-data increment \(y=X_i-m\), the corresponding update is
\begin{equation}\label{eq:product-one-step-update}
 K_i=K_{i-1}+\eta K_{i-1}y
     =K_{i-1}(1+\eta y).
\end{equation}
The amount bet before observing \(X_i\) is therefore predictable and equal to
\(\eta K_{i-1}\).  Predictability and \eqref{eq:product-solvency}, rather than
the Gaussian approximation, establish the e-value property.

The fixed-horizon scale is already visible from
\[
 \log K_n\approx
 \eta\sum_{i=1}^n(X_i-m)-\frac{\eta^2n\sigma^2}{2}.
\]
Under a local alternative, the centered partial sum is of order \(\sqrt n\),
so the linear gain and quadratic variance cost are both of constant order when
\(\eta\asymp(n\sigma^2)^{-1/2}\).  Product betting is therefore the
bounded-data counterpart of the exponential Gaussian e-process, with a
predictable variance estimate supplying the unknown scale.
Appendix~\ref{app:squared-hinge-methods} applies the same conditional-e-value
construction to a squared hinge that more closely approximates an indicator.

\subsection{STaR betting and its barrier-hitting event}
\label{sec:star-digital-hedging}
\label{sec:star-barrier-digital}

STaR fits the same conditional-e-value construction, although the event it
tracks is initially implicit in its wealth-dependent betting fraction.  In the
continuous Gaussian limit, relative wealth
\(P_t=K_t/T=(\delta/2)K_t\), held fixed after reaching one, is a bounded null
martingale starting from \(\delta/2\).  If this wealth converges at the deadline to either zero or
one, then
\(P_t=\Pb_0(A\mid\mathcal F_t)\) for some event \(A\).  We now identify that
event.

Let \(Y_t\) denote the null Brownian partial-sum process in Gaussian variance
time.  Relative STaR wealth satisfies
\[
 \mathrm dP_t
 =P_t\sqrt{\frac{2\log(1/P_t)}{1-t}}\,\mathrm dY_t.
\]
Define the remaining standardized distance to rejection by
\[
 R_t=\sqrt{2(1-t)\log(1/P_t)}.
\]
A direct change-of-variable calculation gives
\begin{equation}\label{eq:barrier-distance-sde}
 \mathrm dR_t=-\mathrm dY_t-\frac{\mathrm dt}{2R_t},
 \qquad \tau_0=\inf\{t:R_t=0\}.
\end{equation}
After \(R_t\) reaches zero, it remains zero because STaR freezes the
one-sided wealth process at the rejection target.  For \(Q_t=R_t^2\), the
same calculation yields
\[
 \mathrm dQ_t
 =-2\sqrt{Q_t}\,\mathrm dY_t
 =2\sqrt{Q_t}\,\mathrm dW_t,
 \qquad W_t=-Y_t.
\]
This is called a dimension-zero squared-Bessel process; \emph{dimension} is
the name of its drift parameter and has nothing to do with the dimension of
the observations.  Its probability of
hitting zero within \(v\) remaining units of variance is
\begin{equation}\label{eq:barrier-digital-price}
 H(v,r):=\Pb_r(\tau_0\leq v)
 =\exp\{-r^2/(2v)\}.
\end{equation}
Here \(\Pb_r\) denotes probability for the process initialized at \(R_0=r\).
In particular, \(P_t=H(1-t,R_t)\) is the conditional probability of the event
\(A_{\rm STaR}=\{\tau_0\leq1\}\).  STaR betting therefore spends its
one-sided error probability on a path-dependent hitting event.

Conversely, solving \(p=H(v,r)\) for the required distance gives
\begin{equation}\label{eq:barrier-price-matching-distance}
 r(p,v)=\sqrt{2v\log(1/p)}.
\end{equation}
Differentiating the conditional hitting probability with respect to the
current Gaussian partial sum gives the betting fraction per unit increment,
\begin{equation}\label{eq:barrier-digital-betting-fraction}
 \frac{R_t}{1-t}
 =\sqrt{\frac{2\log(1/P_t)}{1-t}}.
\end{equation}
Replacing remaining Gaussian variance by a predictable estimate yields
\begin{equation}\label{eq:star-receding-horizon-bet}
 \ell_{t+1}^{\rm STaR}(m)
 =\sqrt{\frac{2\log(T/K_t)}{(n-t)v_t(m)}},
\end{equation}
the STaR betting fraction.  Here \((n-t)v_t(m)\) estimates the conditional
variance of the centered sum still to be observed, while the logarithm
converts the current wealth shortfall into the standardized distance in
\eqref{eq:barrier-price-matching-distance}.  Thus the square-root rule is the
derivative of the conditional hitting probability divided by that
probability, recomputed after every observation.

Product betting and STaR betting are thus both conditional-e-value strategies:
the former reveals an exponential terminal e-variable, while the latter
reveals a barrier-hitting indicator.  This interpretation also explains why
STaR is not efficient at a fixed horizon.  The most powerful Gaussian test
rejects according to the terminal Gaussian observation, whereas STaR
allocates null probability to paths that hit the barrier early, including
paths that subsequently finish
below the terminal cutoff.  Finite-sample e-value validity does not remove
this mismatch between the path-dependent hitting event and the optimal
terminal event.

\section{Gaussian-efficient testing by betting}
\label{sec:optimal-star-testing}

The conditional-e-value calculation suggests starting from the fixed-horizon
test one ultimately wants to perform.  In the Gaussian shift experiment, this
is the most powerful terminal test.  GE-betting tracks its conditional
rejection probability and uses the resulting martingale representation to
define a finite-sample-valid e-value for bounded data.  The benchmark analyses in
Section~\ref{sec:gaussian-continuation-design} show why this choice matters: a
strategy can be perfectly valid yet inefficient when its wealth tracks a
different event.

The derivation proceeds in three steps.  First,
Subsection~\ref{sec:optimal-gaussian-evalue} obtains the betting rule in an
auxiliary Gaussian experiment, without imposing Gaussianity on the observed
data.  Next, Subsection~\ref{sec:bounded-efficient-betting} transfers that
rule to the bounded-data update already stated in
Section~\ref{sec:introduction}.  Finally,
Subsection~\ref{sec:efficient-betting-efficiency} establishes finite-sample
validity and iid efficiency.
Appendix~\ref{app:horizon-free-wr-cs} extends the construction to
planned-window confidence sequences.

\subsection{Gaussian design device: the optimal fixed-horizon e-value}
\label{sec:optimal-gaussian-evalue}

This subsection is a design calculation in a Gaussian experiment.  It is used
only to derive a betting fraction; the bounded observations to which the
resulting strategy is applied will not be Gaussian.

In the Gaussian shift experiment, let \((B_t)\) be standard Brownian motion and write
\(\mathrm dY_t=\mathrm dB_t+h\,\mathrm dt\), \(0\leq t\leq1\), so that
\(Y_1\sim\mathcal N(h,1)\).  For testing \(h=0\) against \(h>0\), the most powerful
level-\(\delta/2\) event and its indicator e-value are
\[
    A^*=\{Y_1\geq z_{1-\delta/2}\},
    \qquad
    E_1^*=\frac2\delta\mathbf1_{A^*}.
\]
Under \(h=0\), the future increment \(Y_1-Y_t\) is independent Gaussian noise
with variance \(1-t\).  The conditional rejection probability is therefore
\begin{equation}\label{eq:gaussian-digital-price}
    p_t=\Pb_0(A^*\mid Y_t)
       =\Phi\left\{\frac{Y_t-z_{1-\delta/2}}{\sqrt{1-t}}\right\}.
\end{equation}
The formula is a conditional normal probability: \(\sqrt{1-t}\) is the
standard deviation of the future increment, so the argument of \(\Phi\) is
the signed distance to rejection in units of remaining uncertainty.  In
particular, \(p_t\) is a bounded null martingale.  Under the identification
\(K_t:=(2/\delta)p_t\), we have \(p_t=(\delta/2)K_t\), so \(p_t\) is current
wealth as a fraction of the one-sided target.

To obtain the corresponding betting fraction, let
\(U(t,y)=\Phi\{(y-z_{1-\delta/2})/\sqrt{1-t}\}\).
Differentiating with respect to the current partial sum and applying the
Brownian martingale representation gives
\[
    \mathrm dp_t
      =\frac{\phi\{\Phi^{-1}(p_t)\}}{\sqrt{1-t}}\,\mathrm dY_t.
\]
Dividing the last display by \(p_t=(\delta/2)K_t\) gives
\[
 \frac{\mathrm dK_t}{K_t}
 =L_{\rm opt}(p_t,1-t)\,\mathrm dY_t.
\]
Thus
\begin{equation}\label{eq:efficient-betting-fraction}
    L_{\rm opt}(p,r)=\frac{\psi(p)}{\sqrt r},
    \qquad
    \psi(p)=\frac{\phi\{\Phi^{-1}(p)\}}{p}.
\end{equation}
Here \(p\in(0,1)\) is current wealth divided by its target, \(r\) is the
remaining Gaussian variance, and \(L_{\rm opt}(p,r)\) is the fraction of
current wealth bet per unit Gaussian increment.  The notation
\(L_{\rm opt}\) refers to the Neyman--Pearson fixed-horizon event, not to a
general optimality statement for finite-sample betting strategies.

Figure~\ref{fig:efficient-betting-function} separates the fraction of current
wealth bet from the amount bet as a fraction of the rejection target.
The left panel shows that \(\psi(p)\) is positive and strictly decreasing: it
diverges as \(p\downarrow0\) and tends to zero as \(p\uparrow1\).  A strategy
far below its target therefore bets a larger fraction of its current wealth,
whereas a strategy close to its target bets a smaller fraction.  The amount
bet as a fraction of the target is
\[
 p\psi(p)=\phi\{\Phi^{-1}(p)\}.
\]
As the right panel shows, this amount tends to zero at both endpoints, is
symmetric about \(p=1/2\), and is largest at \(p=1/2\).  Thus the large value
of \(\psi(p)\) near zero does not imply a large amount bet relative to the
target, because current wealth is itself small.  The strict concavity of
\(p\psi(p)\), shown by
\(\{p\psi(p)\}''=-1/\phi\{\Phi^{-1}(p)\}<0\), is also the property used in
Section~\ref{sec:confidence-set-geometry} to prove that inversion gives an
interval.

\begin{figure}[t]
    \centering
    \includegraphics[width=0.92\textwidth]
      {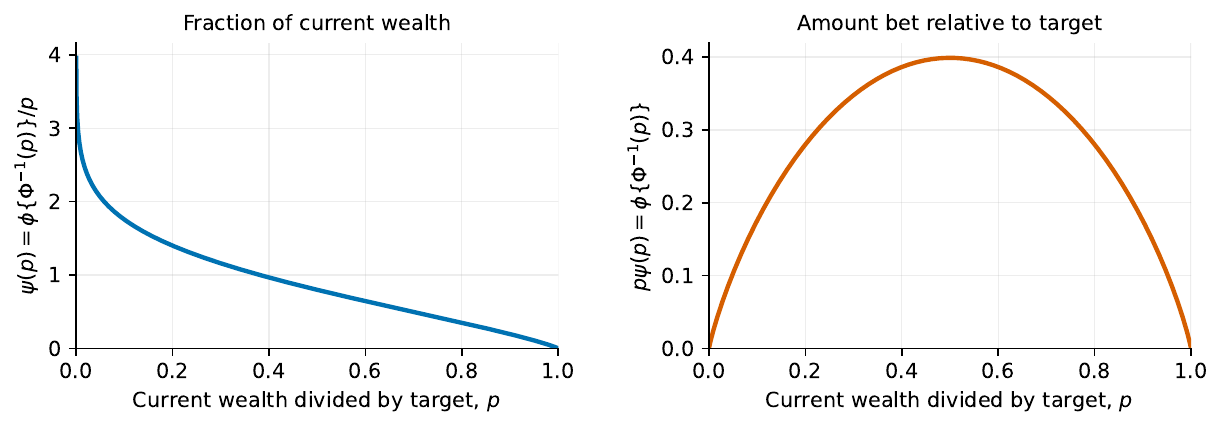}
    \caption{The GE-betting fraction and the corresponding amount bet
    as a fraction of the rejection target.  The fraction of current wealth
    is largest when wealth is small (left), but the amount relative to the
    target goes to zero both near zero wealth and near the rejection target
    (right).}
    \label{fig:efficient-betting-function}
\end{figure}

Under the null Gaussian experiment, the two betting strategies in the main
text have the common form
\(\mathrm dP_t=\sigma_f(P_t)\,\mathrm dY_t/\sqrt{1-t}\).  The denominator
is the standard deviation of the future Gaussian increment.  The numerator
\(\sigma_f(p)\) is the amount bet, expressed as a fraction of the rejection
target, per unit increment in \(Y_t\) when relative wealth is \(p\).  STaR betting uses
\(p\sqrt{2\log(1/p)}\), whereas GE-betting uses
\(\phi\{\Phi^{-1}(p)\}\).  They use the same remaining-variance scale but
correspond to different fixed-horizon events.  Only
\eqref{eq:gaussian-digital-price} tracks the Neyman--Pearson event.  We now
leave the Gaussian design experiment: the next subsection explains how this
Gaussian betting fraction leads to the bounded-data update displayed in the
introduction.

\subsection{From the Gaussian design to the bounded-data update}
\label{sec:bounded-efficient-betting}

The implementable strategy is already given in
\eqref{eq:regularized-variance-estimate}--%
\eqref{eq:efficient-betting-interval}.  Here we explain the short passage from
the Gaussian design calculation to that finite-sample-valid bounded-data
update.

The continuous-time betting fraction diverges at the deadline, but in discrete
time the last round retains one observation of variance.  For a positive
predictable estimate \(\widehat v_{i-1}\), replacing the remaining Gaussian
variance \(1-t\) by \((n-i+1)\widehat v_{i-1}\) gives the untruncated fraction
\begin{equation}\label{eq:efficient-betting-fraction-discrete}
    \ell_{i,n}^{\rm eff}(m)
    =\frac{\psi\{(\delta/2) K_{i-1}(m)\}}
      {\sqrt{\{n-i+1\}\widehat v_{i-1}}},
\end{equation}
where \((\delta/2)K_{i-1}\) is current wealth divided by the rejection
threshold \(2/\delta\).  Applying
this fraction to the upper and reflected lower increments gives the two
fractions in \eqref{eq:efficient-one-sided-fractions}.  Their caps, \(c/m\) and
\(c/(1-m)\), with \(c=1\), keep the next wealth nonnegative for every
\(X_i\in[0,1]\).
Holding wealth fixed after it reaches \(2/\delta\) makes rejection
irreversible.

Predictability controls the conditional expectation of each update, while
the caps keep it nonnegative.  Each one-sided wealth process in
\eqref{eq:efficient-one-sided-updates} is therefore a nonnegative
supermartingale under its one-sided null.  Running the two reflected processes and
inverting their terminal decisions gives
\eqref{eq:efficient-betting-interval}.  Sharing the estimator across the two processes and
candidate means is not needed for pointwise e-value validity, but it makes the
inversion an interval.  Section~\ref{sec:uniformly-randomized-markov} justifies
the two choices of \((U_+,U_-)\) included in the introductory definition;
they change neither the betting strategy nor this derivation.

\subsection{Gaussian efficiency of GE-betting}
\label{sec:efficient-betting-efficiency}

Estimated variance, clipping, and the rule that freezes wealth at zero or the
rejection target separate the bounded-data
strategy from its Gaussian idealization.  The next result gathers the three
main guarantees: finite-sample validity under the conditional-mean model,
interval-valued inversion on every sample path, and optimal first-order
Gaussian width under iid sampling.  We state it for the regularized estimator
\eqref{eq:regularized-variance-estimate}, which is the choice used throughout
our comparisons.

\begin{theorem}[Validity and Gaussian efficiency of GE-betting]
\label{thm:efficient-betting-efficiency}
Fix \(\delta\in(0,1)\) and \(c\in(0,1]\).  In
\eqref{eq:efficient-one-sided-fractions}--%
\eqref{eq:efficient-one-sided-updates}, use the estimator \(\widehat v_{i-1}\)
from \eqref{eq:regularized-variance-estimate}, shared by both wealth sequences
and every candidate mean, and let
\(\mathcal I_n=\mathcal I_n(1,1)\) use deterministic terminal calibration.
For any adapted observations \(X_i\in[0,1]\) satisfying
\eqref{eq:conditional-mean-model}, \(\mathcal I_n\) is an interval on every
sample path and
\[
 \Pb\{\mu\in\mathcal I_n\}\geq1-\delta.
\]
The same validity and interval conclusions hold for
\(\mathcal I_n(U_+,U_-)\), where \(U_+,U_-\) are the independent uniforms used
for randomized terminal calibration.
If, in addition, the observations are iid with
\(\mu\in(0,1)\) and variance \(\sigma^2>0\), then
\[
\frac{\sqrt n}{2\sigma}\operatorname{len}(\mathcal I_n)
\xrightarrow{\mathrm{a.s.}}z_{1-\delta/2}.
\]
\end{theorem}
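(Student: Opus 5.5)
The theorem has three parts, which I would prove in turn: validity, interval geometry, and the efficiency limit. \emph{Validity} is the easy part. Fix \(m=\mu\). The fractions \(\ell_{i,n}^\pm(\mu)\) are \(\mathcal F_{i-1}\)-measurable, because \(\widehat v_{i-1}\) and \(K_{i-1}^\pm(\mu)\) are, and they lie in \([0,c/\mu]\) and \([0,c/(1-\mu)]\) respectively. Hence every factor \(1+\ell(X_i-\mu)\) is nonnegative and has conditional mean one. Taking the minimum with \(2/\delta\) and freezing the wealth can only lower it, so each \(K^\pm(\mu)\) is a nonnegative supermartingale started at one.

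For the deterministic calibration, Markov's inequality gives \(\Pb\{K_n^+(\mu)\ge 2/\delta\}\le\delta/2\), the same holds for \(K_n^-\), and a union bound finishes. For \(\mathcal I_n(U_+,U_-)\), I would use the uniformly randomized Markov inequality: for \(U\) independent of the e-variable \(E\), \(\Pb\{E\ge U/\alpha\}\le\alpha\). This follows from \(\Pb\{U\le \alpha E\mid E\}=\min(1,\alpha E)\le\alpha E\). Applying it to each tail and taking a union bound gives coverage \(1-\delta\), whatever the coupling of \(U_+\) and \(U_-\).

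\emph{Interval geometry.} I would prove by induction on \(i\) that \(K_i^+(m)\) is nonincreasing in \(m\) and \(K_i^-(m)\) is nondecreasing in \(m\). Then the sets \(\{m:K_n^+(m)<2U_+/\delta\}\) and \(\{m:K_n^-(m)<2U_-/\delta\}\) are up- and down-rays, so their intersection is an interval for any \(U_\pm\). This is where sharing \(\widehat v\) across candidates matters. Write \(p=(\delta/2)K_{i-1}^+\) and \(s=(n-i+1)\widehat v_{i-1}\). The uncapped map is \(p\mapsto p+\phi\{\Phi^{-1}(p)\}(x-m)/\sqrt s\), and its derivative in \(p\) is \(1-\Phi^{-1}(p)(x-m)/\sqrt s\).

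Here I expect the main obstacle: this derivative need not be nonnegative for extreme \(p\), so monotonicity in the wealth argument is not automatic. My first approach would use strict concavity of \(g(p)=\phi\{\Phi^{-1}(p)\}\) together with the fact that \(p\mapsto g(p)/p\) is decreasing. Compare \(m<m'\) with \(p\ge p'\); it is enough to show that \(p(1+\psi(p)y)\ge p'(1+\psi(p')y')\) with \(y\ge y'\). Write the difference as \([p-p'] + y[g(p)-g(p')] + g(p')(y-y')\). If \(y\ge0\) and \(g(p)<g(p')\), bound \(g(p')-g(p)\le (p-p')\,g(p')/p'\) by concavity and \(g(0)=0\). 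This reduces the claim to \(y\,\psi(p')\le 1\) in the relevant range, which I would secure by showing that in that regime the cap \(c/m\) or the threshold \(2/\delta\) binds. The capped branch \(p(1+cy/m)\) is monotone directly, and the minimum and freezing steps preserve order. Freezing at zero needs a separate check: once \(K^+(m)=0\), every larger \(m'\) must also have \(K^+(m')=0\), and this follows from the ordering at the previous step.

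\emph{Efficiency.} Under iid sampling, \(\widehat v_t\to\sigma^2\) almost surely. Take \(m=\mu\pm\sigma a/\sqrt n\) with \(a\) fixed. I would show that the bounded-data relative wealth \(p_i=(\delta/2)K_i^\pm\), viewed at \(i=\lfloor nt\rfloor\), is a discretization of \(p_t=\Phi\{(Y_t-z)/\sqrt{1-t}\}\) with \(Y_t=W_t+at\). A deterministic Itô/Taylor comparison along the path should then give \(p_n\to \mathbf 1\{W_1+a\ge z_{1-\delta/2}\}\) for Brownian \(W\), with \(W\) coupled via strong approximation (KMT) to the partial sums. The caps \(c/m\) never bind for large \(n\), since \(\psi(p)/\sqrt{n}\to0\) uniformly on \(p\ge\) a polynomially small level. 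Handling the last \(o(n)\) rounds and very small \(p\) requires a tail argument.

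Because of the pathwise statement, rather than a distributional limit, I would actually argue through the endpoints. The upper endpoint \(\hat m_+\) satisfies \(K_n^-(\hat m_+)\approx 2/\delta\), and the lower endpoint \(\hat m_-\) satisfies \(K_n^+(\hat m_-)\approx 2/\delta\). The approximation then shows \(\sqrt n(\bar X_n-\hat m_-)/\sigma\to z_{1-\delta/2}\) almost surely, and likewise for \(\hat m_+\). The reason is that, conditionally on the path, the GE wealth terminally rejects exactly when the standardized sum exceeds \(z\) up to \(o(1)\), and \(\sqrt n(\bar X_n-m)/\sigma\) evaluated at the endpoint is the relevant statistic. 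Summing the two endpoint limits gives \(\sqrt n\,\mathrm{len}/(2\sigma)\to z_{1-\delta/2}\). The delicate point is uniformity of the discretization error over \(m\) in an \(O(n^{-1/2})\) window. I would obtain it from the monotonicity established above together with a squeeze at two fixed values \(a=z\pm\epsilon\).
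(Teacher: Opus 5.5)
Your validity argument, including the randomized calibration, is the paper's Step 1 and is fine. The interval-geometry step, however, rests on a false inequality. Concavity of \(g(p)=\phi\{\Phi^{-1}(p)\}\) with \(g(0)=0\) says that \(g(p)/p\) is nonincreasing. For \(p\ge p'\) this gives \(g(p)-g(p')\le(p-p')g(p')/p'\). It does not give the bound you use, \(g(p')-g(p)\le(p-p')g(p')/p'\): at \(p'=0.9\), \(p=1\) your bound would assert \(0.175\le0.020\). The reduction to \(y\psi(p')\le1\) is therefore unfounded. That condition is also not enforced by the cap or the threshold: for \(y\ge0\) it only says the uncapped update more than doubles wealth.

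The two chords have to be matched to the two signs of the move:
\begin{itemize}
\item For a favourable move, use the chord to \((1,g(1))=(1,0)\), namely \(g(p')-g(p)\le(p-p')g(p')/(1-p')\). The only failing case is then one where the uncapped update already overshoots the target, and truncation returns one.
\item For an adverse move, use the \(g(0)=0\) chord. The failing case would make the uncapped update negative, which the cap rules out.
\end{itemize}
Because the cap depends on \(m\), do this at fixed \(m\) (monotonicity in \(p\)) and separately at fixed \(p\) (monotonicity in \(m\)), rather than comparing \((p,m)\) with \((p',m')\) directly. This is the paper's proof of Proposition~\ref{prop:shared-scale-betting-interval}, via the tangent bounds \(-\sigma_f(p)/(1-p)\le\sigma_f'(p)\le\sigma_f(p)/p\).

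The efficiency sketch is missing three ingredients the argument cannot do without.

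First, the deterministic interval rejects \(m\) only if \(K_n^\pm(m)\) equals \(2/\delta\) exactly. The Gaussian idealization \(\Phi\{(Y_t-z_{1-\delta/2})/\sqrt{1-t}\}\) reaches one only at \(t=1\), so showing \(p_n\to\mathbf 1\{W_1+a\ge z_{1-\delta/2}\}\) does not by itself produce any rejection. You need a terminal-block argument that forces an exact hit. In the paper, once \(q\ge Q/2\) and \(r\le c_0Q^2\), a single favourable draw gives an increment \(\phi(q)Y/\sqrt{r\widehat v}\ge\phi(q)/q\ge1-\Phi(q)\). You also need an argument that a negative preterminal separation is not lost. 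The paper works on the log-tail scales \(\log p\) and \(\log(1-p)\) over dyadic bands, because when \(r_i=n-i+1\) is small neither the \(q\)-scale Taylor expansion nor your claim that the caps never bind is available.

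Second, the tracking is not a partial-sum comparison. Expanding \(\Phi^{-1}\) produces a curvature term \(\tfrac12qa^2\), which must cancel the decrement \(\sqrt r-\sqrt{r-1}\) of the remaining-variance clock. This cancellation works because \(\widehat v_{i-1}\) tracks the conditional second moment. The residual \(\sum_iq_{i-1}(Y_i^2/\widehat v_{i-1}-1)/(2\sqrt{r_i})\) has path-dependent weights and must be controlled by stopped Bernstein bounds. A KMT bound on the partial sums does not control it.

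Third, the relevant window around \(\mu\) is not \(O(n^{-1/2})\). By the law of the iterated logarithm, \(\sqrt n(\bar X_n-\mu)\) is unbounded. For an almost-sure limit you must therefore establish the decisions uniformly over local indices \(|h|\le\log(n+1)\) on a grid, with failure probabilities summable after the union bound. Only then can the random candidates \(\bar X_n\mp(z_{1-\delta/2}\pm\varepsilon)\sigma/\sqrt n\) be bracketed by monotonicity.
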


Since \(\mathcal I_n(U_+,U_-)\subseteq\mathcal I_n(1,1)\) pathwise, the iid
conclusion also gives
\(\limsup_n\sqrt n\,\operatorname{len}\{\mathcal I_n(U_+,U_-)\}
\leq2\sigma z_{1-\delta/2}\) almost surely. 

\textit{Proof sketch.}
The validity and interval claims require no asymptotics.  At \(m=\mu\), the
predictable fractions and nonnegativity caps make both one-sided wealth
processes test supermartingales, so deterministic or uniformly randomized
Markov calibration and a union bound give coverage \(1-\delta\).  Moreover,
\(p\psi(p)=\phi\{\Phi^{-1}(p)\}\) is concave, and the variance estimator is
shared across both processes and all candidate means.  Proposition~%
\ref{prop:shared-scale-betting-interval} therefore shows that the inverted
confidence set is an interval, and hence convex, for every fixed pair of
terminal thresholds.

The efficiency argument is driven by one exact cancellation.  Let
\(p_i=(\delta/2)K_i\) be upper-tail wealth divided by its target, let
\(q_i=\Phi^{-1}(p_i)\) be its normal-quantile transform (when $K_i < \delta/2$), and put
\(r_i=n-i+1\) and \(Y_i=X_i-m\).  While the cap is
inactive and wealth has not reached a boundary, the update is
\[
 p_i-p_{i-1}
 =\phi(q_{i-1})\frac{Y_i}{\sqrt{r_i\widehat v_{i-1}}}.
\]
Since \((\Phi^{-1})'(p)=1/\phi\{\Phi^{-1}(p)\}\), and calling $q = \Phi^{-1}(p)$, we have the key identity
\begin{equation}\label{eq:main-proof-key-cancellation}
 (\Phi^{-1})'(p)\,\phi(q)=1.
\end{equation}
This identity allows us to infer that the first-order change in \(q_i\) is the standardized
centered observation \(Y_i/\sqrt{r_i\widehat v_{i-1}}\).  Indeed, a
first-order Taylor expansion of \(q_i=\Phi^{-1}(p_i)\) gives
\[
 q_i-q_{i-1}
 \approx(\Phi^{-1})'(p_{i-1})(p_i-p_{i-1})
 =\frac{Y_i}{\sqrt{r_i\widehat v_{i-1}}}.
\]
The above discrete derivative expresses a change in relative wealth on the normal-quantile
scale: its first-order increment does not depend on current wealth!  For
\(m=\mu-h\sigma/\sqrt n\) and \(j_n=n-k_n\), the resulting expansion after
\(j_n\) observations is\footnote{This is the step that fails for the earlier betting strategies.  If
\(\sigma_f(p)\) denotes the amount bet as a fraction of the target, its
first-order multiplier on the normal-quantile scale is
\(\sigma_f(p)/\phi\{\Phi^{-1}(p)\}\).  For STaR this multiplier has numerator
\(p\sqrt{2\log(1/p)}\), and for product betting it has a numerator
proportional to \(p\); neither ratio is constant.  Their recursions for
\(q_i=\Phi^{-1}(p_i)\) therefore retain a wealth-dependent weight instead of
reducing to the ordinary standardized sample mean.} 
\[
 \sqrt{\frac{k_n}{n}}q_{j_n}
 =\frac{1}{\sigma\sqrt n}\sum_{i=1}^{j_n}(X_i-m)
   -z_{1-\delta/2}+o(1)
 =\underbrace{\frac{\sqrt n(\bar X_n-\mu)}{\sigma}}_{Z_n}
   +h-z_{1-\delta/2}+o(1)
   \qquad\text{almost surely}.
\]

The quantity \(Z_n+h=\sqrt n(\bar X_n-m)/\sigma\) is precisely the usual
  standardized statistic for the candidate mean \(m\). The displayed expansion
  identifies \(Z_n+h=z_{1-\delta/2}\) as the boundary separating positive and
  negative preterminal normal-quantile states, with \(q_{j_n}\) driven toward
  opposite extremes on the two sides of this boundary. The remaining
  \(k_n\)-round argument then shows that this preterminal separation polarizes
  into rejection or acceptance, so that the resulting acceptance--rejection
  boundary agrees with that of the Gaussian \(z\)-test.


The final \(k_n\) bounded updates
are handled directly: they preserve this direction, make wealth near zero
remain accepted, and make wealth near the target reach the target exactly.
For the almost-sure statement, the proof makes these bounds summable on a
slowly expanding grid of local candidates; the law of the iterated logarithm
ensures that the two random crossings eventually lie within that grid.
Because the one-sided decisions are monotone in \(m\), their crossings are
\(\bar X_n-\sigma z_{1-\delta/2}/\sqrt n+o(n^{-1/2})\) and
\(\bar X_n+\sigma z_{1-\delta/2}/\sqrt n+o(n^{-1/2})\) almost surely;
subtracting them gives the asserted width.

The proof is given in Appendix~\ref{proof:efficient-betting-efficiency}.

\section{From betting strategies to confidence intervals}
\label{sec:betting-to-confidence-intervals}

Two issues remain after constructing the e-value \(K_n(m)\): whether its
inversion is connected, and how it is calibrated into a test.  We address
them in turn.

\subsection{When does inversion give an interval?}
\label{sec:confidence-set-geometry}

Pointwise validity does not imply connected inversion.  Candidate-dependent
variance estimates can make the statistic at time \(n\) oscillate with \(m\).
A predictable scale shared across
candidate means restores an order property for the betting strategies
considered here.
Appendix~\ref{sec:empirical-common-estimator-cost} shows that sharing the
variance estimator need not increase width: in the paired experiments, the
shared-estimator intervals are comparable to, and sometimes narrower than,
their candidate-dependent counterparts.

This distinction also appears in earlier betting intervals.  Hedged-CI uses
fractions independent of \(m\), apart from clipping that keeps wealth
nonnegative, and has
interval-valued one-sided inversion; candidate-dependent aGRAPA need not
\citep{waudby2024estimating}.  STaR-Bets estimates \(\E[(X-m)^2]\) separately
for each \(m\) and returns the interval enclosing nonrejected grid cells
\citep{voracek2025star}.

Write \(p\in[0,1]\) for wealth relative to a one-sided target,
\(f(p)\) for the betting fraction before division by the predictable scale,
and \(\sigma_f(p)=pf(p)\) for the amount bet relative to the target.  With
a common predictable scale \(s>0\), observation \(x\), candidate \(m\), and
clipping constant \(c\in(0,1]\), the upper-tail update, with wealth held at
one after reaching the normalized rejection target, is
\begin{equation}\label{eq:shared-scale-betting-update}
 T_{x,m,s}(p)
 =\min\!\left[1,
 p\left\{1+\min\!\left(\frac{f(p)}s,\frac c m\right)(x-m)\right\}
 \right].
\end{equation}
At \(m=0\), define \(c/m=+\infty\).  For the reflected lower-tail update at
\(m=1\), define \(c/(1-m)=+\infty\).  These definitions remove the
nonnegativity cap on the side where the centered observation cannot make
wealth negative.

\begin{proposition}[Pathwise interval inversion under a shared scale]
\label{prop:shared-scale-betting-interval}
Suppose \(s_i>0\) is predictable and common to both one-sided processes and all \(m\).
Assume \(\sigma_f\) is nonnegative and concave on \([0,1]\), continuously
differentiable on \((0,1)\), and \(\sigma_f(0)=0\).  For any fixed one-sided
thresholds \(u_+,u_-\in(0,1]\), the set accepted by the two reflected updates
\eqref{eq:shared-scale-betting-update} is an interval, possibly empty.
\end{proposition}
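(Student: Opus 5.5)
The plan is to show that terminal upper-tail relative wealth \(p_n^+(m)\) is nonincreasing in \(m\), and (by reflection) that terminal lower-tail relative wealth \(p_n^-(m)\) is nondecreasing in \(m\). Then \(\{m:p_n^+(m)<u_+\}\) is an up-set of \([0,1]\), hence an interval of the form \((a,1]\) or \([a,1]\). Likewise \(\{m:p_n^-(m)<u_-\}\) is a down-set, hence an interval containing the left endpoint \(0\) if nonempty. Their intersection is an interval, possibly empty.

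Monotonicity of the terminal wealth will follow by induction from two one-step properties of the map \eqref{eq:shared-scale-betting-update}, for fixed \(x\) and \(s\):
\begin{enumerate}
\item[(a)] \(T_{x,m,s}(p)\) is nondecreasing in \(p\in[0,1]\) for fixed \(m\);
\item[(b)] \(T_{x,m,s}(p)\) is nonincreasing in \(m\) for fixed \(p\).
\end{enumerate}
Given these, take \(m<m'\) with \(p_{i-1}(m)\ge p_{i-1}(m')\). Then
\[
T_{x_i,m,s_i}(p_{i-1}(m))\ge T_{x_i,m,s_i}(p_{i-1}(m'))\ge T_{x_i,m',s_i}(p_{i-1}(m')).
\]
The first inequality uses (a) and the second uses (b). The absorbing states are compatible with this argument: \(T(0)=0\) because \(\sigma_f(0)=0\), and wealth is held at \(1\) once it is reached, so freezing preserves the order. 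The scale \(s_i\) is the same for every \(m\), which is exactly where sharing the estimator is needed. The lower-tail statement follows from the substitution \(x\mapsto 1-x\), \(m\mapsto 1-m\), which maps the reflected update to the upper one.

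For (b), write the increment term as \(p\min\{f(p)/s,c/m\}(x-m)\). If \(x\ge m\), both \(\min\{f/s,c/m\}\) and \(x-m\ge0\) are nonincreasing in \(m\), so their product is nonincreasing. If \(x<m\), rewrite the term as
\[
p\max\{f(p)(x-m)/s,\ c(x/m-1)\}.
\]
Each branch is decreasing in \(m\) because \(x\ge0\), so the maximum is too. The convention \(c/m=\infty\) at \(m=0\) only arises with \(x\ge m\), and monotonicity is preserved there in the limit sense. The outer \(\min\{1,\cdot\}\) preserves monotonicity.

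For (a), I expect the main obstacle to be that the amount bet \(\sigma_f(p)\) can decrease in \(p\), which is where concavity enters. Put \(y=x-m\) and \(g(p)=\min\{p+\sigma_f(p)y/s,\ p(1+cy/m)\}\).
\begin{itemize}
\item \emph{Case \(y\ge0\).} Here \(g\) is a minimum of two concave functions, so it is concave, with \(g(0)=0\). Also \(g(1)\ge1\), since \(\sigma_f\ge0\) and \(y\ge0\). Suppose \(p<q\) with \(g(p)>g(q)\) and \(g(q)<1\). Concavity on \([p,1]\) then forces \(g(1)\le g(q)+(1-q)\{g(q)-g(p)\}/(q-p)<1\), a contradiction. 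Hence \(\min\{1,g\}\) is nondecreasing.
\item \emph{Case \(y<0\).} Here \(g=\max\{h,\ p(1+cy/m)\}\), where \(h(p)=p+\sigma_f(p)y/s\) is convex with \(h(0)=0\). Since \(c\le1\) and \(x\ge0\), the linear branch has nonnegative slope. Convexity with \(h(0)=0\) makes \(h(p)/p\) nondecreasing, so \(g(p)/p\) is nondecreasing and \(g\ge0\). For \(p<q\) this gives \(g(q)=q\,\{g(q)/q\}\ge q\,\{g(p)/p\}\ge g(p)\).
\end{itemize}
Combining both cases with the clip at \(1\) yields (a), completing the induction and the proof.
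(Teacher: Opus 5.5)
Your proof is correct, and its architecture matches the paper's: one-step monotonicity of \(T_{x,m,s}\) in \(p\) and in \(m\), then induction over rounds using that \(s_i\) does not depend on \(m\), then intersecting an up-set with a down-set.

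The genuine difference is in the key step (a), monotonicity in \(p\).
\begin{itemize}
\item The paper differentiates the uncapped branch \(R(p)=p+\sigma_f(p)d\). It uses the tangent bounds \(-\sigma_f(p)/(1-p)\le\sigma_f'(p)\le\sigma_f(p)/p\) to show that wherever \(R'(p)<0\), the update is either already clipped at one or would be negative. It then uses that \(f=\sigma_f/p\) is nonincreasing to glue the capped and uncapped branches at their switch, and concludes by a continuity and local-to-global argument.
\item You instead write the capped update as a \(\min\) of the two branches when \(x\ge m\), and as a \(\max\) when \(x<m\). In the first case, the concavity chord inequality with \(g(0)=0\) and \(g(1)\ge1\) rules out a decrease below one. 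In the second case, a convex function vanishing at zero has nondecreasing chord slopes from the origin, so \(g(p)/p\) is nondecreasing.
\end{itemize}
Your chord inequalities are the global counterparts of the paper's two tangent bounds, so both proofs rest on the same two facts. Yours is derivative-free: it does not use the \(C^1\) hypothesis, needs no continuity or local-to-global step, and absorbs the cap switch into the \(\min\)/\(\max\) representation. The paper's pointwise sign check on \(R'\) makes explicit which inequality on \(\sigma_f'\) is needed in each sign regime of \(x-m\).

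In step (b), your split on the sign of \(x-m\) replaces the paper's split at the switch point \(m=c/a\). The only thing left implicit is the comparison across \(m=x\). It is immediate, because the increment is nonnegative for \(m\le x\) and nonpositive for \(m>x\).
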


The proof is given in Appendix~\ref{proof:shared-scale-betting-interval}. For STaR betting,
\[
 \sigma_{\rm sqrt}(p)=p\sqrt{2\log(1/p)},\qquad
 \sigma_{\rm sqrt}''(p)
 =-\frac{1+\{2\log(1/p)\}^{-1}}
         {p\sqrt{2\log(1/p)}}<0,
\]
whereas for GE-betting, with \(q=\Phi^{-1}(p)\),
\[
 \sigma_{\rm eff}(p)=\phi(q),\qquad
 \sigma_{\rm eff}'(p)=-q,\qquad
 \sigma_{\rm eff}''(p)=-\frac1{\phi(q)}<0.
\]
Both second derivatives are negative.  The fixed-horizon product rule is
covered by the analogous Hedged-CI argument of
\citet{waudby2024estimating}.  Hence the inversions of both shared-scale
betting strategies in the main text are intervals pathwise.

\subsection{Fixed-horizon calibration by uniformly randomized Markov}
\label{sec:uniformly-randomized-markov}

E-value construction and fixed-horizon calibration are separate.  For method
\(A\) and candidate \(m\), let
\(K_{n,A}^{+}(m)\) and \(K_{n,A}^{-}(m)\) denote its terminal
one-sided e-values.  Thus \(\E K_{n,A}^{\pm}(m)\leq1\) under the corresponding
one-sided null.
Deterministic Markov calibration rejects when
\(K_{n,A}^{\pm}(m)\geq2/\delta\).  The uniformly randomized Markov
inequality \citep{ramdas2026randomized} draws
\(U_+,U_-\stackrel{\rm iid}{\sim}\operatorname{Unif}(0,1)\), independently of
the data, and rejects when
\begin{equation}\label{eq:randomized-markov-rejection}
 K_{n,A}^{\pm}(m)\geq \frac{2U_\pm}{\delta}.
\end{equation}
Equivalently, the randomized e-value corresponding to this decision is
\begin{equation}\label{eq:randomized-markov-terminal}
 E_{n,A}^{\pm}(m)
 =\frac{2}{\delta}
  \mathbf1\!\left\{U_\pm\leq
  \min\!\left((\delta/2)K_{n,A}^{\pm}(m),1\right)\right\}.
\end{equation}
The same pair \((U_+,U_-)\) is held fixed over all \(m\) during inversion.
The randomizers do not alter the betting fractions, variance estimator,
clipping, or rule that freezes wealth at a boundary.  They change only the
terminal threshold, allowing rejection below \(2/\delta\) when the independent
uniform randomizer is sufficiently small.  Randomized Markov therefore changes
only the conversion of the terminal e-value into a binary test decision; it is
not a new betting strategy.

\begin{proposition}[Validity of fixed-horizon calibration]
\label{prop:terminal-calibration-validity}
Deterministic and uniformly randomized Markov calibration each give a
level-\(\delta/2\) test from any one-sided e-value \(K_{n,A}^{\pm}(m)\).
Inverting the two reflected tests gives coverage at least \(1-\delta\).
\end{proposition}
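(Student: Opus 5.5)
The plan is to treat the two calibrations separately at the level of a single one-sided test, and then combine them with a union bound. Fix a candidate $m$ and one side, say the upper tail. Let $K:=K_{n,A}^{+}(m)$ be nonnegative with $\E K\le 1$ under the one-sided null. (For the betting processes considered here, this holds because the wealth is a test supermartingale, as shown in Subsection~\ref{section:testing_by_betting}.)

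\emph{Deterministic calibration.} Markov's inequality gives
\[
\Pb\{K\ge 2/\delta\}\le (\delta/2)\,\E K\le \delta/2 .
\]

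\emph{Randomized calibration.} Let $U_+\sim\operatorname{Unif}(0,1)$ be independent of the data. Condition on the data, i.e.\ on $K$. Rejection occurs when $U_+\le (\delta/2)K$, which has conditional probability $\min\{(\delta/2)K,1\}\le(\delta/2)K$. Taking expectations over the data gives a rejection probability at most $(\delta/2)\E K\le\delta/2$.

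The same conclusion can be phrased through the randomized e-value \eqref{eq:randomized-markov-terminal}. Its conditional expectation given the data is $(2/\delta)\min\{(\delta/2)K,1\}\le K$, so it is itself an e-value. It takes only the values $0$ and $2/\delta$, and it rejects exactly when it equals $2/\delta$. Markov's inequality then again bounds the rejection probability by $\delta/2$. The lower-tail argument is identical, with $U_-$ and $K_{n,A}^{-}(m)$.

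\emph{Coverage.} At the true mean $m=\mu$, both one-sided nulls hold, since the point null $H_\mu$ implies each of them. The true mean fails to lie in the inverted set only if at least one of the two tests rejects at $m=\mu$. A union bound therefore gives miscoverage probability at most $\delta/2+\delta/2=\delta$. This does not require any joint law of $(U_+,U_-)$: they may be independent, equal, or coupled, because the union bound uses only the marginal statements. Holding the same pair fixed across all $m$ does not affect this, because coverage is a statement about the single point $m=\mu$.

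The argument has no real obstacle. The only point needing care is that the randomizer must be independent of the data, so that conditioning on $K$ is legitimate. A second small point is that the truncation of the wealth at $2/\delta$ in the betting updates does not break the e-value property. Capping a supermartingale from above keeps it a nonnegative supermartingale once the process is frozen at the cap, and in any case the truncation only lowers $\E K$.
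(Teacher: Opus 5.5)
Your proof is correct and follows the paper's argument essentially step for step. Both use Markov's inequality for the deterministic threshold. For the randomized threshold, both condition on the terminal e-value and use independence and uniformity of $U_\pm$ to get $\min\{(\delta/2)K,1\}\le(\delta/2)K$. Coverage then follows from a union bound at $m=\mu$. Your extra observations are also correct: the randomized indicator is itself an e-value, and the union bound needs no assumption on the joint law of $(U_+,U_-)$.
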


The proof is given in Appendix~\ref{proof:terminal-calibration-validity}.
Setting \(U_+=U_-=1\) recovers deterministic Markov.  Since \(U_\pm\leq1\),
the randomized confidence set is pathwise contained in its deterministic
counterpart.  For the shared-scale strategies of
Proposition~\ref{prop:shared-scale-betting-interval}, fixed random thresholds
also preserve interval geometry.
Uniformly randomized Markov is generic.  The contribution of GE-betting is
its fraction in
\eqref{eq:efficient-betting-fraction-discrete};
Section~\ref{sec:experiments}
compares both calibrations for the main-text procedures.

\subsection{Experiments}
\label{sec:experiments}

We set \(\delta=.01\) and \(c=1\) for all betting methods, and study nine
bounded distributions, including three low-variance examples.
Appendix~\ref{app:solvency-sensitivity} examines the sensitivity of the
results to the choice of \(c\).
Figure~\ref{fig:original-versus-star} extends the
introductory comparison to all methods, distributions, and calibrations in one
display.  Its first row uses
\(10\leq n\leq10^6\), its second row uses
\(10^2\leq n\leq10^6\), and its third row uses
\(10^3\leq n\leq10^6\).  The figure shows deterministic and uniformly
randomized versions of product betting, STaR betting, GE-betting, and
Gaffke.  For Gaffke, the randomized version is its product-orthant refinement in~\cite{ming2026gaffke}.
Gaffke's interval was the previous empirical
state of the art for finite-sample bounded-data confidence intervals and is
therefore our main independence-based benchmark. The code can be found at \href{https://github.com/DMartinezT/betting}{https://github.com/DMartinezT/betting}. 

Each betting method is shown under deterministic and uniformly randomized
Markov calibration.  Within a betting experiment, one uniform pair is shared
across methods and held fixed over \(m\); the randomized Gaffke refinement
likewise holds its one-sided randomizers fixed throughout inversion.
The STaR curve uses the shared predictable variance estimator in
\eqref{eq:regularized-variance-estimate} and
wealth absorbed at the rejection target, so its inversion is an interval on
every sample path;
the product and GE-betting curves likewise use shared estimators.
Proposition~\ref{prop:shared-scale-betting-interval} gives the common
pathwise argument.  Appendix~\ref{sec:experimental-design} gives the
distributions, replication counts, Gaffke implementation, and numerical
details.

\begin{figure}[tp]
    \centering
    \includegraphics[width=0.96\textwidth]
      {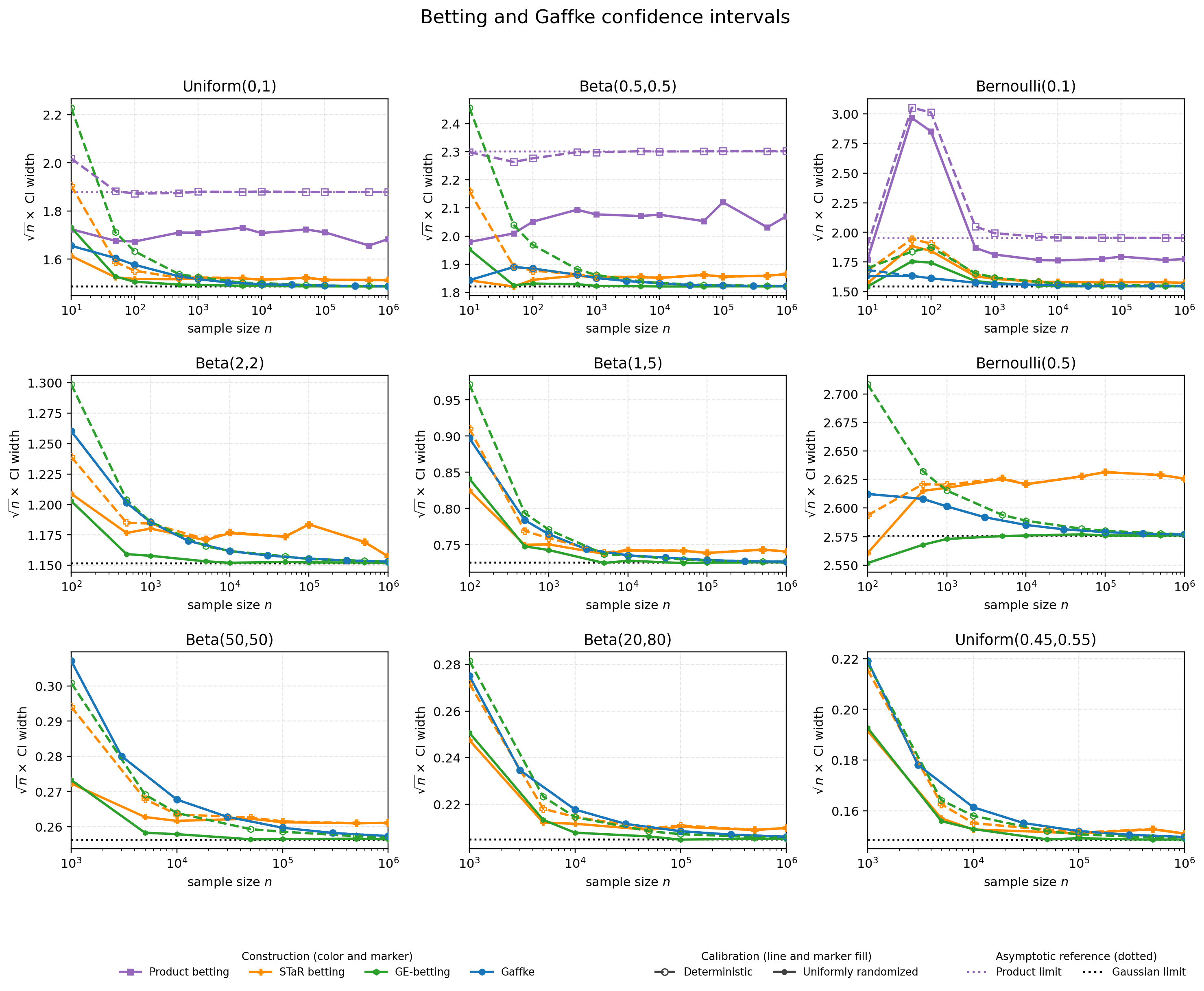}
    \caption{Mean \(\sqrt n\)-scaled confidence-interval widths.  The first
    row covers \(10\leq n\leq10^6\), the second
    \(10^2\leq n\leq10^6\), and the low-variance third row
    \(10^3\leq n\leq10^6\).  Color and marker identify the construction:
    product, STaR, GE-betting, or Gaffke.  Within each construction,
    dashed curves with hollow markers are deterministic and solid curves with
    filled markers are uniformly randomized.  For betting methods,
    randomization means uniformly randomized Markov; for Gaffke, it means the
    product-orthant refinement.  Product betting is shown only in the first
    row, and dotted lines show the Gaussian and product limits.}
    \label{fig:original-versus-star}
\end{figure}

Figure~\ref{fig:original-versus-star} makes the theoretical separation
visible across the full range of distributions and horizons.  Product betting
approaches its wider exponential-test-function limit, STaR betting retains its
efficiency gap, and GE-betting rapidly tracks the Gaussian benchmark.
At \(n=10^6\), uniformly randomized GE-betting is within \(0.11\%\) of
the benchmark in every panel, and deterministic GE-betting is within
\(0.81\%\).

The comparison with Gaffke depends on the GE-betting calibration.  Uniformly
randomized GE-betting has smaller mean width than both Gaffke versions at most
displayed horizons, with exceptions concentrated at shorter horizons, most
notably for Bernoulli\((.1)\); its advantage is largest in the low-variance
panels.  Deterministic GE-betting gives a more mixed comparison: for the first
six distributions it is often wider than Gaffke at shorter horizons and
becomes nearly indistinguishable as \(n\) grows, whereas in the low-variance
panels it is generally narrower.  Thus these experiments suggest that the
martingale-valid randomized GE-betting interval can improve on the independence-based
benchmark, while deterministic GE-betting is better described as competitive with
it, particularly in low-variance settings.

Uniformly randomized Markov calibration provides its largest improvement at
short horizons and becomes nearly indistinguishable from deterministic
calibration as \(n\) grows.  The two Gaffke curves likewise coincide whenever
the ordinary endpoints lie within the sample range; the randomized
product-orthant refinement can shorten them at the smallest horizons.  The
comparisons are descriptive rather than a finite-sample dominance theorem,
but their consistency across shapes, variances, and sample sizes reinforces
the main conclusion: martingale-valid inference need not pay an empirical
width penalty.

\section{GE-betting under sampling without replacement}
\label{sec:wor-bridge-betting}

We now apply the Gaussian design principle to the finite-population model in
Subsection~\ref{sec:wor-background}.  Under iid sampling, the centered partial
sum has a Brownian-motion limit.  Without replacement, however, revealing the
whole population forces \(S_N-N\mu_N=0\), so the centered process is pinned at
zero at time one and has a Brownian-bridge limit.  Designing the bet in this bridge experiment
therefore accounts for the uncertainty removed as the population is revealed
and produces the square-root correction in
\eqref{eq:intro-wor-fractions}; a Brownian-motion calculation misses this
first-order effect when \(n/N\) does not vanish.

We first derive this update in the Gaussian-bridge experiment, then transfer
it to bounded observations using the exact Doob martingale of the terminal
sample mean.  We next state the finite-sample and asymptotic guarantees and
compare the resulting interval with existing without-replacement methods.
As in Section~\ref{sec:optimal-star-testing}, the Gaussian experiment is only
a device for choosing the betting fractions; the observed population need
not be Gaussian, and finite-sample validity follows from the
remaining-population conditional mean and nonnegative predictable betting.
Appendix~\ref{app:horizon-free-cs} extends the GE-betting construction
to a horizon-free confidence sequence under sampling without replacement.

\subsection{Gaussian-bridge design}
\label{sec:wor-gaussian-design}

Let \((B_t)_{0\leq t\leq1}\) be a standard Brownian bridge and fix a terminal
sampling fraction \(\rho\in(0,1)\).  The standardized terminal statistic is
\begin{equation}
 Z_\rho=\frac{B_\rho}{\sqrt{\rho(1-\rho)}}.
 \label{eq:wor-terminal-statistic}
\end{equation}
If \(\mathcal G_t=\sigma(B_s:s\leq t)\), then
\begin{equation}
 M_t=\E(Z_\rho\mid\mathcal G_t)
 =\frac{1-\rho}{1-t}
  \frac{B_t}{\sqrt{\rho(1-\rho)}},
 \qquad 0\leq t\leq\rho.
 \label{eq:wor-bridge-doob}
\end{equation}
The variance clock of this martingale is
\begin{equation}
 u(t)=\operatorname{Var}(M_t)
 =\frac{(1-\rho)t}{\rho(1-t)},
 \qquad u(\rho)=1.
 \label{eq:wor-bridge-clock}
\end{equation}
Indeed, \(\operatorname{Cov}(M_s,M_t)=u(s)\) for \(s\leq t\), so \(M_t\)
has the same law as Brownian motion evaluated at time \(u(t)\).

Consider the level-\(\delta/2\) upper-tail event
\[
 A^\star=\{Z_\rho\geq z_{1-\delta/2}\}
\]
and its indicator e-value \(2\mathbf 1\{A^\star\}/\delta\).  Conditional on
\(\mathcal G_t\),
\[
 Z_\rho=M_t+\sqrt{1-u(t)}\,\xi_t,
 \qquad
 \xi_t\sim\mathcal N(0,1),\quad \xi_t\perp\mathcal G_t.
\]
Hence its continuation probability is
\begin{equation}
 p_t=\Pb(A^\star\mid\mathcal G_t)
 =\Phi\left\{
   \frac{M_t-z_{1-\delta/2}}{\sqrt{1-u(t)}}\right\}.
 \label{eq:wor-bridge-continuation}
\end{equation}
If \(K_t=2p_t/\delta\), then \(p_t=(\delta/2)K_t\) is current wealth
relative to the rejection threshold.  Differentiating the continuation
probability with respect to the revealed martingale gives
\begin{equation}
 \mathrm d p_t
 =\frac{\phi\{\Phi^{-1}(p_t)\}}{\sqrt{1-u(t)}}\,
   \mathrm dM_t.
 \label{eq:wor-continuation-differential}
\end{equation}
Thus the betting fraction per unit increment in \(M_t\) is
\begin{equation}
 \frac{\psi(p_t)}{\sqrt{1-u(t)}}.
 \label{eq:wor-optimal-fraction}
\end{equation}
The denominator is the standard deviation of the part of the terminal
statistic that has not yet been observed.  Equivalently, if \(H_t\) is an
unstandardized terminal-statistic martingale and
\(V_t=\operatorname{Var}(H_\rho\mid\mathcal G_t)\), then
\begin{equation}
 \mathrm d p_t
 =\frac{\phi\{\Phi^{-1}(p_t)\}}{\sqrt{V_t}}\,
   \mathrm dH_t.
 \label{eq:wor-unstandardized-continuation}
\end{equation}

\subsection{The bounded-data update}
\label{sec:wor-bounded-update}

Fix a horizon \(1\leq n<N\), an error level \(\delta\in(0,1)\), and a
clipping constant \(c\in(0,1]\).  For a candidate population mean \(m\), the
terminal statistic before standardization is
\begin{equation}
 A_n(m)=S_n-nm.
 \label{eq:wor-terminal-contrast}
\end{equation}
The process \(S_t-tm\) is not a martingale under sampling without
replacement.  The appropriate process is instead the conditional
expectation of \(A_n(m)\).  Under the point null \(\mu_N=m\), this Doob
martingale is
\begin{equation}
 H_t^{(n)}(m)
 =\E_m\{A_n(m)\mid\mathcal F_t\}
 =\frac{N-n}{N-t}(S_t-tm),
 \qquad 0\leq t\leq n.
 \label{eq:wor-doob-martingale}
\end{equation}
It terminates at \(H_n^{(n)}(m)=A_n(m)\), and its increments satisfy
\begin{equation}
 H_i^{(n)}(m)-H_{i-1}^{(n)}(m)
 =\gamma_iY_i(m),
 \qquad
 \gamma_i=\frac{N-n}{N-i}.
 \label{eq:wor-doob-increment}
\end{equation}
If the \(N-i+1\) values remaining before draw \(i\) have variance
\[
 v_{i-1}^{\mathrm{rem}}
 =\frac1{N-i+1}\sum_{x\ \mathrm{remaining}}
   \{x-m_i(m)\}^2,
\]
then the conditional variance remaining at the horizon and the resulting
standardization are
\begin{align}
 V_{i-1}
 :=\operatorname{Var}_m\{A_n(m)\mid\mathcal F_{i-1}\}
 &=(n-i+1)\frac{N-n}{N-i}v_{i-1}^{\mathrm{rem}},\nonumber\\
 \frac{\gamma_i}{\sqrt{V_{i-1}}}
 &=\sqrt{\frac{N-n}
 {(N-i)(n-i+1)v_{i-1}^{\mathrm{rem}}}}.
 \label{eq:wor-exact-bridge-scale}
\end{align}

Thus \(\gamma_i\) measures the contribution of the next centered draw to the
terminal statistic, while \(V_{i-1}\) measures the uncertainty still
remaining at the horizon.  Substituting the shared predictable estimator
\(\widehat v_{i-1}\) from
\eqref{eq:regularized-variance-estimate} for
\(v_{i-1}^{\mathrm{rem}}\) gives
\begin{equation}
 b_{i,n}
 =\sqrt{\frac{N-n}
 {(N-i)(n-i+1)\widehat v_{i-1}}}.
 \label{eq:wor-predictable-scale}
\end{equation}

The GE-betting procedure now follows directly from the betting rule in
Section~\ref{sec:introduction}.  On the feasible range \(\mathcal M_n\), use
the raw fractions in \eqref{eq:intro-wor-fractions}, with scale \(b_{i,n}\),
and apply the clipping, absorbing, and wealth-update rules in
\eqref{eq:efficient-one-sided-fractions}--%
\eqref{eq:efficient-one-sided-updates}, replacing \(X_i-m\) and \(m-X_i\)
by \(Y_i(m)\) and \(-Y_i(m)\).  Inverting the terminal wealths gives the
deterministic interval \(\mathcal I_{N,n}^{\mathrm{br}}\) in
\eqref{eq:intro-wor-interval}.  Replacing its two thresholds by
\(2U_+/\delta\) and \(2U_-/\delta\), as in Section~\ref{sec:introduction},
defines the randomized interval
\(\mathcal I_{N,n}^{\mathrm{br}}(U_+,U_-)\); setting \(U_+=U_-=1\) gives the
deterministic version.  Every quantity used in an update is predictable, and
the variance estimator is shared across candidate means.  Hence validity
comes from nonnegative betting rather than the Gaussian approximation, while
the endpoints can be found by bisection over \(\mathcal M_n\).

\subsection{Validity and Gaussian efficiency}
\label{sec:wor-guarantees}

The next result collects the same three guarantees as
Theorem~\ref{thm:efficient-betting-efficiency}: finite-sample validity,
interval-valued inversion, and optimal first-order Gaussian width.  We use
the regularized estimator \eqref{eq:regularized-variance-estimate}, shared by
both wealth processes and all candidate means.

\begin{theorem}[Validity and Gaussian efficiency under sampling without replacement]
\label{thm:wor-main}
Fix \(\delta\in(0,1)\) and \(c\in(0,1]\), and construct the intervals in
Subsection~\ref{sec:wor-bounded-update}.  For every fixed population
\(x_{1:N}\in[0,1]^N\) and every \(n<N\),
\(\mathcal I_{N,n}^{\mathrm{br}}\) is an interval on every sample path, and
\begin{equation}
 \Pb_{x_{1:N}}\{\mu_N\in\mathcal I_{N,n}^{\mathrm{br}}\}
 \geq1-\delta.
 \label{eq:wor-validity}
\end{equation}
The same validity and interval conclusions hold for
\(\mathcal I_{N,n}^{\mathrm{br}}(U_+,U_-)\).  

Now let \(x_{N,1:N}\) be a deterministic triangular array with population
means \(\mu_N\) and variances \(\sigma_N^2\to\sigma^2>0\), and let
\(n=n_N\) satisfy \(n_N/N\to\rho\in(0,1)\).  Couple the uniformly random
permutations of the rows on any common probability space; independence
across rows is not required.  Then
\begin{equation}
 \frac{\operatorname{len}(\mathcal I_{N,n_N}^{\mathrm{br}})}
 {2z_{1-\delta/2}\tau_{N,n_N}}
 \xrightarrow{\mathrm{a.s.}}1.
 \label{eq:wor-efficiency}
\end{equation}
\end{theorem}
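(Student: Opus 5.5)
The argument has three parts, handled in the same order as for Theorem~\ref{thm:efficient-betting-efficiency}: validity, interval geometry, and then the efficiency limit, which is the hard part.

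\emph{Validity.} Fix the population and the candidate \(m=\mu_N\). Then \(m_i(\mu_N)=\E(X_i\mid\mathcal F_{i-1})\), so \(Y_i(\mu_N)\) is a martingale difference. Both fractions are predictable. Clipping at \(c/m_i\) and \(c/(1-m_i)\) keeps every factor nonnegative. Absorption at \(0\) and \(2/\delta\) only stops a nonnegative martingale. Hence \(K^\pm_t(\mu_N)\) are test supermartingales up to time \(n\). One feasibility point needs checking: \(\mu_N\in\mathcal M_n\) always, so the true mean is never excluded by the feasible range. Proposition~\ref{prop:terminal-calibration-validity}, applied with deterministic or uniformly randomized thresholds, together with a union bound over the two tails, then gives \eqref{eq:wor-validity}.

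\emph{Interval geometry.} I would reduce to Proposition~\ref{prop:shared-scale-betting-interval}. The scale \(b_{i,n}\) is shared across \(m\) and across both tails. The only new feature is that the centering is \(m_i(m)\) rather than \(m\). However, \(m_i(m)\) is increasing and affine in \(m\), with a predictable slope \(N/(N-i+1)\). So on \(\mathcal M_n\) the one-step update is the map \eqref{eq:shared-scale-betting-update} evaluated at the reparametrized candidate \(m_i(m)\). The monotonicity-in-\(m\) argument of that proposition then carries over step by step, using concavity of \(\sigma_{\rm eff}(p)=\phi\{\Phi^{-1}(p)\}\) as before. The boundary conventions apply when \(m_i\in\{0,1\}\). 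Finally, intersecting with the interval \(\mathcal M_n\) preserves convexity.

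\emph{Efficiency.} I would repeat the normal-quantile argument sketched after Theorem~\ref{thm:efficient-betting-efficiency}, with \(X_i-m\) replaced by the Doob increment.

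First, the exact cancellation \((\Phi^{-1})'(p)\phi(q)=1\) gives, while the cap is inactive and wealth is interior,
\[
q_i-q_{i-1}\approx b_{i,n}Y_i(m)=\frac{\gamma_i Y_i(m)}{\sqrt{V_{i-1}}}\cdot\sqrt{v^{\rm rem}_{i-1}/\widehat v_{i-1}}.
\]

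Second, the variance ratio must tend to one uniformly over \(i\le n\). Under random permutation, \(\widehat v_{i-1}\to\sigma^2\) and \(v^{\rm rem}_{i-1}\to\sigma^2\) almost surely, uniformly for \(i\ge \epsilon n\). For this I would use a strong law for sampling without replacement: Hoeffding/Serfling tail bounds plus Borel--Cantelli over the triangular array. These bounds hold under any coupling, because the tail bounds are row-wise and summable.

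Third, summing up to \(j_n=n-k_n\) with \(k_n\to\infty\) and \(k_n/n\to0\) gives a quantile expansion. Since \(H^{(n)}_{j_n}=\frac{N-n}{N-j_n}(S_{j_n}-j_nm)\) and the clock satisfies \(V_{j_n}\approx k_n\frac{N-n}{N}\sigma^2\),
\[
\sqrt{V_{j_n}/V_0}\,q_{j_n}=\frac{S_n-nm}{\sqrt{V_0}}-z_{1-\delta/2}+o(1),\qquad V_0=n\frac{N-n}{N}\sigma^2\sim n^2\tau_{N,n}^2.
\]
Taking \(m=\mu_N-h\tau_{N,n}\), the boundary is therefore \((\bar X_n-m)/\tau_{N,n}=z_{1-\delta/2}\).

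Fourth, the last \(k_n\) bounded updates are handled as in the iid proof. They polarize \(q\) toward rejection or acceptance, because the remaining bridge variance is small relative to the drift in \(q\).

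Fifth, for almost-sure convergence I would use a grid of local candidates with summable error bounds and locate the random crossings inside the grid. For the iid case this came from the LIL. Here I would use exponential concentration of \(\bar X_n-\mu_N\) at scale \(\tau_{N,n}\sqrt{\log N}\), again via Serfling. Monotone one-sided decisions then place the endpoints at \(\bar X_n\pm z_{1-\delta/2}\tau_{N,n}+o(\tau_{N,n})\).

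\emph{Main obstacle.} The hardest step is controlling the Taylor remainder and the clipping uniformly over the grid and over \(i\) without independence. I expect to get around this with a Freedman-type martingale inequality for \(\sum \gamma_iY_i\) and its quadratic variation. The point is that \(\gamma_i\) and the bridge clock, unlike in the iid case, do not make the increments identically scaled. I would therefore also verify that \(\sum_{i\le j_n}\gamma_i^2 v^{\rm rem}_{i-1}\) matches \(V_0-V_{j_n}\) exactly, which follows from the Doob identity \eqref{eq:wor-exact-bridge-scale}.
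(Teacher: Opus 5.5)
Your validity and interval-geometry arguments match the paper's Step~1: the order argument of Proposition~\ref{prop:shared-scale-betting-interval} is transported through the increasing map \(m\mapsto m_i(m)\), \(\mu_N\in\mathcal M_n\) always, and the calibration proposition is combined with a union bound. Your efficiency plan also has the paper's architecture. The gap is in the cancellation that drives efficiency. You treat the first-order identity as the key cancellation and reduce the clock to a summed bookkeeping check, and that is where the plan breaks.

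Summing the first-order increments \(q_i-q_{i-1}\approx b_{i,n}Y_i(m)\) does not give your scaled expansion. The curvature of \(\Phi^{-1}\) adds the drift \(\tfrac12q_{i-1}a_i^2\), where \(a_i=\gamma_iY_i/\sqrt{r_{i,N}\widehat v_{i-1}}\) and \(r_{i,N}=(n-i+1)(N-n)/(N-i)\) is the deterministic clock the algorithm actually uses. This drift is of the same order as the signal. You must expand to second order, multiply by \(\sqrt{r_{i+1,N}}\) and telescope. The increment of \(\sqrt{r_{i+1,N}}\,q_i\) then contains
\[
 \frac{q_{i-1}}{2\sqrt{r_{i,N}}}\left\{\frac{\gamma_i^2Y_i^2}{\widehat v_{i-1}}-(r_{i,N}-r_{i+1,N})\right\}.
\]
The random weights \(q_{i-1}/\sqrt{r_{i,N}}\) vary by a factor of order \(N/k_n\) over the block, so this term must vanish round by round. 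You need \((r_{i,N}-r_{i+1,N})/\gamma_i^2=1+O(N^{-1})\), together with \(\E(Y_i^2\mid\mathcal F_{i-1})=v^{\rm rem}_{i-1}\) and \(v^{\rm rem}_{i-1}/\widehat v_{i-1}\to1\) at a rate.

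The summed identity you propose, \(\sum_{i\le j_n}\gamma_i^2v^{\rm rem}_{i-1}=V_0-V_{j_n}\), fails on both counts. It is not exact: the law of total variance gives \(\gamma_i^2v^{\rm rem}_{i-1}=V_{i-1}-\E(V_i\mid\mathcal F_{i-1})\), so the sum differs from \(V_0-V_{j_n}\) by a martingale. Even an exact summed identity would not control a sum with such weights. Also, \(V_{j_n}\approx k_n\sigma^2\), not \(k_n(N-n)\sigma^2/N\), because \(N-j_n\approx N-n\).

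Your variance control is also too weak. The ratio cannot tend to one uniformly over \(i\le n\), since \(\widehat v_0=1/4\). Uniform convergence only for \(i\ge\varepsilon n\) does not suffice either. On a grid reaching \(|h|\le\log(N+1)\), the error sums carry weights \(|q_{i-1}|\sqrt{r_{i,N}/r_{1,N}}\) as large as \(C\log(N+1)\). A fixed-\(\varepsilon\) split therefore leaves a predictable error of order \(\varepsilon\log N\) and martingale tails of order \(e^{-c/\varepsilon}\), which are not summable in \(N\) for Borel--Cantelli. You need two things:
\begin{itemize}
 \item the rate \(|\widehat v_t-\sigma_N^2|=O\{\sqrt{\log(N+1)/t}\}\) uniformly for \(\{\log(N+1)\}^4\le t\le n\), obtained from permutation-Hoeffding bounds on prefixes for \(x\) and \(x^2\) with failure probability \(O(N^{-3})\);
 \item a separate treatment of the first \(\{\log(N+1)\}^4\) rounds, using the regularization bound \(\widehat v_{i-1}\ge1/(4i)\), which makes those standardized increments \(O(\sqrt{i/N})\).
\end{itemize}

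Finally, ``as in the iid proof'' hides one without-replacement input in the exact-hit part of the final block. A fixed fraction of the \emph{remaining} values must exceed the remaining mean by \(2\xi\) at every final round. This holds because the remaining-population variance is bounded below at \(j_n\), and removing \(k_n=o(N)\) bounded values preserves that bound along every continuation.
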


\textit{Proof sketch.}
The proof is similar to that of
Theorem~\ref{thm:efficient-betting-efficiency}, so we focus only on the new
without-replacement step.  For one side, let \(K_i\) be the betting e-process,
let \(p_i=(\delta/2)K_i\) be its value relative to the rejection threshold,
and write
\[
 r_{i,N}=\frac{(n-i+1)(N-n)}{N-i},\qquad
 \gamma_i=\frac{N-n}{N-i},\qquad
 q_i=\Phi^{-1}(p_i),
\]
where \(r_{i,N}\) is the remaining finite-population bridge clock.  Since
\(b_{i,n}=\gamma_i/\sqrt{r_{i,N}\widehat v_{i-1}}\), the e-value identity
\((\Phi^{-1})'(p)\phi\{\Phi^{-1}(p)\}=1\) turns the first-order change in
\(q_i\) into
\(\gamma_iY_i/\sqrt{r_{i,N}\widehat v_{i-1}}\).  The key difference from the
iid proof is that the relevant clock is the bridge clock \(r_{i,N}\), whose
decrement \(d_{i,N}:=r_{i,N}-r_{i+1,N}\) is not identically one.  The
predictable second-order drift of \(\sqrt{r_{i+1,N}}q_i\) is proportional to
\[
 \frac{q_{i-1}}{2\sqrt{r_{i,N}}}
 \left\{
   \frac{\gamma_i^2\E(Y_i^2\mid\mathcal F_{i-1})}{\widehat v_{i-1}}
   -(r_{i,N}-r_{i+1,N})
 \right\}.
\]
The conditional second moment is tracked by \(\widehat v_{i-1}\), while
\((r_{i,N}-r_{i+1,N})/\gamma_i^2=1+O(N^{-1})\).  Thus the normal-quantile
curvature correction cancels the decrease of the bridge clock.  After
negligible centered and estimation remainders, the transformed e-process
therefore tracks the standardized Doob martingale in
\eqref{eq:wor-doob-increment}, which yields \eqref{eq:wor-efficiency} by the
same terminal argument as in Theorem~\ref{thm:efficient-betting-efficiency}.
Appendix~\ref{proof:wor-main} gives the complete argument.

\subsection{Experiments}
\label{sec:wor-experiments}

We set \(\delta=.01\) and \(c=1\) throughout.  For the fixed-horizon
  experiment, we set \(N=4000\) and use 200 uniformly random reveal orders at
  each \(\rho=n/N\in\{.1,.3,.5,.7,.8,.9\}\).  For the fixed-fraction
  experiment, we set \(\rho=.5\) and vary
  \[
  N\in\{50,100,500,10^3,5\times10^3,10^4,
         5\times10^4,10^5,5\times10^5,10^6\},
  \]
  using 50 reveal orders through \(N=10^4\), 30 at
  \(N\in\{5\times10^4,10^5\}\), and 20 at
  \(N\in\{5\times10^5,10^6\}\).  The nine finite-population families correspond
  to the same distributions as Figure~\ref{fig:original-versus-star}.  For 
  GE-betting and the WSR running intersection, we report both 
  deterministic and uniformly randomized Markov calibration.  At each reveal 
  order and design point, one independent pair \((U_+,U_-)\) is shared by the
  two methods and held fixed over candidate population means.  We leave the  
  remaining bounds in their published form: the empirical Bernstein--Serfling
  interval of \citet{bardenet-maillard-2015} is assembled from several
  concentration events rather than presented as the inversion of one terminal
  e-value, and \citet{shekhar-ramdas-2026} do not discuss a randomized
  calibration.  For Bernoulli populations, exact hypergeometric inversion is a 
  particularly informative baseline because it uses the exact finite-population
  sampling law; those panels also include the finite-alphabet rate interval of  
  \citet{shekhar-ramdas-2026}.  Appendix~\ref{sec:wor-experimental-design} gives
  the population construction and numerical details. Figure~\ref{fig:wor-fixed-horizon} reports mean widths normalized by the
  finite-population Gaussian width \(2z_{.995}\tau_{N,n}\), while
  Figure~\ref{fig:wor-fixed-fraction} reports mean \(\sqrt n\)-scaled widths as
  \(N\) varies at \(\rho=.5\).
 
  Across the six standard populations in the fixed-horizon comparison,  
  GE-betting stays close to the Gaussian finite-population
  benchmark and, under each calibration, is uniformly narrower than the 
  corresponding WSR running intersection.  The fixed-fraction comparison
  reinforces this pattern: as \(N\) increases, the \(\sqrt n\)-scaled
  GE-betting widths approach the Gaussian benchmark and are generally
  narrower than WSR under the matched calibration.  In both figures, the
  randomized curves lie below or coincide with their deterministic counterparts
  because the randomized interval is pathwise contained in the deterministic
  one.  In the low-variance panels, the regularization in
  \eqref{eq:regularized-variance-estimate} is visible at the smallest sampling
  fractions or population sizes, but GE-betting remains much
  sharper than the empirical Bernstein--Serfling and AS intervals and is
  generally narrower than WSR under the matched calibration.  In the Bernoulli
  panels it also closely tracks exact hypergeometric inversion, the natural   
  exact baseline for binary populations.  The finite-alphabet rate interval is
  finite-sample valid for those populations, whereas the continuous-alphabet   
  AS-CI has an asymptotic almost-sure guarantee.  We report these procedures as
  width benchmarks under their respective guarantees.

\begin{figure}[tp]
 \centering
 \includegraphics[width=0.96\textwidth]{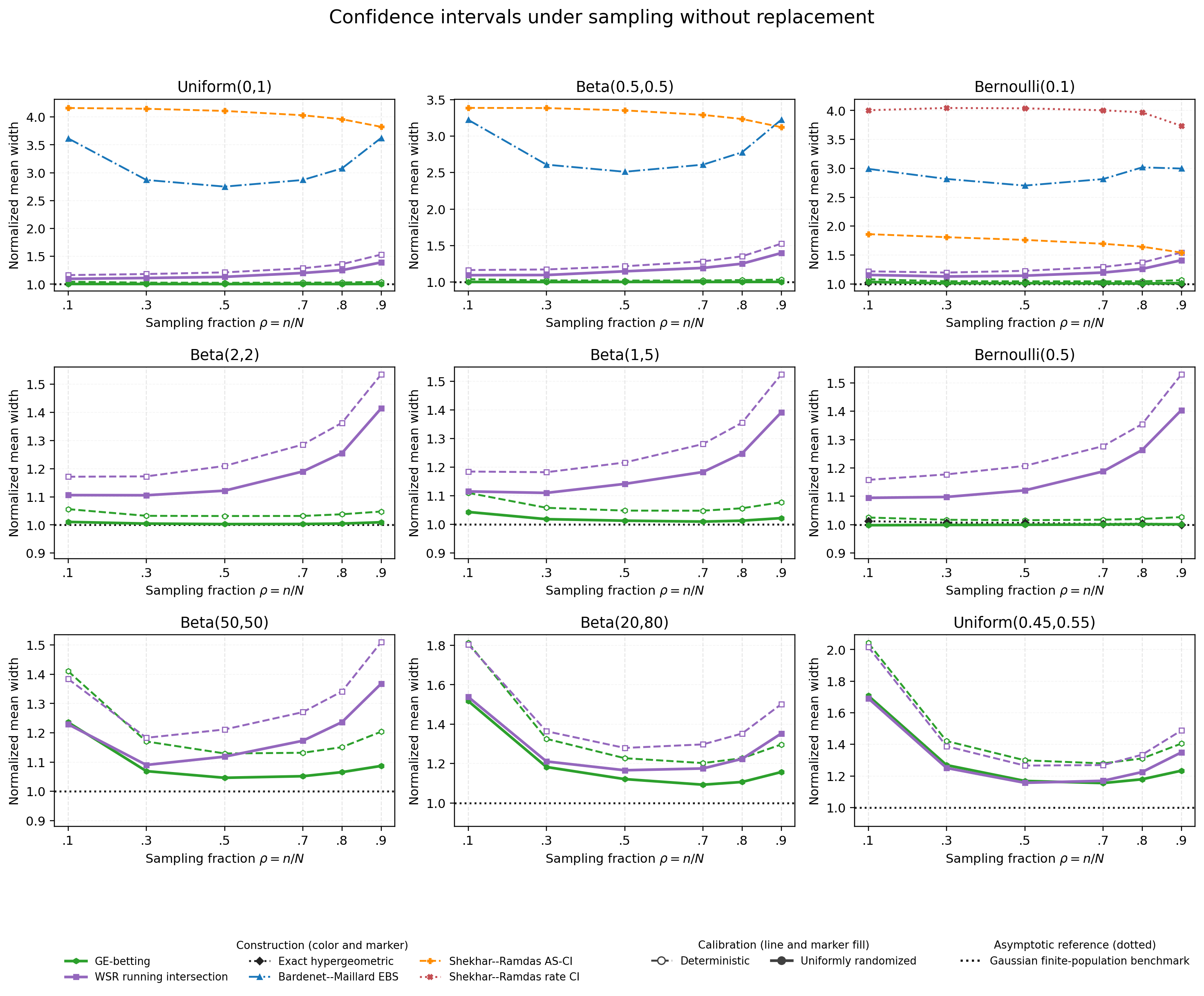}
 \caption{(Fixed $N$, varying $\rho$.)  Normalized mean fixed-horizon width over 200 random reveal orders
 for nine fixed populations.  Color and marker identify the construction.
 For GE-betting and WSR, dashed curves with hollow markers use
 deterministic Markov calibration, while solid curves with filled markers use
 uniformly randomized Markov calibration.  The remaining intervals are shown
 in their published form, and the dotted line is the Gaussian
 finite-population benchmark.  Exact hypergeometric and finite-alphabet rate
 intervals appear only for Bernoulli populations.  Bardenet--Maillard and
 Shekhar--Ramdas curves are omitted from rows two and three because their much
 wider intervals obscure the comparison with WSR.}
 \label{fig:wor-fixed-horizon}
\end{figure}

\begin{figure}[tp]
 \centering
 \includegraphics[width=0.96\textwidth]{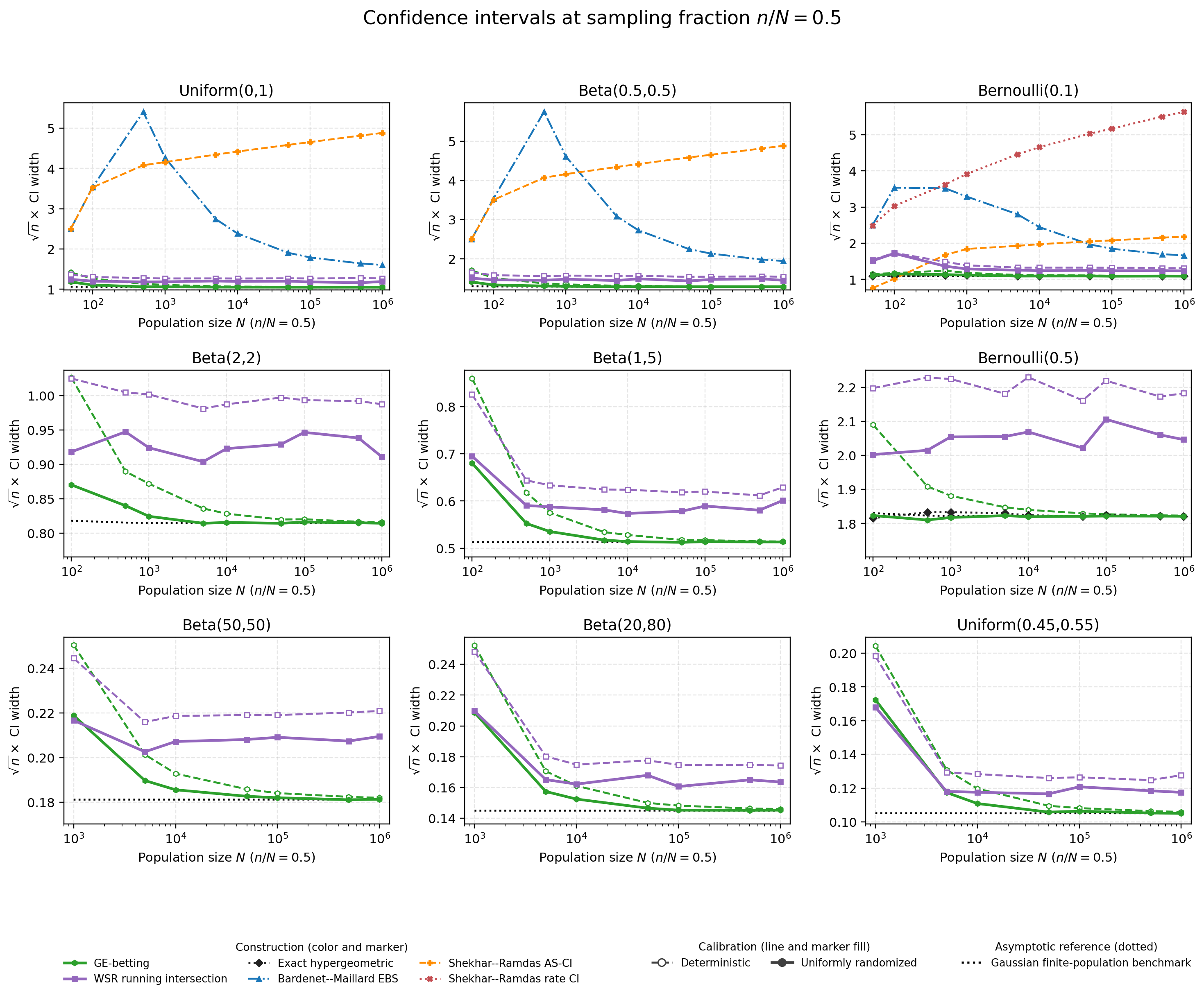}
 \caption{(Fixed $\rho$, varying $N$.) \(\sqrt{n}\)-scaled mean 99\% confidence-interval width at the fixed
  sampling fraction \(n/N=.5\), as the population size \(N\) varies, for nine
  finite-population families.  Means are computed over 50 random reveal orders
  for \(N\leq 10^4\), 30 for \(N\in\{5\times10^4,10^5\}\), and 20 for
  \(N\in\{5\times10^5,10^6\}\).  Color and marker identify the construction.
  For GE-betting and WSR, dashed curves with hollow markers use
  deterministic Markov calibration, while solid curves with filled markers use
  uniformly randomized Markov calibration.  The remaining intervals are shown
  in their published form, and the dotted line is the \(\sqrt{n}\)-scaled
  Gaussian finite-population benchmark.  Exact hypergeometric and
  finite-alphabet rate intervals appear only for Bernoulli populations.
  Bardenet--Maillard and Shekhar--Ramdas curves are omitted from rows two and
  three because their much wider intervals obscure the comparison with WSR.
  For readability, the top, middle, and bottom rows begin at \(N=50\),
  \(N=10^2\), and \(N=10^3\), respectively.}
 \label{fig:wor-fixed-fraction}
\end{figure}

\section{Conclusion}
\label{sec:conclusion}

Testing by betting can deliver more than finite-sample validity: it can produce
fixed-horizon inference that is theoretically efficient and empirically
state-of-the-art.  GE-betting achieves this combination by tracking the
conditional probability of the optimal terminal Gaussian rejection event.
Its clipped predictable updates remain finite-sample valid under the
constant-conditional-mean model, so the coverage guarantee allows martingale
dependence.  Under iid sampling, a predictable variance estimator shared
across candidate means yields an interval pathwise and recovers the
first-order Gaussian width.
Under sampling without replacement, the corresponding Brownian-bridge design
retains these guarantees while incorporating the finite-population correction.

The simulations reinforce the theory.  Across our experiments, GE-betting
achieves the tightest confidence intervals to date for bounded data,
with its largest gains in the low-variance settings.  This is an empirical
comparison rather than a dominance theorem,
but it demonstrates the paper's main practical point: martingale-valid
inference need not be conservative relative to the best independence-based
finite-sample benchmark.

Two lines of research appear particularly promising.  First, this
conditional-e-value construction suggests a broader efficiency theory for
e-values.  For a given local statistical experiment, one would like
to characterize which rejection regions can be tracked without first-order
loss by nonnegative betting processes and obtain matching lower bounds.
Natural extensions beyond the one-dimensional mean include inference on
variance \citep{maurer2009empirical, catoni2012challenging,
martinez2025sharp}, vector-valued means in Euclidean or smooth Banach spaces
\citep{pinelis1994optimum, catoni2018dimensionfree,
martinez2026empirical}, self-normalized constructions in which a random scale
or covariance enters the state \citep{delapena2004selfnormalized,
clerico2025confidence, martinez2025vector}, and matrix-valued means
\citep{tropp2012userfriendly, wang2025sharp,
martinez2026intrinsic}.  These works provide
sharp concentration and nonasymptotic estimation tools for the corresponding
settings; a broader efficiency theory would ask when those tools can be
converted into finite-sample-valid e-values that retain the corresponding
local efficiency, including with nuisance parameters and richer forms of
martingale dependence.  Second, the construction could be extended beyond
bounded observations.  The goal is to replace the known range used in
Section~\ref{sec:bounded-efficient-betting} by conditional moment or tail
assumptions while retaining finite-sample-valid e-values and Gaussian
efficiency.  Possible tools include predictable truncation and mixture
e-values that adapt to unknown scale and tail behavior without changing the
target rejection event at first order.

\section*{Acknowledgements}
DMT thanks Arun Kuchibhotla and Ben Chugg for insightful conversations. AI was employed to assist with the preparation of this paper,  but the authors maintain responsibility for correctness of the claims.

\bibliographystyle{apalike}
\bibliography{bib}

\appendix
\section{Experimental details}
\label{app:experimental-details}

\subsection{Experimental design under sampling with replacement}
\label{sec:experimental-design}

All experiments set \(\delta=.01\) and retain chronological order.  The six
standard distributions are Beta\((2,2)\), Beta\((1,5)\),
Beta\((1/2,1/2)\), Uniform\((0,1)\), Bernoulli\((.5)\), and
Bernoulli\((.1)\); the low-variance distributions are Beta\((50,50)\),
Beta\((20,80)\), and Uniform\((.45,.55)\), with variances approximately
\(.00248\), \(.00158\), and \(.00083\).  Every betting method in the main
experiment is evaluated under deterministic and uniformly randomized Markov
calibration.  Randomized calibration uses one independent uniform for each
one-sided test, held fixed over candidate means and shared across methods
within a dataset.  Product, STaR, and GE-betting all use the solvency
constant \(c=1\).

The betting curves in Figure~\ref{fig:original-versus-star} and the
fixed-hinge comparison in Figure~\ref{fig:fixed-hinge-versus-product} use
\[
 n\in\{10,50,100,500,10^3,5000,10^4,5\cdot10^4,10^5,
 5\cdot10^5,10^6\}.
\]
The first row contains Uniform\((0,1)\), Beta\((1/2,1/2)\), and
Bernoulli\((.1)\), with horizons beginning at \(10\).  The second contains
Beta\((2,2)\), Beta\((1,5)\), and Bernoulli\((.5)\), with horizons
beginning at \(10^2\); the third contains the three low-variance
distributions, with horizons beginning at \(10^3\).  The product-betting
curve in the first row, along with all STaR and GE-betting curves, uses
the predictable variance estimator in
\eqref{eq:regularized-variance-estimate}, shared across candidate means.  In
particular, the plotted STaR implementation keeps the square-root dependence
on current wealth, uses this shared estimator, and holds wealth fixed after it
reaches its one-sided target.  Its inverted acceptance set is therefore an interval for
every dataset.  Figure~\ref{fig:intrinsic-interval-cost} separately
contrasts this implementation with the published candidate-centered STaR
implementation and gives the analogous comparison for GE-betting.

The Gaffke curves in Figure~\ref{fig:original-versus-star} come from a
dedicated experiment in which one simulated sequence supplies all nested
sample sizes.
For the six standard distributions, the
short-horizon points use \(n\in\{10,50,100,500\}\).  From \(n=10^3\)
onward, all nine distributions use
\(n\in\{10^3,3\cdot10^3,10^4,3\cdot10^4,10^5,
3\cdot10^5,10^6\}\).  Ordinary Gaffke is evaluated by normalized B-splines
through \(n=3000\) for continuous data and by a fourth-order
Cornish--Fisher approximation to the exact Dirichlet moments thereafter.
The experiment uses 120 paths per distribution and horizon through \(n=10^3\),
60 through \(n=10^4\), and 30 thereafter, for 6,120 paths per calibration.
The randomized product-orthant endpoints are computed on the same paths with
one saved uniform for each one-sided test, held fixed throughout inversion.
This refinement
can differ from ordinary Gaffke only when an ordinary endpoint lies beyond the
sample range, which occurs mainly at the shortest horizons.  The Gaffke curves
report means.  The saved experiments also retain the pointwise 10th--90th
percentiles of the widths, although Figure~\ref{fig:original-versus-star}
displays only their means.

The betting sample contains 7,560 paired datasets per method and calibration:
120 per distribution and horizon through \(n=10^3\), 60 through \(n=10^4\),
and 30 thereafter.  One-sided processes using a shared estimator
are inverted directly by bisection.  For the auxiliary fixed squared-hinge comparison, we
evaluate decisions on an adaptively refined grid over the full candidate
range.  If this grid finds more than one connected accepted piece, we report
the distance from the smallest to the largest accepted candidate.  The
large-sample comparison in Figure~\ref{fig:intrinsic-interval-cost} uses 30
paired sequences per distribution and horizon for
\(10^3\le n\le10^7\).  Candidate-centered endpoints are first bracketed and
then found by bisection.  For the candidate-dependent GE-betting rule, we also
use the adaptively refined full grid to check for disconnected accepted
pieces.  The two methods using a shared estimator are inverted by bisection
at their ordered one-sided boundaries.

\subsection{Empirical cost of enforcing an interval-valued inversion}
\label{sec:empirical-common-estimator-cost}

Figure~\ref{fig:intrinsic-interval-cost} compares candidate-centered
implementations of STaR and GE-betting with their counterparts using a
shared estimator.  Using a predictable variance estimator shared across
candidate means makes the latter two inversions intervals before
post-processing.  For the candidate-dependent GE-betting rule, we show both
the distance between its smallest and largest accepted grid points and the
length of its largest connected accepted piece.  The candidate-centered STaR
curve reports the connected piece containing the sample mean.

\begin{figure}[!htbp]
    \centering
    \includegraphics[width=0.92\textwidth]
      {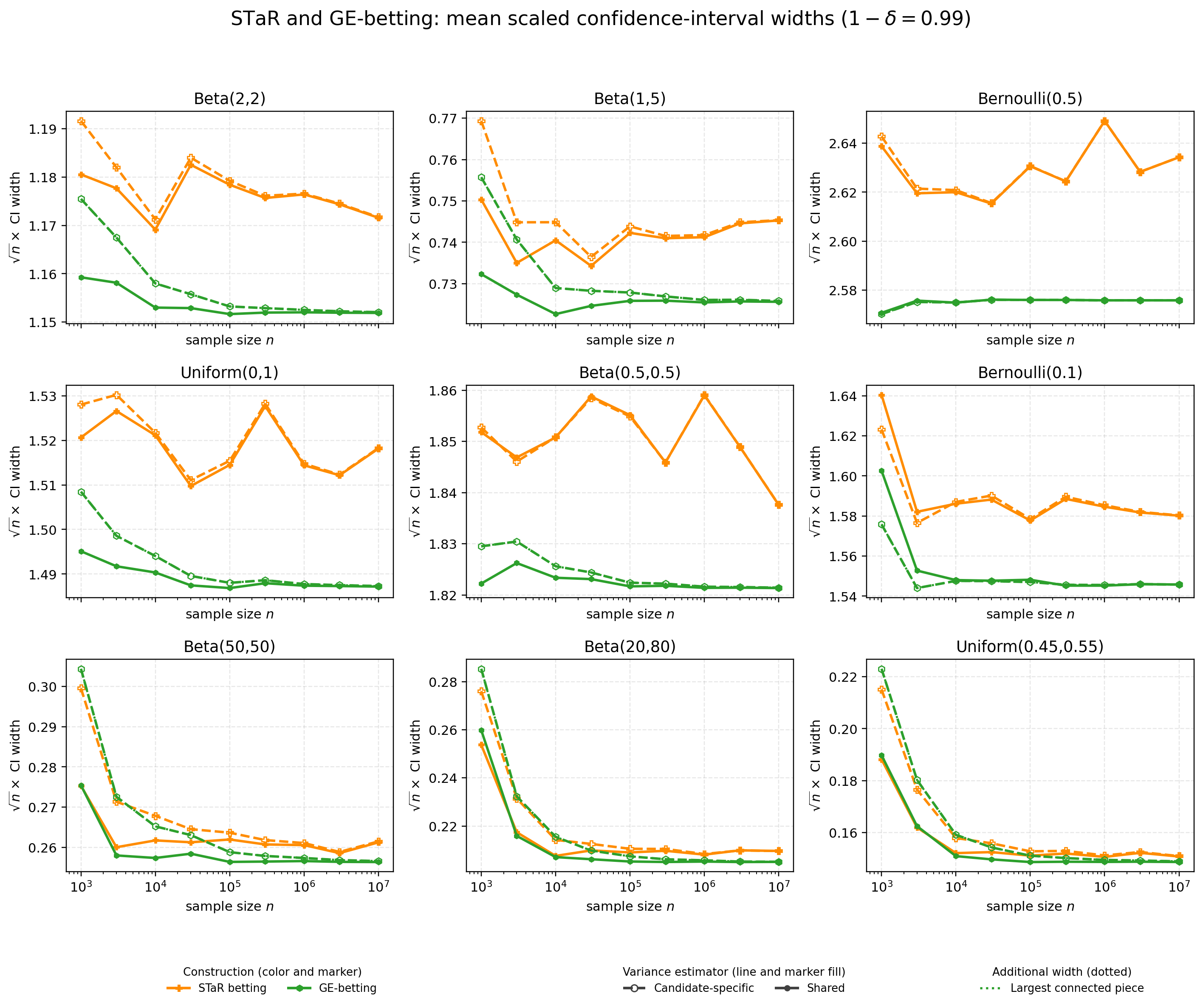}
    \caption{Direct paired comparison of mean \(\sqrt n\)-scaled widths for
    STaR and GE-betting over \(10^3\leq n\leq10^7\).  Color and marker
    identify the betting construction.  Dashed curves with hollow markers use
    candidate-specific variance estimators; solid curves with filled markers
    use an estimator shared across candidate means.  The dotted
    GE-betting curve gives the length of the largest connected accepted
    piece; its dashed counterpart gives the distance between the smallest and
    largest accepted grid points.  Candidate-centered STaR reports the
    connected piece containing the sample mean, while the curves using a
    shared estimator are interval widths for every dataset.  All four
    procedures use uniformly randomized Markov calibration, with the same
    one-sided uniforms held fixed and shared within each dataset.  Every curve
    averages the same 30 datasets per distribution and horizon.}
    \label{fig:intrinsic-interval-cost}
\end{figure}

There is no deterministic ordering between a candidate-dependent strategy and
its counterpart using a shared estimator.  Sharing the estimator across
candidate means changes the predictable variance estimate, not merely the
post-processing of a fixed confidence set, so it can move either endpoint.
Empirically, however, both procedures with shared estimators retain essentially
the same width as their candidate-centered counterparts while guaranteeing
interval geometry on every path.

For GE-betting, the interval using the shared estimator is narrower on
average in
seven of nine panels at \(n=10^3\), with its largest reduction for
Uniform\((.45,.55)\), and the two implementations become nearly
indistinguishable at the largest horizons.  The STaR comparison shows the same
practical message: the shared predictable variance estimator yields a convex
confidence set at little empirical cost.  These comparisons describe average
performance; they do not claim one method is narrower on every dataset.

\subsection{Sensitivity to the e-value-preserving nonnegativity cap}
\label{app:solvency-sensitivity}

The constant \(c\in(0,1]\) determines how close a predictable betting
fraction may come to its one-step nonnegativity boundary.  In the upper-tail
process the cap is \(c/m\), and in the reflected lower-tail process it is
\(c/(1-m)\).  The main experiments use \(c=1\).  To display the empirical
effect of this choice, Figure~\ref{fig:solvency-comparison} repeats the
fixed-sample experiment with \(c\in\{1/2,3/4,1\}\).

The comparison is paired: each value of \(c\) uses the same simulated
sequences, horizons, and replication counts.  Figure~\ref{fig:solvency-comparison}
shows only deterministic product betting, STaR betting, and GE-betting.
Color identifies the betting construction, while line and marker identify
the solvency fraction.

\begin{figure}[!htbp]
    \centering
    \includegraphics[width=0.98\textwidth]
      {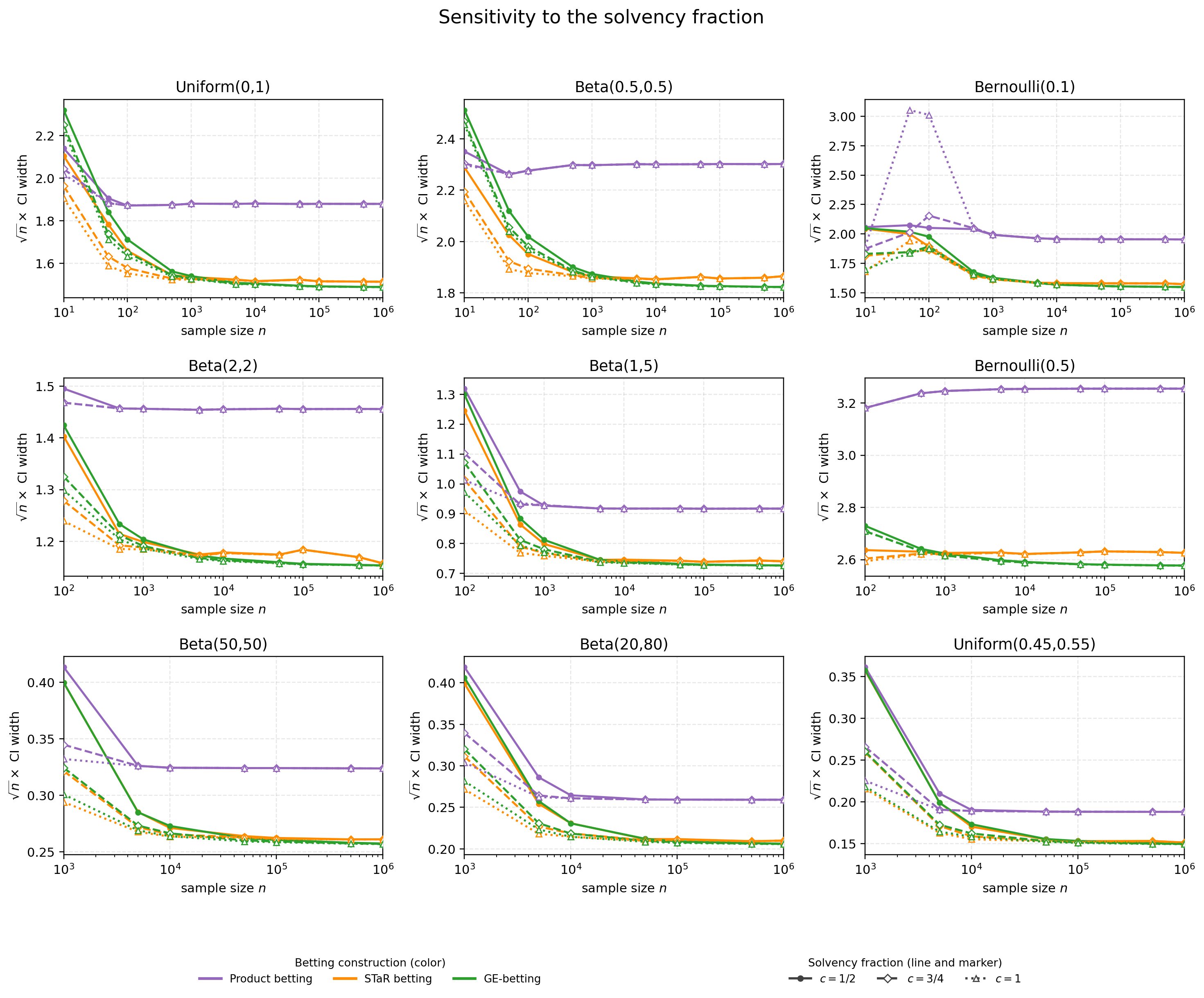}
    \caption{Mean \(\sqrt n\)-scaled widths of deterministic confidence
    intervals for product betting, STaR betting, and GE-betting at
    \(c\in\{1/2,3/4,1\}\).  Color identifies the betting construction; solid
    lines with circles use \(c=1/2\), dashed lines with diamonds use
    \(c=3/4\), and dotted lines with triangles use \(c=1\).}
    \label{fig:solvency-comparison}
\end{figure}

\subsection{Experimental design under sampling without replacement}
\label{sec:wor-experimental-design}

The experiment in Figure~\ref{fig:wor-fixed-horizon} uses populations of size
\(N=4000\) corresponding to the same nine distributions as the main
with-replacement experiment.  Figure~\ref{fig:wor-fixed-fraction} uses the
same deterministic construction separately at each population size.  For
each continuous distribution, the finite population consists of its
\((j-1/2)/N\) quantiles, \(j=1,\ldots,N\).  The Bernoulli populations contain
exactly the corresponding proportions of zeros and ones.  Thus both
experiments fix each population and average only over the random reveal
order; neither introduces a second layer of iid sampling from the named
distribution.

We set \(\delta=.01\) and \(c=1\), and use 200 uniformly random reveal orders
at each sampling fraction
\[
 \rho=n/N\in\{.1,.3,.5,.7,.8,.9\}.
\]
For Figure~\ref{fig:wor-fixed-fraction}, we instead fix \(\rho=.5\) and use
\[
 N\in\{50,100,500,10^3,5\times10^3,10^4,
        5\times10^4,10^5,5\times10^5,10^6\}.
\]
We use 50 reveal orders for \(N\leq10^4\), 30 for
\(N\in\{5\times10^4,10^5\}\), and 20 for
\(N\in\{5\times10^5,10^6\}\).  For GE-betting and the
running-intersection WSR product
strategy, we report both deterministic and uniformly randomized Markov
calibration.  For each population, reveal order, and design point, the
randomized versions use one independent pair
\(U_+,U_-\sim\operatorname{Unif}(0,1)\), shared across the two betting methods
and held fixed throughout inversion over the
candidate mean.  The permutation and randomizer streams are generated
separately from the master seed (20260813).  Betting-interval endpoints are
located by 30 bisection steps for Figure~\ref{fig:wor-fixed-horizon} and 24 for
Figure~\ref{fig:wor-fixed-fraction}.  We retain the published
calibrations for the empirical Bernstein--Serfling interval of
\citet{bardenet-maillard-2015} and the continuous-alphabet AS-CI of
\citet{shekhar-ramdas-2026}: the former combines several concentration events
rather than inverting one explicit terminal e-value, while the latter paper
does not discuss randomized calibration.  For the two Bernoulli populations,
we additionally report exact equal-tail hypergeometric inversion and the
finite-alphabet rate interval of \citet{shekhar-ramdas-2026}.  These last two
procedures use the binary population structure and are therefore omitted from
the other seven panels.  Every interval is intersected with the feasible range
\(\mathcal M_n\).  In Figure~\ref{fig:wor-fixed-horizon}, each width is divided
by \(2z_{.995}\sigma_N\sqrt{(N-n)/\{n(N-1)\}}\).  Figure~%
\ref{fig:wor-fixed-fraction} instead plots the Monte Carlo mean of
\(\sqrt n\) times the width; its dotted benchmark is
\(2z_{.995}\sigma_N\sqrt{(N-n)/(N-1)}\).  The saved data contain every
pathwise width as well as the displayed means and their Monte Carlo standard
errors.

For a binary population, the exact hypergeometric continuation probability
also gives a useful interpretation of GE-betting.  Suppose
\(M=Nm\) is an integer, let \(A\) be an upper-tail terminal event, and write
\(p_i=\Pb_m(A\mid\mathcal F_i)\).  If \(p_i^{(1)}\) and \(p_i^{(0)}\) are the
continuation probabilities after the next observation is one or zero, then
\begin{equation}
 p_{i-1}
 =m_i(m)p_i^{(1)}+\{1-m_i(m)\}p_i^{(0)}
 \label{eq:wor-binary-continuation}
\end{equation}
and
\begin{equation}
 p_i=p_{i-1}
 +\{p_i^{(1)}-p_i^{(0)}\}\{X_i-m_i(m)\}.
 \label{eq:wor-binary-continuation-update}
\end{equation}
The exact betting fraction is consequently
\begin{equation}
 \ell_i^{\mathrm{exact}}
 =\frac{p_i^{(1)}-p_i^{(0)}}{p_{i-1}}.
 \label{eq:wor-exact-binary-fraction}
\end{equation}
GE-betting replaces the hypergeometric continuation probability
by its Gaussian approximation and the finite difference in
\eqref{eq:wor-exact-binary-fraction} by the derivative in
\eqref{eq:wor-continuation-differential}.  This explains why the new interval
can approach exact hypergeometric inversion while remaining applicable to
arbitrary populations in \([0,1]^N\).

\section{Near-optimal concentration and squared-hinge betting}
\label{app:squared-hinge-methods}

Product betting may be understood as tracking the conditional expectation of
an exponential terminal e-value.  The exponential is a smooth approximation
to the rejection indicator, not the indicator itself.  After normalization at the rejection
boundary, it majorizes that indicator and has an explicit null expectation.
Conditioning on the current Gaussian partial sum and differentiating gives the
next predictable betting amount.  This convenience has a cost: the exponential is
positive far below the rejection boundary and continues to grow far above it,
so its null expectation assigns weight to outcomes that do not affect the
test decision.

The conditional-e-value calculation is not tied to the exponential.  We can
replace it by a test function closer to the indicator, take its conditional
expectation in the same Gaussian design experiment, and transfer the
derivative of that expectation to bounded data.  Powered hinges interpolate
between an indicator and an exponential; the squared hinge is the simplest smooth
choice that retains explicit Gaussian conditional expectations.  This appendix
develops that construction and compares its finite-sample performance with
product betting.

\paragraph{Related work on near-optimal concentration.}
Near-optimal concentration replaces exponential Chernoff majorants by sharper
comparison inequalities and positive-part moments
\citep{talagrand1995missing,bentkus2002remark,bentkus2003inequality,
bentkus2004hoeffding,pinelis2006binomial,bentkus2006domination,
pinelis2006normal,pinelis2014bennett,kuchibhotla2024missing}.
\citet{kuchibhotla2021near} obtain near-optimal confidence sequences from
Bentkus bounds, while \citet{martinez2026bentkus} use the same ideas to remove
an analogous constant-factor loss for asymptotic e-values.  Here the squared
hinge is used only to choose a predictable betting amount.  The
resulting e-value is validated directly under the conditional-mean model,
rather than through the comparison inequality.

\subsection{Powered hinges and their Gaussian benchmark}
\label{sec:near-optimal-background}
Near-optimal concentration can be understood through the test-function form
of Markov's inequality.  If \(f\) is nonnegative and nondecreasing, then
\begin{equation}\label{eq:test-function-majorant}
 \mathbf 1\{Y\geq x\}\leq\frac{f(Y)}{f(x)},
 \qquad
 \Pb\{Y\geq x\}\leq\frac{\E f(Y)}{f(x)}.
\end{equation}
The test function majorizes the rejection indicator and has a tractable
expectation.  Chernoff's choice \(f(y)=e^{\lambda y}\) is convenient but loose:
it is positive far below the boundary and continues to grow far above it.
A tractable majorant closer to the indicator can therefore produce sharper
inference.  The same principle applies when an e-value is calibrated by
Markov's inequality, or an e-process by Ville's inequality.

Bentkus--Pinelis inequalities use \emph{\(\alpha\)-powered hinges}, where
\(\alpha\) is a shape parameter rather than an error probability:
\[
 f_{\alpha,t}(y)=
 \left(1+\frac{y-t}{\alpha}\right)_+^\alpha,
 \qquad 0<\alpha<\infty.
\]
This is a rescaled positive-part power with truncation point
\(a=t-\alpha\).  Its two endpoints are
\[
 f_{0,t}(y)=\mathbf 1\{y\geq t\},
 \qquad
 f_{\infty,t}(y)=e^{y-t},
\]
where the second limit follows from
\((1+u/\alpha)^\alpha\to e^u\).  Finite \(\alpha\) interpolates between the
indicator and exponential: it vanishes below a truncation point but retains
smooth polynomial growth above it.  This family underlies near-optimal bounds
of Bentkus and Pinelis
\citep{bentkus2004hoeffding,pinelis2006binomial,bentkus2006domination}.

The corresponding Gaussian benchmark makes the improvement quantitative.
For \(Z\sim\mathcal N(0,1)\), let
\[
 I_\alpha(a)=\E[(Z-a)_+^\alpha],
 \qquad
 E_{\alpha,a}(Z)=\frac{(Z-a)_+^\alpha}{I_\alpha(a)}.
\]
Thresholding this normalized hinge at \(2/\delta\) rejects when
\begin{equation}\label{eq:powered-hinge-cutoff}
 Z\geq U_{\alpha,\delta/2}(a)
 :=a+\left\{\frac{2I_\alpha(a)}{\delta}\right\}^{1/\alpha},
 \qquad
 q_{\alpha,\delta/2}:=\inf_{a\in\mathbb R}
 U_{\alpha,\delta/2}(a).
\end{equation}
\citet{martinez2026bentkus} show that the optimized cutoff is near-optimal:
\begin{equation}\label{eq:powered-hinge-gaussian-benchmark}
 z_{1-\delta/2}
 \leq q_{\alpha,\delta/2}
 \leq z_{1-\delta/(2c_\alpha)},
 \qquad
 c_\alpha=e^\alpha\alpha^{-\alpha}\Gamma(\alpha+1),
\end{equation}
with the indicator endpoint \(\alpha=0\) attaining the exact Gaussian
quantile.  Consequently, inversion in a Gaussian mean experiment has
first-order two-sided width
\begin{equation}\label{eq:powered-hinge-gaussian-width}
 \frac{2\sigma q_{\alpha,\delta/2}}{\sqrt n}.
\end{equation}
Subsection~\ref{sec:main-construction} specializes this construction to
\(\alpha=2\).  Its bounded-data strategy has the same Gaussian cutoff, and
Figure~\ref{fig:fixed-hinge-versus-product} illustrates the resulting width.

\subsection{A squared-hinge terminal e-value and an exact bounded-data 
strategy}
\label{sec:main-construction}
\label{sec:heat-flow-construction}
\label{sec:heat-asymptotics}

The exponential test function is tractable, but Markov's (or Ville's)
inequality uses it only as a majorant of the rejection indicator.  It assigns
positive value even far below the rejection region and can therefore
spend null expectation where it does not help the test.  The powered-hinge
construction above suggests replacing it by the squared hinge
\(h_a(z)=(z-a)_+^2\), which is zero below its truncation point \(a\) and
increases only when the Gaussian statistic exceeds \(a\), the side on which the test can
reject.  This gives the fixed-horizon connection with near-optimal testing by
betting: choose a
fixed-horizon e-value closer to the indicator, then construct a predictable
bounded-data betting strategy from its conditional expectation.  Define
\[
 I_2(a)=\E h_a(Z),\qquad
 U_{2,\delta/2}(a)=a+\sqrt{2I_2(a)/\delta}.
\]
Then \(h_a(Z)/I_2(a)\) is a Gaussian e-value whose level-\(\delta/2\) cutoff is
\(Z\geq U_{2,\delta/2}(a)\).  Here \(I_2(a)\) is its normalizing constant,
while \(U_{2,\delta/2}(a)\) is the standardized Gaussian cutoff for rejection.
This is the \(\alpha=2\) instance of \eqref{eq:powered-hinge-cutoff}.  We
therefore choose
\begin{equation}\label{eq:heat-lambda-star}
 a^*_{\delta/2}\in\arg\min_{a\geq0}U_{2,\delta/2}(a).
\end{equation}
The optimization selects the truncation point that reaches the rejection
threshold with the smallest Gaussian cutoff.  We use the resulting Gaussian
e-value to choose a predictable betting amount, and then validate the bounded-data
e-value directly.

With current Gaussian partial sum \(x\) and remaining variance \(v\), its
conditional expectation and spatial derivative are
\begin{align}
    U_a(v,x)&=\E[(x+\sqrt vZ-a)_+^2],\nonumber\\
    D_a(v,x)&=\partial_xU_a(v,x)
      =2\{(x-a)\Phi(d)+\sqrt v\phi(d)\},
      \qquad d=(x-a)/\sqrt v.
    \label{eq:hinge-price-delta-short}
\end{align}
Although the fixed-horizon hinge is nonsmooth, averaging over the unobserved
Gaussian increment smooths its conditional expectation.  The standardized distance
\(d=(x-a)/\sqrt v\) records how far the current partial sum is from the truncation
point in units of remaining uncertainty.  Consequently, \(D_a(v,x)\) is small
when this point is remote and increases as the data make rejection plausible.
For standard Brownian motion \((W_t)_{0\leq t\leq1}\), the martingale representation gives
\[
 I_2(a)+\int_0^tD_a(1-s,W_s)\,\mathrm dW_s=U_a(1-t,W_t).
\]
Hence \(D_a\) is the predictable betting amount per unit Gaussian increment
associated with the squared-hinge e-variable.
The finite-sample construction replaces the Brownian partial sum and variance by
predictable bounded-data counterparts.  Let \(\gamma_i\) be the scale of the
next centered increment, \(v_{i-1}\) the variance still available before the
horizon, and \(S_{i-1}=\sum_{j<i}\gamma_j(X_j-m)\) the scaled partial sum already observed.
The chain rule then gives the unclipped upper-tail betting amount
\begin{align*}
    \widetilde b_i=\gamma_iD_a(v_{i-1},S_{i-1}).
\end{align*}
Starting from
\(K_0=I_2(a)\), and for a clipping constant \(0<c\leq1\), we use
\[
 b_i=\min\{\widetilde b_i,cK_{i-1}/m\},
 \qquad K_i=K_{i-1}+b_i(X_i-m),
\]
and reflect the construction for the lower tail.  The normalized terminal
wealth \(K_n/I_2(a)\) is the resulting one-sided e-value.  The Gaussian calculation
determines how much to bet, whereas clipping prevents a
worst-case bounded jump from making wealth negative.  Clipping changes only
the amount actually bet; it does not alter the scaled partial
sum used to compute later betting amounts.

The same Gaussian calculation suggests the following heuristic first-order
prediction: under consistent studentization, vanishing increments,
and an asymptotically inactive clip, one expects for large \(n\) that (the squared hinge limit is)
\begin{equation}\label{eq:fixed-hinge-width}
 \sqrt n\,\operatorname{len}(C_{n,\bar X}^{\rm hinge})
 \approx 2\sigma U_{2,\delta/2}(a),
 \qquad
 U_{2,\delta/2}(a)=a+\sqrt{\frac{2I_2(a)}{\delta}},
\end{equation}
where \(C_{n,\bar X}^{\rm hinge}\) is the accepted component
containing \(\bar X_n\).
At \(a=a^*_{\delta/2}\), the right-hand side is the predicted optimized
Bentkus--Pinelis constant.  This local prediction concerns the component
containing the empirical mean and does not exclude remote accepted components.

\begin{figure}[!htbp]
    \centering
    \includegraphics[width=0.92\textwidth]
      {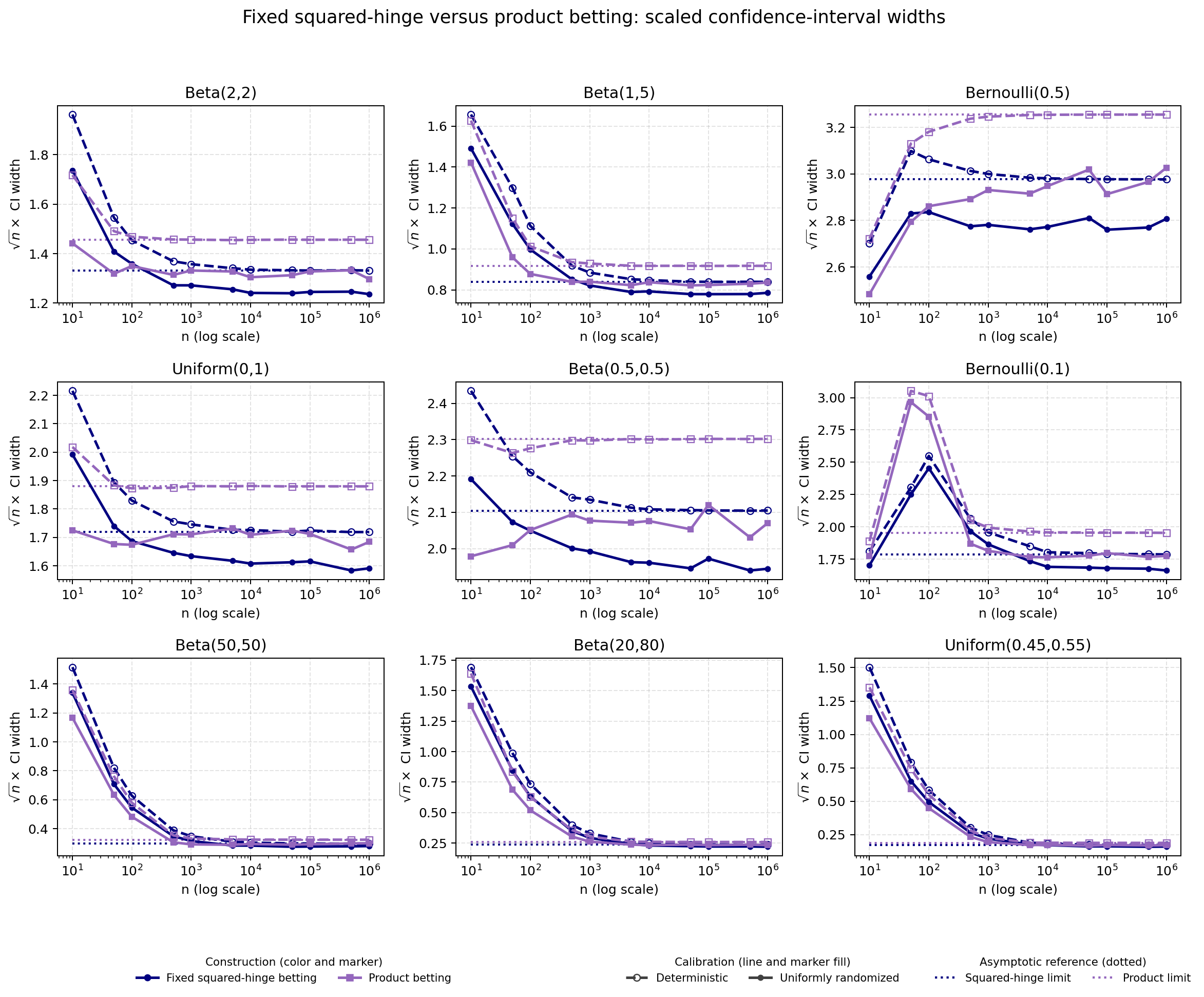}
    \caption{Mean \(\sqrt n\)-scaled confidence-interval widths for product
    betting and fixed squared-hinge betting over the nine distributions in
    the main experiment.  Color and marker identify the betting construction;
    dashed curves with hollow markers use deterministic Markov calibration,
    and solid curves with filled markers use uniformly randomized Markov
    calibration.  Dotted lines show the product Gaussian limit and the
    heuristic squared-hinge Gaussian prediction.}
    \label{fig:fixed-hinge-versus-product}
\end{figure}

Figure~\ref{fig:fixed-hinge-versus-product} compares the present
squared-hinge construction with product betting from
Section~\ref{sec:recovering-testing-by-betting}.  The squared-hinge strategy
empirically approaches the sharper prediction in
\eqref{eq:fixed-hinge-width}, whereas product betting approaches its
exponential-test-function limit.  The comparison is descriptive: both methods
retain their finite-sample validity under the conditional-mean model.

\section{Planned-window confidence sequences under sampling with replacement}
\label{app:horizon-free-wr-cs}

The fixed-horizon interval in Subsection~\ref{sec:bounded-efficient-betting} is
designed to be narrow at one chosen sample size.  Suppose instead that a
maximum sample size \(n\) is known, but the interval may be inspected or the
experiment may stop at any \(t\leq n\).  We want simultaneous coverage over
this entire window without concentrating the design at \(t=n\).  The
construction below places the fixed-horizon GE-betting rule at a geometric
collection of checkpoints and combines the resulting accounts into one
e-process.  In this sense it is horizon-free within the planned window; the
known upper bound makes the portfolio finite, but is not the target maturity
of the portfolio.  The construction applies to every finite \(n\), however
large.

\subsection{Construction}
\label{app:horizon-free-wr-construction}

Let \(X_1,\ldots,X_n\in[0,1]\) satisfy the constant conditional-mean model
\eqref{eq:conditional-mean-model}.  Choose checkpoints
\(h_0<\cdots<h_L=n\) approximately geometrically in ordinary time.  More
precisely, for \(\eta>1\), set
\begin{equation}
 h_{k+1}=\min\{n,\max\{h_k+1,\lceil\eta h_k\rceil\}\}.
 \label{eq:horizon-free-wr-grid}
\end{equation}
This schedule has \(O\{\log(n/h_0)\}\) accounts, so enlarging the planned
window increases the work per update only logarithmically.
Fix deterministic weights \(w_0,\ldots,w_L>0\) with
\(\sum_kw_k=1\), write \(T=2/\delta\), and, for every candidate mean
\(m\in[0,1]\), initialize
\[
 A_{0,k}^+(m)=A_{0,k}^-(m)=w_k.
\]

Account \(k\) follows the fixed-horizon update from Subsection~%
\ref{sec:bounded-efficient-betting}, with \(h_k\) in place of \(n\).  While
\(i\leq h_k\) and \(0<A_{i-1,k}^{\pm}(m)<T\), define
\begin{align}
 p_{i-1,k}^{\pm}(m)&=\frac{\delta}{2}A_{i-1,k}^{\pm}(m),
 &
 b_{i,k}&=\frac{1}{\sqrt{(h_k-i+1)\widehat v_{i-1}}},
 \label{eq:horizon-free-wr-scale}\\
 \ell_{i,k}^+(m)&=\min\left\{
 \psi\{p_{i-1,k}^+(m)\}b_{i,k},\frac{c}{m}\right\},
 &
 \ell_{i,k}^-(m)&=\min\left\{
 \psi\{p_{i-1,k}^-(m)\}b_{i,k},\frac{c}{1-m}\right\}.
 \label{eq:horizon-free-wr-fractions}
\end{align}
Here \(\widehat v_{i-1}\) is the shared predictable estimator in
\eqref{eq:regularized-variance-estimate}.  As in the main construction, the
upper cap is absent at \(m=0\), and the lower cap is absent at \(m=1\).  The
wealth updates are
\begin{align}
 A_{i,k}^+(m)&=\min\left[T,A_{i-1,k}^+(m)
 \{1+\ell_{i,k}^+(m)(X_i-m)\}\right],\nonumber\\
 A_{i,k}^-(m)&=\min\left[T,A_{i-1,k}^-(m)
 \{1+\ell_{i,k}^-(m)(m-X_i)\}\right].
 \label{eq:horizon-free-wr-updates}
\end{align}
An account is held fixed after time \(h_k\), or after its wealth reaches zero
or \(T\).  Sum the accounts within each one-sided test,
\[
 K_t^+(m)=\sum_{k=0}^L A_{t,k}^+(m),
 \qquad
 K_t^-(m)=\sum_{k=0}^L A_{t,k}^-(m),
\]
and invert them:
\begin{equation}
 J_t=\{m\in[0,1]:K_t^+(m)<T,\ K_t^-(m)<T\},
 \qquad
 C_t=\bigcap_{r=1}^tJ_r.
 \label{eq:horizon-free-wr-inversion}
\end{equation}
The raw intervals \(J_t\) already form a confidence sequence.  The running
intersections \(C_t\) are an optional nested version.

\subsection{Why the construction works}
\label{app:horizon-free-wr-guarantee}

Under iid sampling with variance \(\sigma^2\), the pointwise standard error at
time \(t\) is \(\sigma/\sqrt t\).  Thus ordinary time \(t\), rather than the
bridge clock \(t/(N-t)\) from Appendix~\ref{app:horizon-free-cs}, indexes the
amount of information.  A geometric grid in \(t\) consequently places the
accounts at geometric variance scales.  Each account is locally designed for
one such scale, while the portfolio covers every intermediate time.  The
largest checkpoint \(n\) only closes the finite schedule: unlike the
fixed-horizon interval, the construction does not give all initial wealth to
an account designed for \(n\).

\begin{proposition}[Anytime validity and interval inversion]
\label{prop:horizon-free-wr-validity}
For any adapted observations \(X_1,\ldots,X_n\in[0,1]\) satisfying
\eqref{eq:conditional-mean-model},
\begin{equation}
 \Pb\{\mu\in J_t\text{ for every }t=1,\ldots,n\}\geq1-\delta.
 \label{eq:horizon-free-wr-validity}
\end{equation}
The same guarantee holds for \((C_t)_{t=1}^n\).  Moreover, \(J_t\) and \(C_t\)
are intervals on every sample path.
\end{proposition}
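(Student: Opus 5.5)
We split the statement into anytime validity and interval geometry, both argued at a fixed candidate \(m\) and on the upper tail; the lower tail follows by reflecting \(X_i\mapsto1-X_i\), \(m\mapsto1-m\).

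\textbf{Validity.} At \(m=\mu\) each account \(A_{i,k}^+(\mu)\) is a nonnegative supermartingale started at \(w_k\).
\begin{itemize}
\item The fraction \(\ell_{i,k}^+(\mu)\) is \(\mathcal F_{i-1}\)-measurable and satisfies \(0\le\ell_{i,k}^+\le c/\mu\le1/\mu\). Hence the factor \(1+\ell(X_i-\mu)\) is nonnegative, and \(\E\{1+\ell(X_i-\mu)\mid\mathcal F_{i-1}\}=1\) by \eqref{eq:conditional-mean-model}.
\item Truncation at \(T\) by \(\min\{T,\cdot\}\) only decreases the conditional mean, because \(x\mapsto\min(T,x)\) is concave and below the identity.
\item Freezing the account after \(h_k\), at zero, or at \(T\) amounts to using the fraction \(0\), which is again predictable.
\end{itemize}
Summing over \(k\) shows that \(K_t^+(\mu)\) is a nonnegative supermartingale with \(K_0^+(\mu)=\sum_kw_k=1\). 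The same holds for \(K_t^-(\mu)\).

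Ville's inequality then gives \(\Pb\{\exists t\le n: K_t^+(\mu)\ge T\}\le\delta/2\), and likewise for the lower tail. A union bound gives \eqref{eq:horizon-free-wr-validity}. On this event \(\mu\in J_r\) for all \(r\), so \(\mu\in C_t\) for all \(t\), and the running intersections inherit the guarantee.

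\textbf{Interval geometry.} We reduce to Proposition~\ref{prop:shared-scale-betting-interval}, or rather to the monotonicity lemma inside its proof: for each fixed account \(k\) and time \(t\), the map \(m\mapsto A_{t,k}^+(m)\) is nonincreasing and \(m\mapsto A_{t,k}^-(m)\) is nondecreasing.

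Each account is exactly the shared-scale update \eqref{eq:shared-scale-betting-update}, after normalizing wealth by \(T\). The account starts at relative wealth \((\delta/2)w_k\) instead of \(\delta/2\), and it uses the scale \(s_i=\sqrt{(h_k-i+1)\widehat v_{i-1}}\). This scale is common to all \(m\) and both tails, and \(\sigma_{\rm eff}(p)=\phi\{\Phi^{-1}(p)\}\) is concave with \(\sigma_{\rm eff}(0)=0\).

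The monotonicity argument runs by induction over rounds and has two parts.
\begin{itemize}
\item \(p\mapsto p\{1+f(p)(x-m)/s\}\) is nondecreasing in \(p\). This uses concavity of \(\sigma_f\) together with the solvency cap: the derivative is \(1+\sigma_f'(p)(x-m)/s\). When the cap binds, the update is linear in \(p\) with slope \(1+c(x-m)/m\ge0\).
\item The update is pointwise nonincreasing in \(m\) for fixed \(p\).
\end{itemize}
The argument never uses the starting value, so it covers any initial wealth in \((0,T)\) and any horizon \(h_k\). Freezing after \(h_k\) trivially preserves monotonicity.

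Because sums of nonincreasing functions are nonincreasing, \(\{m:K_t^+(m)<T\}\) is an up-set of \([0,1]\) and \(\{m:K_t^-(m)<T\}\) is a down-set. Their intersection \(J_t\) is therefore an interval, possibly empty. Since any intersection of intervals is an interval, so is \(C_t\).

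\textbf{Main obstacle.} The delicate step is confirming that the monotonicity lemma behind Proposition~\ref{prop:shared-scale-betting-interval} really is per-round and per-account. In particular, it must not depend on the initial value \(\delta/2\), and it must handle the kink where the cap \(c/m\) starts to bind. At that kink I would check monotonicity in \(p\) separately on each branch of the minimum and then argue continuity across it. If the proposition's proof were stated only for initial wealth \(\delta/2\), I would redo that induction verbatim with a general initial relative wealth in \((0,1)\).
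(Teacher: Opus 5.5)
Your proof is correct and follows the paper's own argument. Each account is a nonnegative supermartingale from $w_k$ by predictability, the solvency caps and truncation at $T$, so the sums are test supermartingales handled by Ville and a union bound; the per-account monotonicity from the proof of Proposition~\ref{prop:shared-scale-betting-interval}, which indeed does not depend on the initial wealth, is preserved under summation and yields up-set/down-set acceptance regions. One precision: the untruncated map $p\mapsto p\{1+f(p)(x-m)/s\}$ is not nondecreasing when $x>m$, since its derivative $1-\Phi^{-1}(p)(x-m)/s$ is negative near $p=1$. Monotonicity in $p$ therefore needs the absorbing truncation at the target as well as the solvency cap: wherever that derivative is negative, the tangent bound $-\sigma_f'(p)\le\sigma_f(p)/(1-p)$ forces the untruncated value above one, and the $\min$ with one flattens it.
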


\begin{proof}
At \(m=\mu\), the centered increments in
\eqref{eq:horizon-free-wr-updates} have conditional mean zero.  The solvency
caps make every account nonnegative, and stopping or truncating an account at
\(T\) preserves the supermartingale inequality.  Indeed, before stopping,
conditional Jensen's inequality gives
\[
 \E\!\left[\min\{T,A(1+\ell Y)\}\mid\mathcal F_{i-1}\right]
 \leq \min\{T,A\}=A
\]
whenever \(A<T\) and \(\E[Y\mid\mathcal F_{i-1}]=0\).  Each account is
therefore a nonnegative supermartingale starting from \(w_k\).  Consequently,
\(K_t^+(\mu)\) and \(K_t^-(\mu)\) are test supermartingales, each starting
from one.  Ville's inequality and a union bound give
\[
 \Pb\left\{\sup_{t\leq n}K_t^+(\mu)\geq T\right\}
 +\Pb\left\{\sup_{t\leq n}K_t^-(\mu)\geq T\right\}
 \leq\frac2T=\delta,
\]
which proves \eqref{eq:horizon-free-wr-validity}.  Taking running
intersections does not change the event that the true mean is ever excluded.

For the interval claim, the amount bet relative to the rejection target is
\(p\psi(p)=\phi\{\Phi^{-1}(p)\}\), a nonnegative concave function of \(p\).
The predictable scale \(b_{i,k}\) is shared across candidate means.  The
argument of Proposition~\ref{prop:shared-scale-betting-interval}, applied to
each account, shows that every upper-tail wealth is nonincreasing in \(m\)
and every lower-tail wealth is nondecreasing.  These orders are preserved by
summation, so \(J_t\) is an interval; the intersection \(C_t\) is also an
interval.
\end{proof}

\subsection{Experiment}
\label{app:horizon-free-wr-experiment}

We use \(n=10{,}000\), \(\delta=.05\), and 40 iid paths from each of the nine
laws in Figure~\ref{fig:original-versus-star}.  The proposed procedure takes
\(h_0=5\), \(\eta=1.3\), uniform weights over the resulting 29 checkpoints,
and \(c=1\).  We compare
it with three modern confidence sequences for bounded means.  Hedged-CS and
aGRAPA use the predictable plug-in strategies of
\citet{waudby2024estimating}, both with \(c=1/2\), prior mean \(1/2\), prior
variance \(1/4\), and one initial pseudo-observation.  For Hedged-CS, the
unclipped one-sided fraction is
\begin{equation}
 \ell_i^{\rm H}=\sqrt{\frac{2\log(2/\delta)}
 {\widehat v_{i-1}i\log(i+1)}}.
 \label{eq:horizon-free-wr-hedged-fraction}
\end{equation}
We also include PRECiSE-A-CO96, the constant-time universal-portfolio
approximation of \citet{orabona2024tight}, using the authors' Cover--Ordentlich
regret bound.  These competitors represent predictable plug-in and universal-
portfolio approaches that are among the sharpest current confidence sequences
for bounded means.

The figure reports the raw current-time intervals for every method; running
intersections can only shorten them.  For aGRAPA we report the convex hull of
the numerically inverted acceptance set, which preserves coverage if the set
is disconnected.  All coverage diagnostics monitor the true mean at every
time \(1,\ldots,n\), rather than only at the displayed times.  Width is divided
by the pointwise Gaussian reference
\(2z_{.975}\sigma/\sqrt t\).  This denominator is a scale reference, not a
time-uniform interval.

\begin{figure}[!htbp]
 \centering
 \includegraphics[width=0.98\textwidth]{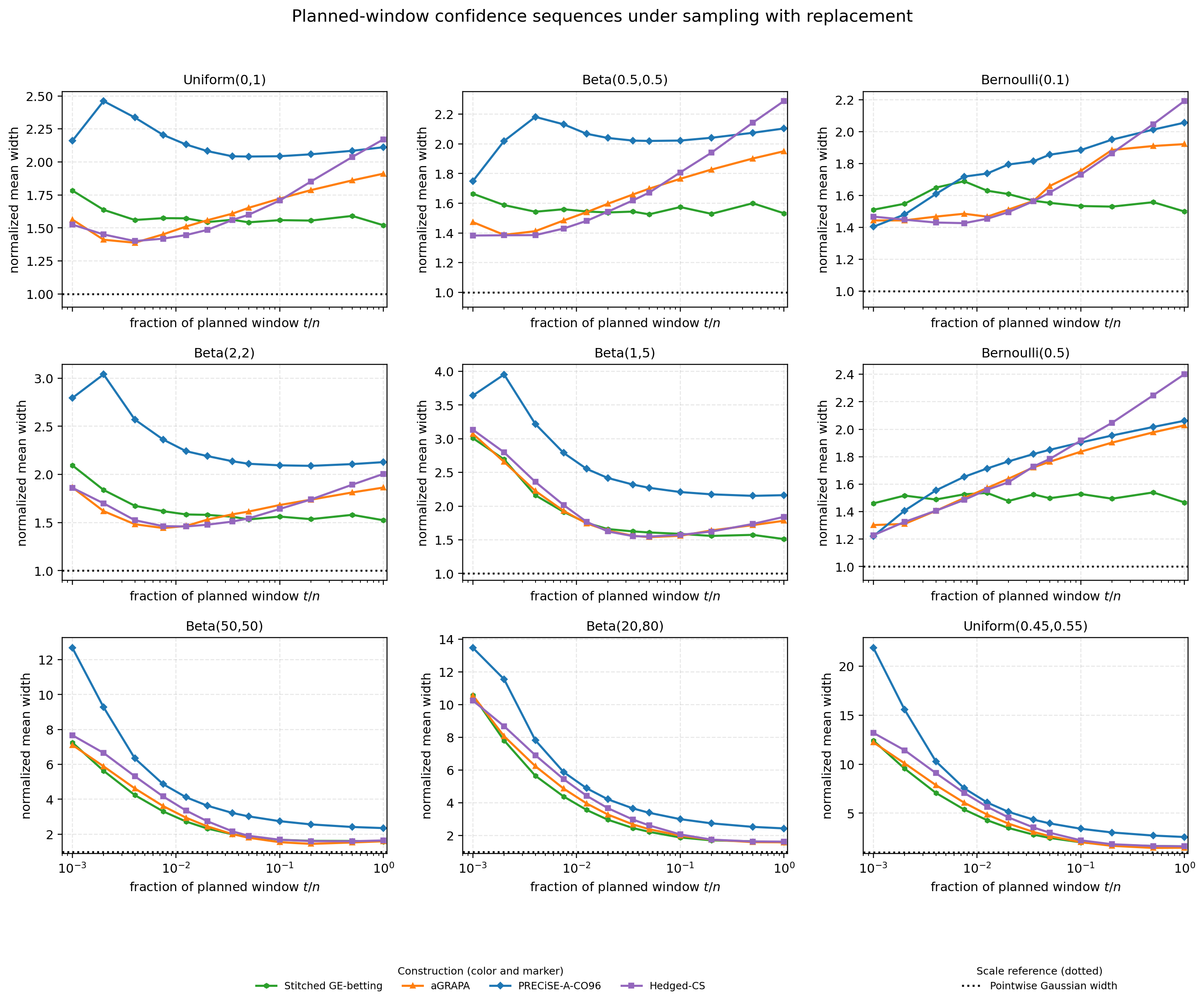}
 \caption{Normalized mean confidence-sequence width over 40 iid paths for the
 nine distributions in Figure~\ref{fig:original-versus-star}.  The
 horizontal axis is the fraction \(t/n\) of the known planning window.  Color
 and marker identify the construction, and the dotted line is the pointwise
 Gaussian width used only as a scale reference.  Stitched GE-betting uses
 \(c=1\); aGRAPA and Hedged-CS use \(c=1/2\).}
 \label{fig:horizon-free-wr-cs}
\end{figure}

Figure~\ref{fig:horizon-free-wr-cs} shows that distributing initial wealth
over checkpoint-specific GE-betting accounts is particularly effective after the
earliest part of the window.  aGRAPA or Hedged-CS is often narrower at the
first few displayed times, but stitched GE-betting overtakes them as
information accumulates.  For the six standard-variance laws, it has the
smallest terminal mean width by \(15\%\)--\(28\%\).  It is also slightly
narrower than aGRAPA for the two low-variance Beta laws.  For
Uniform\((.45,.55)\), aGRAPA is narrower at the terminal time; here the
regularization in \(\widehat v_t\) remains visible over a larger fraction of
the window.  Thus stitched GE-betting has the smallest terminal mean width in
eight of the nine panels, even though it was not designed to optimize
\(t=n\) alone.

As a numerical check, simultaneous coverage over all \(10{,}000\) times
occurred on respectively \(348\), \(341\), \(357\), and \(354\) of the 360
paths for stitched GE-betting, aGRAPA, PRECiSE-A-CO96, and Hedged-CS.  The
129-point aGRAPA topology scan found no disconnected acceptance set at any of
the 4,320 displayed inversions.  These finite simulation counts are
diagnostic rather than a coverage argument; Proposition~%
\ref{prop:horizon-free-wr-validity} supplies the finite-sample guarantee for
the proposed sequence.

\section{Horizon-free GE-betting confidence sequences}
\label{app:horizon-free-cs}

Section~\ref{sec:wor-bounded-update} constructs a confidence interval for a
fixed sampling horizon.  If the stopping time will instead be chosen from the
data, recomputing that interval at every time is not valid: changing the
horizon also changes the bets that would have been placed on earlier draws.
This appendix obtains a confidence sequence by fixing a portfolio of
horizon-specific accounts before sampling begins.  Each account uses the
GE-betting rule at one checkpoint, and their sum is a single e-process.

\subsection{Construction}
\label{app:horizon-free-construction}

We retain the notation of Subsection~\ref{sec:wor-background}.  In particular,
for a candidate population mean \(m\), \(m_i(m)\) and \(Y_i(m)\) are given by
\eqref{eq:wor-remaining-mean} and \eqref{eq:wor-innovation}, and candidates at
time \(t\) are restricted to \(\mathcal M_t\) in
\eqref{eq:wor-feasible-range}.  Let
\[
 u_t=\frac{t}{N-t},\qquad 1\leq t<N,
\]
and choose checkpoints \(h_0<\cdots<h_L=N-1\) approximately geometrically in
this clock.  More precisely, for \(\eta>1\), set
\begin{equation}
 h_{k+1}=\min\left\{N-1,\min\left\{h>h_k:
 \frac{h}{N-h}\geq\eta\frac{h_k}{N-h_k}\right\}\right\}.
 \label{eq:horizon-free-grid}
\end{equation}
Fix deterministic weights \(w_0,\ldots,w_L>0\) with \(\sum_kw_k=1\), and
write \(T=2/\delta\).  For each checkpoint and each of the two one-sided
tests, initialize
\[
 A_{0,k}^+(m)=A_{0,k}^-(m)=w_k.
\]

The update is the fixed-horizon rule from
Subsection~\ref{sec:wor-bounded-update}, applied separately to every
checkpoint.  While \(i\leq h_k\) and \(0<A_{i-1,k}^{\pm}(m)<T\), define
\begin{align}
 p_{i-1,k}^{\pm}(m)&=\frac{\delta}{2}A_{i-1,k}^{\pm}(m),
 &
 b_{i,k}&=\sqrt{\frac{N-h_k}
 {(N-i)(h_k-i+1)\widehat v_{i-1}}},
 \label{eq:horizon-free-scale}\\
 \lambda_{i,k}^+(m)&=\min\left\{
 \psi\{p_{i-1,k}^+(m)\}b_{i,k},\frac{c}{m_i(m)}\right\},
 &
 \lambda_{i,k}^-(m)&=\min\left\{
 \psi\{p_{i-1,k}^-(m)\}b_{i,k},\frac{c}{1-m_i(m)}\right\},
 \label{eq:horizon-free-fractions}
\end{align}
where \(\psi(p)=\phi\{\Phi^{-1}(p)\}/p\) and \(\widehat v_{i-1}\) is the
shared predictable estimator in \eqref{eq:regularized-variance-estimate}.
When \(m_i(m)=0\), the upper cap in
\eqref{eq:horizon-free-fractions} is absent; when \(m_i(m)=1\), the lower cap
is absent.  The wealth updates are
\begin{align}
 A_{i,k}^+(m)&=\min\left[T,A_{i-1,k}^+(m)
 \{1+\lambda_{i,k}^+(m)Y_i(m)\}\right],\nonumber\\
 A_{i,k}^-(m)&=\min\left[T,A_{i-1,k}^-(m)
 \{1-\lambda_{i,k}^-(m)Y_i(m)\}\right].
 \label{eq:horizon-free-updates}
\end{align}
An account is held fixed after time \(h_k\), or after its wealth reaches zero
or \(T\).  Sum the accounts within each one-sided test,
\[
 K_t^+(m)=\sum_{k=0}^L A_{t,k}^+(m),
 \qquad
 K_t^-(m)=\sum_{k=0}^L A_{t,k}^-(m),
\]
and invert them:
\begin{equation}
 J_t=\{m\in\mathcal M_t:K_t^+(m)<T,\ K_t^-(m)<T\},
 \qquad
 C_t=\bigcap_{r=1}^tJ_r.
 \label{eq:horizon-free-inversion}
\end{equation}
At \(t=N\), set \(J_N=\{S_N/N\}\) and
\(C_N=C_{N-1}\cap J_N\).  The raw sets \(J_t\) already form a confidence
sequence; the running intersections \(C_t\) are an optional nested version.

\subsection{Why the construction works}
\label{app:horizon-free-guarantee}

The bridge clock in \eqref{eq:horizon-free-grid} follows directly from the
finite-population standard error:
\begin{equation}
 \tau_{N,t}^2
 =\sigma_N^2\frac{N-t}{t(N-1)}
 =\frac{\sigma_N^2}{(N-1)u_t}.
 \label{eq:horizon-free-clock}
\end{equation}
Thus a geometric grid in \(u_t\) places checkpoints at geometric variance
scales.  At checkpoint \(h_k\), the factor \(b_{i,k}\) in
\eqref{eq:horizon-free-scale} is exactly the predictable bridge scale
\(b_{i,h_k}\) from \eqref{eq:wor-predictable-scale}.  Account \(k\) is
therefore the fixed-horizon GE-betting strategy at \(h_k\), initialized
with capital \(w_k\).  After division by \(w_k\), its rejection threshold is
\(2/(\delta w_k)\), corresponding to one-sided error \(\delta w_k/2\).
The weighted sum of these stopped e-processes is again an e-process.

\begin{proposition}[Anytime validity and interval inversion]
\label{prop:horizon-free-validity}
For every fixed population \(x_{1:N}\in[0,1]^N\),
\begin{equation}
 \mathbb P_{x_{1:N}}\{\mu_N\in J_t\text{ for every }t=1,\ldots,N\}
 \geq1-\delta.
 \label{eq:horizon-free-validity}
\end{equation}
The same guarantee holds for \((C_t)_{t=1}^N\).  Moreover, \(J_t\) and
\(C_t\) are intervals on every sample path.
\end{proposition}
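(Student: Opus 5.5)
The proof mirrors that of Proposition~\ref{prop:horizon-free-wr-validity}, with the remaining-population centering in place of the constant conditional mean. There are two parts: validity at $\mu_N$, and pathwise interval geometry.

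\textbf{Validity.} Fix $m=\mu_N$. By Subsection~\ref{sec:wor-background}, $m_i(\mu_N)=\E(X_i\mid\mathcal F_{i-1})$, so $\E\{Y_i(\mu_N)\mid\mathcal F_{i-1}\}=0$ for every $i\le N$.

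For each account $k$ the following hold.
\begin{itemize}
\item The fractions $\lambda_{i,k}^\pm(\mu_N)$ are predictable. They depend only on $A_{i-1,k}^\pm$, $S_{i-1}$ and $\widehat v_{i-1}$.
\item The caps $c/m_i$ and $c/(1-m_i)$ keep $1\pm\lambda Y_i\ge0$. This uses $X_i\in[0,1]$; when $m_i\in\{0,1\}$ the uncapped side has a centered observation of one sign only.
\item Truncation at $T$ and freezing preserve the supermartingale inequality. By concavity of $\min\{T,\cdot\}$ and conditional Jensen, for $A<T$,
\[
\E[\min\{T,A(1+\lambda Y_i)\}\mid\mathcal F_{i-1}]\le \min\{T,A\}=A.
\]
\item Holding the account fixed after $h_k$ is also harmless.
\end{itemize}
Hence each $A_{t,k}^\pm(\mu_N)$ is a nonnegative supermartingale started at $w_k$, and $K_t^\pm(\mu_N)$ is a test supermartingale. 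Ville's inequality gives $\Pb\{\sup_{t}K_t^\pm(\mu_N)\ge T\}\le\delta/2$, and a union bound gives simultaneous coverage over $t<N$.

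Two further facts complete the claim. First, $\mu_N\in\mathcal M_t$ always, since the sample is completable by the actual population. Second, at $t=N$ we have $J_N=\{S_N/N\}=\{\mu_N\}$ deterministically. Running intersections exclude $\mu_N$ only if some $J_r$ does, so $(C_t)$ inherits the guarantee.

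\textbf{Interval geometry.} For fixed $t$, I would show that every upper account $A_{t,k}^+(m)$ is nonincreasing in $m$ on $\mathcal M_t$, and every lower account is nondecreasing. Then $\{m:K_t^+(m)<T\}$ is an up-set and $\{m:K_t^-(m)<T\}$ is a down-set of the interval $\mathcal M_t$. Their intersection is an interval, and intersections of intervals are intervals.

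The monotonicity should follow from the argument of Proposition~\ref{prop:shared-scale-betting-interval}. The scale $b_{i,k}$ is shared across $m$, and $\sigma_{\rm eff}(p)=\phi\{\Phi^{-1}(p)\}$ is nonnegative, concave and $C^1$ with $\sigma_{\rm eff}(0)=0$. The one change is that the candidate enters the update through $m_i(m)=(Nm-S_{i-1})/(N-i+1)$ rather than through $m$ itself. This map is strictly increasing and affine in $m$, with $S_{i-1}$ fixed along the path. So the one-step map $(p,m)\mapsto T_{X_i,m_i(m),s}(p)$ is a composition of the proposition's update, which is monotone in $p$ and in its centering, with an increasing reparametrization of the centering. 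Induction over $i$ then transfers the order.

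\textbf{Main obstacle.} The hard step is checking that the proposition's pathwise argument still goes through with the candidate-dependent centering. The subtle points are:
\begin{itemize}
\item The cap $c/m_i(m)$ now moves with $m$. This must be handled just as $c/m$ is in the proposition, by the same case split on whether the cap is active.
\item The feasible range $\mathcal M_t$ must keep $m_i(m)\in[0,1]$ for all $i\le t$. This holds because $m\in\mathcal M_t\subseteq\mathcal M_{i-1}$.
\item Freezing after $h_k$ and after absorption at $0$ or $T$ must preserve the order.
\end{itemize}
I would first verify that the update is increasing in $p$ for fixed $m$; concavity of $\sigma_{\rm eff}$ supplies exactly this. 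I would then verify that it is decreasing in the centering for fixed $p$, and chain the two.
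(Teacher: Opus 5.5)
Your proposal is correct and takes essentially the paper's route. Validity comes from Ville's inequality and a union bound applied to the sums of stopped, nonnegative accounts, and interval geometry comes from the one-step order argument of Proposition~\ref{prop:shared-scale-betting-interval}, applied account by account through the increasing map \(m\mapsto m_i(m)\), with summation preserving the orders. Your extra checks, namely \(\mu_N\in\mathcal M_t\), \(J_N=\{\mu_N\}\), and \(\mathcal M_t\subseteq\mathcal M_{i-1}\) (which keeps \(m_i(m)\in[0,1]\)), fill in details that the paper leaves implicit.
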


\begin{proof}
At \(m=\mu_N\), \(Y_i(m)\) has conditional mean zero.  Each stopped account
in \eqref{eq:horizon-free-updates} is consequently a nonnegative test
supermartingale starting from \(w_k\).  Hence \(K_t^+(\mu_N)\) and
\(K_t^-(\mu_N)\) are nonnegative test supermartingales starting from one.
Ville's inequality and a union bound give
\[
 \mathbb P\left\{\sup_{t<N}K_t^+(\mu_N)\geq T\right\}
 +\mathbb P\left\{\sup_{t<N}K_t^-(\mu_N)\geq T\right\}
 \leq\frac{2}{T}=\delta,
\]
which proves \eqref{eq:horizon-free-validity}; taking running intersections
does not change the event that the true mean is ever excluded.

For interval geometry, the amount bet relative to the target is
\(p\psi(p)=\phi\{\Phi^{-1}(p)\}\), a nonnegative concave function of \(p\).
The scale \(b_{i,k}\) is shared across candidate means.  The argument of
Proposition~\ref{prop:shared-scale-betting-interval}, applied account by
account, shows that every upper wealth is nonincreasing in \(m\) and every
lower wealth is nondecreasing.  Summing preserves these orders, so \(J_t\) is
an interval; the intersection defining \(C_t\) is also an interval.
\end{proof}

\subsection{Experiment}
\label{app:horizon-free-experiment}

We use \(N=1000\), \(\delta=.05\), and 40 uniformly random reveal orders of
the nine deterministic populations corresponding to the distributions in
Figure~\ref{fig:original-versus-star}.  The stitched procedure uses
\(h_0=5\), \(\eta=1.3\), uniform checkpoint weights, and \(c=1\).  We compare
it with the Hedged-WoR predictable plug-in confidence sequence of
\citet{waudby2024estimating}, whose unclipped fraction is
\begin{equation}
 \lambda_i^{\rm H}
 =\sqrt{\frac{2\log(2/\delta)}
 {\widehat v_{i-1}i\log(i+1)}}.
 \label{eq:horizon-free-hedged-fraction}
\end{equation}
That benchmark uses \(c=1/2\), as specified for this implementation.  Both
methods use the same remaining-mean innovations and shared variance estimator.
The figure reports the raw intervals \(J_t\); using running intersections can
only shorten them.  Width is divided by the pointwise Gaussian reference
\(2z_{.975}\tau_{N,t}\), which is a scale reference rather than a
time-uniform interval.

\begin{figure}[!htbp]
 \centering
 \includegraphics[width=0.96\textwidth]{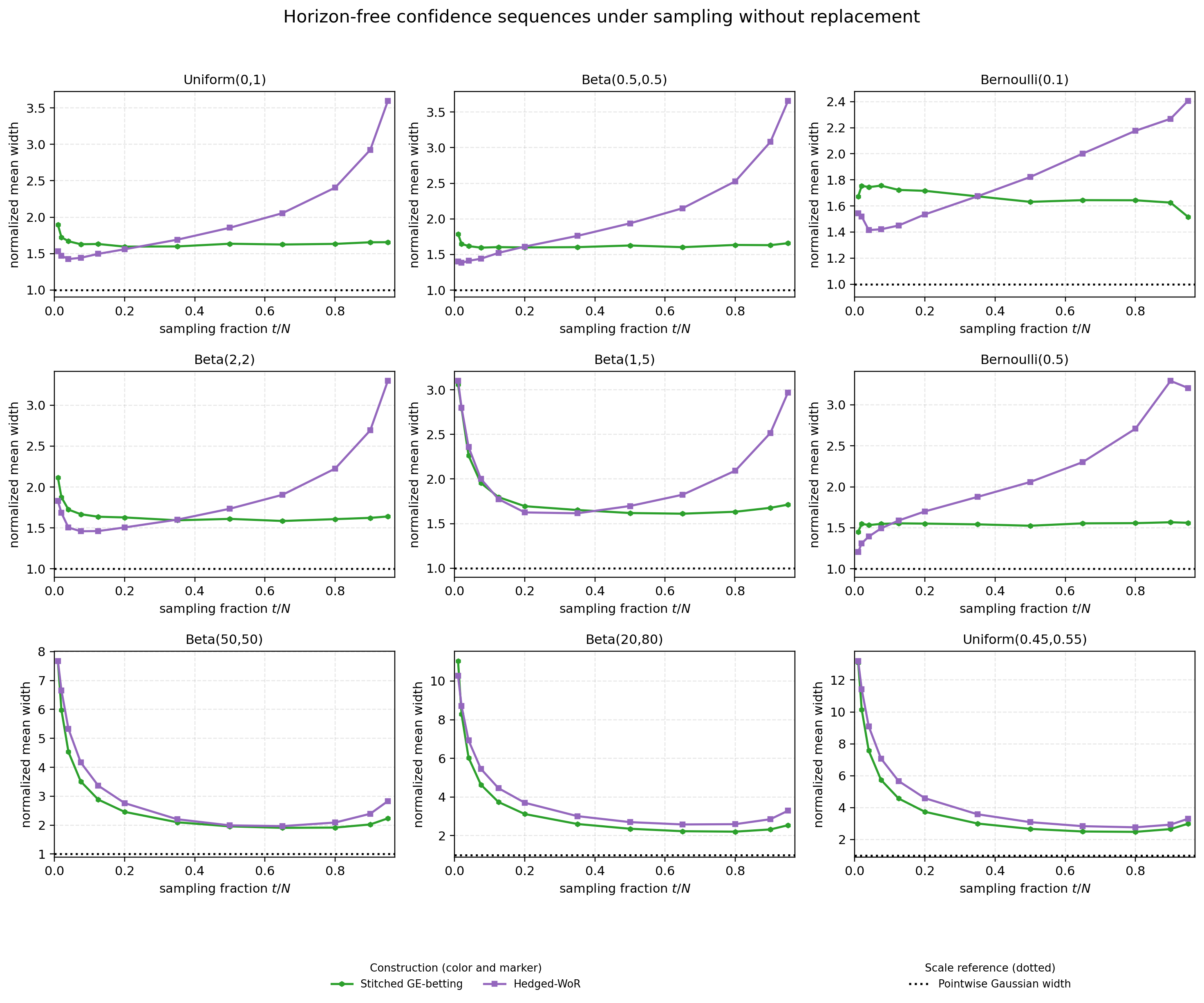}
 \caption{Normalized mean confidence-sequence width over 40 random reveal
 orders for the nine fixed populations corresponding to Figure~%
 \ref{fig:original-versus-star}.  Color and marker identify the construction,
 and the dotted line is
 the pointwise Gaussian width used only as a scale reference.  The stitched
 GE-betting procedure uses \(c=1\), whereas Hedged-WoR uses \(c=1/2\).}
 \label{fig:horizon-free-cs}
\end{figure}

Figure~\ref{fig:horizon-free-cs} shows the cost of distributing capital across
future checkpoints at the earliest sampling fractions.  Across all nine
populations, stitched GE-betting becomes narrower than
Hedged-WoR and remains narrower near the census.  Its normalized width stays
near \(1.5\)--\(1.7\) in the six standard-variance panels.  The predictable
variance regularization is more visible for the three low-variance
populations, but the stitched sequence still improves substantially on
Hedged-WoR there.

As a numerical check, simultaneous coverage over every
\(t=1,\ldots,N-1\) occurred in 354 of 360 stitched runs and 356 of 360
Hedged-WoR runs.  These counts are only diagnostic; Proposition~%
\ref{prop:horizon-free-validity} provides the finite-sample guarantee.

\section{Proofs}
\label{app:proofs}

This appendix proves the two Gaussian-efficiency theorems and the propositions
used to construct and calibrate the confidence intervals.

\subsection{Proofs of the main-text results}

\subsubsection{Proof of Theorem~\ref{thm:efficient-betting-efficiency}}
\label{proof:efficient-betting-efficiency}

\begin{proof}
The proof separates the two finite-sample claims from the iid efficiency
claim.  Step 1 uses only the conditional-mean model, whereas Steps 2--4 assume
iid sampling and identify the two endpoints on the \(n^{-1/2}\) scale.

\begin{itemize}
 \item \textit{Step 1: finite-sample validity and interval geometry.}
 Each one-sided wealth process is a test supermartingale under its
 corresponding null, so its terminal value is an e-value.  Markov's
 inequality and a union bound give coverage, while concavity of the GE-betting
 amount and the shared variance estimator make the inversion an interval.
 \item \textit{Step 2: strong preterminal tracking.}
 Before the final \(k_n\) observations, expand relative wealth on the
 normal-quantile scale.  The second-order cancellation makes this transform
 track the standardized partial sum almost surely, uniformly over a slowly
 expanding grid of local candidates.
 \item \textit{Step 3: summable terminal polarization.}
 Control the final \(k_n\) updates directly.  A positive preterminal
 separation reaches the rejection threshold exactly, while a negative one
 remains accepted, with error probabilities summable over \(n\).
 \item \textit{Step 4: from grid decisions to the interval endpoints.}
 Monotonicity converts the grid decisions into almost-sure bounds for both
 endpoints; subtracting those bounds gives the Gaussian width.
\end{itemize}

We now prove each of the four steps.

\paragraph{Step 1: finite-sample validity and interval geometry.}
At \(m=\mu\), the two centered increments in
\eqref{eq:efficient-one-sided-updates} have conditional mean zero.  Their
betting fractions are predictable, the caps keep every wealth nonnegative,
and truncating an overshoot at \(2/\delta\) can only decrease wealth.
Consequently, \(K_n^+(\mu)\) and \(K_n^-(\mu)\) are nonnegative
supermartingales starting from one.  Markov's inequality gives
\[
 \Pb\{K_n^\pm(\mu)\geq2/\delta\}
 \leq\frac{\delta}{2}\E K_n^\pm(\mu)
 \leq\frac{\delta}{2}.
\]
The true mean is excluded only if at least one of these events occurs, so a
union bound proves
\(\Pb\{\mu\in\mathcal I_n\}\geq1-\delta\).
Proposition~\ref{prop:terminal-calibration-validity} gives the same coverage
bound under uniformly randomized Markov calibration.

For interval geometry, relative wealth is \(p=(\delta/2)K\), and the
GE-betting amount as a fraction of the rejection threshold is
\[
 \sigma_{\rm eff}(p)=p\psi(p)=\phi\{\Phi^{-1}(p)\}.
\]
Writing \(q=\Phi^{-1}(p)\), we have
\[
 \sigma_{\rm eff}'(p)=-q,\qquad
 \sigma_{\rm eff}''(p)=-\frac1{\phi(q)}<0.
\]
Thus \(\sigma_{\rm eff}\) is nonnegative and concave, with
\(\sigma_{\rm eff}(0)=0\).  At every round, the scale in
\eqref{eq:efficient-one-sided-fractions} is positive, predictable, and common
to both processes and every candidate mean.
Proposition~\ref{prop:shared-scale-betting-interval} therefore implies that
\(\mathcal I_n(u_+,u_-)\) is an interval on every sample path for every fixed
\(u_+,u_-\in(0,1]\), including the deterministic choice \(u_+=u_-=1\).

\paragraph{Step 2: strong preterminal tracking.}
For the remainder of the proof, assume the observations are iid with
variance \(\sigma^2>0\).  Put \(z=z_{1-\delta/2}\) and
\[
 Z_n=\frac1{\sigma\sqrt n}\sum_{i=1}^n(X_i-\mu).
\]
For \(h\in\mathbb R\), introduce the local candidates
\[
 m_{n,\ell}(h)=\mu-\frac{h\sigma}{\sqrt n},
 \qquad
 m_{n,u}(h)=\mu+\frac{h\sigma}{\sqrt n}.
\]
For the upper- and lower-tail processes, respectively, write
\[
\begin{aligned}
 Y_{i,n}^+(h)&=X_i-m_{n,\ell}(h),
 &p_{i,n}^+(h)&=(\delta/2)K_i^+\{m_{n,\ell}(h)\},\\
 Y_{i,n}^-(h)&=m_{n,u}(h)-X_i,
 &p_{i,n}^-(h)&=(\delta/2)K_i^-\{m_{n,u}(h)\}.
\end{aligned}
\]
While a process is active, let
\(q_{i,n}^{\pm}(h)=\Phi^{-1}\{p_{i,n}^{\pm}(h)\}\), and put
\(r_{i,n}=n-i+1\).  Thus \(p=1\) means that the one-sided rejection
threshold has been reached, while \(q\to\infty\) and \(q\to-\infty\)
correspond to \(p\to1\) and \(p\to0\).

\textit{Controlling the estimator errors.} Write \(\xi_i=X_i-\mu\).  Hoeffding's inequality, a martingale Bernstein
inequality, and Borel--Cantelli give
\begin{equation}\label{eq:proof-variance-estimator-strong-rate}
 \sum_{i=1}^t\xi_i=O\{\sqrt{t\log(t+1)}\},
 \qquad
 \widehat v_t-\sigma^2
 =O\!\left\{\sqrt{\frac{\log(t+1)}{t+1}}\right\}
 \quad\text{almost surely}.
\end{equation}
For completeness, the second rate follows by putting
\(D_{i-1}=\widehat\mu_{i-1}-\mu\) and expanding
\[
 (X_i-\widehat\mu_{i-1})^2
 =\xi_i^2-2\xi_iD_{i-1}+D_{i-1}^2.
\]
The first relation in \eqref{eq:proof-variance-estimator-strong-rate} gives
\[
 D_{i-1}=O\!\left\{\sqrt{\frac{\log(i+1)}i}\right\},
 \qquad
 \sum_{i=1}^tD_{i-1}^2=O\{(\log(t+1))^2\}
 \quad\text{almost surely}.
\]
The centered sum of the \(\xi_i^2\)'s is
\(O\{\sqrt{t\log(t+1)}\}\) almost surely.  The cross term is a martingale
whose conditional variance is bounded by a constant times
\(\sum_{i\leq t}D_{i-1}^2\), so Bernstein's inequality and
Borel--Cantelli make it smaller than this first centered sum.  Dividing the
residual-square expansion by \(t+1\) proves the variance-estimator rate.
In particular, almost surely,
\begin{equation}\label{eq:proof-eventual-variance-bounds}
 \frac{\sigma^2}{2}\leq\widehat v_t\leq2
 \quad\text{for every sufficiently large \(t\)}.
\end{equation}

\textit{Taking a Taylor expansion.} Choose, only for this proof,
\[
 k_n=\left\lceil\sqrt n\{\log(n+1)\}^8\right\rceil,
 \qquad j_n=n-k_n.
\]
We next record the one-step identity that drives the argument.  Suppress
\(n,h\), and the sign, and write
\[
 r=r_{i,n},\qquad v=\widehat v_{i-1},\qquad
 q=q_{i-1,n},\qquad Y=Y_{i,n}.
\]
While the process is active and the nonnegativity cap is inactive, its
relative wealth satisfies
\[
 p_{i,n}-p_{i-1,n}
 =\phi(q)\frac{Y}{\sqrt{rv}}.
\]
Consequently, with \(a=Y/\sqrt{rv}\), a third-order Taylor expansion of
\(\Phi^{-1}\{\Phi(q)+\phi(q)a\}\) gives
\begin{equation}\label{eq:proof-normal-quantile-one-step}
 q_{i,n}=q+a+\frac12qa^2+R_{i,n},
 \qquad
 |R_{i,n}|\leq C(1+q^2)|a|^3,
\end{equation}
whenever \(|a|(1+|q|)\) is sufficiently small.  Multiply this identity by
\(\sqrt{r-1}\) and subtract \(\sqrt r\,q\).  The elementary expansions
\[
 \sqrt{1-r^{-1}}=1+O(r^{-1}),\quad
 \frac{\sqrt{r-1}}{2r}=\frac1{2\sqrt r}+O(r^{-3/2}),\quad
 \sqrt r-\sqrt{r-1}=\frac1{2\sqrt r}+O(r^{-3/2})
\]
yield
\begin{align}
 &\sqrt{r_{i,n}-1}\,q_{i,n}
   -\sqrt{r_{i,n}}\,q_{i-1,n}\nonumber\\
 &\quad=
 \frac{Y_{i,n}}{\sigma}
 +Y_{i,n}\bigl(\widehat v_{i-1}^{-1/2}-\sigma^{-1}\bigr)
 +\frac{q_{i-1,n}}{2\sqrt{r_{i,n}}}
    \left\{\frac{Y_{i,n}^2}{\widehat v_{i-1}}-1\right\}
 +e_{i,n},
 \label{eq:appendix-scaled-quantile-centered}
\end{align}
where
\begin{align}
 |e_{i,n}|
 \leq C\bigg\{&
  \frac{|Y_{i,n}|}{r_{i,n}\sqrt{\widehat v_{i-1}}}
  +\frac{|q_{i-1,n}|}{r_{i,n}^{3/2}}
       \left(1+\frac{Y_{i,n}^2}{\widehat v_{i-1}}\right)+\frac{(1+q_{i-1,n}^2)|Y_{i,n}|^3}
          {r_{i,n}\widehat v_{i-1}^{3/2}}
 \bigg\}.
 \label{eq:appendix-scaled-quantile-error-bound}
\end{align}
The centered quadratic term in
\eqref{eq:appendix-scaled-quantile-centered} is essential.  The curvature of
\(\Phi^{-1}\) produces \(qY^2/(2rv)\), while replacing \(\sqrt r\) by
\(\sqrt{r-1}\) produces \(-q/(2r)\).  Since
\(\E(Y^2\mid\mathcal F_{i-1})\) is asymptotically \(\sigma^2\) and
\(v\) estimates \(\sigma^2\), these two terms cancel to first order.  What
remains is a centered fluctuation whose cumulative contribution is
negligible.

\textit{Controlling the Taylor expansion error term within a grid.} Fix a mesh size and let \(G_n\) be a deterministic grid in
\([-\log(n+1),\log(n+1)]\).  Then \(|G_n|=O\{\log(n+1)\}\).  We will prove,
jointly over \(h\in G_n\) and both one-sided processes,
\begin{align}
\label{eq:proof-strong-preterminal-separation}
 \max_{h\in G_n}
 \left|
  \sqrt{\frac{k_n}{n}}q_{j_n,n}^+(h)-(Z_n+h-z)
 \right|&\longrightarrow0,\\
 \max_{h\in G_n}
 \left|
  \sqrt{\frac{k_n}{n}}q_{j_n,n}^-(h)-(-Z_n+h-z)
 \right|&\longrightarrow0
 \qquad\text{almost surely}. \nonumber
\end{align}

We verify the upper-tail display; reflection gives the second.  For each
\(h\in G_n\), let \(\tau_{n,h}^+\) be the first round \(i\leq j_n\) at
which
\[
 |q_{i-1,n}^+(h)|\sqrt{r_{i,n}/n}>C_0\log(n+1),
\]
or at which the unconstrained fraction is clipped or the proposed relative
wealth reaches zero or one; set \(\tau_{n,h}^+=\infty\) if no such round
exists.  All expansions and martingale estimates below are first applied to
the process stopped immediately before \(\tau_{n,h}^+\).  Define
\(\tau_{n,h}^-\) analogously for the lower-tail process.  Put
\(A_i=\widehat v_{i-1}^{-1/2}-\sigma^{-1}\).  On the probability-one event in
\eqref{eq:proof-variance-estimator-strong-rate},
\[
 A_i=O\!\left\{\sqrt{\frac{\log(i+1)}i}\right\},
 \qquad
 \sum_{i=1}^nA_i^2=O\{(\log(n+1))^2\}.
\]
The linear studentization error in
\eqref{eq:appendix-scaled-quantile-centered} is
\[
 \sum_{i\leq s}Y_{i,n}^+(h)A_i
 =\sum_{i\leq s}\xi_iA_i
   +\frac{h\sigma}{\sqrt n}\sum_{i\leq s}A_i.
\]
The first sum is a martingale transform with conditional variance
\(O\{(\log(n+1))^2\}\).  Uniformly over \(G_n\), the second sum, after
division by \(\sqrt n\), is at most
\[
 \frac{C|h|}{n}\sum_{i=1}^{j_n}|A_i|
 =O\!\left\{\frac{(\log(n+1))^{3/2}}{\sqrt n}\right\}=o(1).
\]
A maximal Bernstein inequality and a union bound over \(G_n\) therefore make
the linear error \(o(\sqrt n)\) almost surely, uniformly over its upper
summation limit.

For the quadratic term, put
\(B_i=\xi_i^2-\sigma^2\) and \(d_n=h\sigma/\sqrt n\).  The identity
\begin{equation}\label{eq:proof-quadratic-decomposition}
 \frac{Y_{i,n}^+(h)^2}{\widehat v_{i-1}}-1
 =\frac{B_i}{\widehat v_{i-1}}
  +\left(\frac{\sigma^2}{\widehat v_{i-1}}-1\right)
  +\frac{2d_n\xi_i+d_n^2}{\widehat v_{i-1}}
\end{equation}
separates a martingale difference, variance-estimation error, and local
shift.  Under the stopping rule, the martingale transform in the quadratic
term of \eqref{eq:appendix-scaled-quantile-centered}, after division by
\(\sqrt n\), has conditional variance
\(O\{(\log(n+1))^2/k_n\}\).  The variance-estimation part is bounded by
\[
 C\log(n+1)\sum_{i=1}^{j_n}
 \frac{|\widehat v_{i-1}-\sigma^2|}{r_{i,n}}
 =O\!\left\{\frac{(\log(n+1))^{5/2}}{\sqrt n}\right\}=o(1)
\]
almost surely, where the sum is split at \(n/2\) and
\eqref{eq:proof-variance-estimator-strong-rate} is used on each part.  The
two local-shift terms in \eqref{eq:proof-quadratic-decomposition} are
smaller.  Bernstein's inequality is again summable after the union bound over
\(G_n\), so the stopped quadratic sum is \(o(\sqrt n)\) almost surely.

Finally, the stopping rule and
\eqref{eq:appendix-scaled-quantile-error-bound} give
\[
 \frac1{\sqrt n}\max_{h\in G_n}
 \sum_{i=1}^{j_n}|e_{i,n}(h)|
 =O\!\left\{
  \frac{\log(n+1)}{\sqrt n}
  +\frac{\log(n+1)}{k_n}
  +\frac{\{\log(n+1)\}^2\sqrt n}{k_n}
 \right\}=o(1).
\]
Summing \eqref{eq:appendix-scaled-quantile-centered} now telescopes the
left-hand side:
\[
\begin{aligned}
 &\sum_{i=1}^{j_n}
 \left\{
  \sqrt{r_{i,n}-1}\,q_{i,n}^+(h)
  -\sqrt{r_{i,n}}\,q_{i-1,n}^+(h)
 \right\}\\
 &\qquad
 =\sqrt{k_n}\,q_{j_n,n}^+(h)-\sqrt n\,q_{0,n}
 =\frac1\sigma\sum_{i=1}^{j_n}Y_{i,n}^+(h)+o(\sqrt n),
\end{aligned}
\]
almost surely and uniformly over the grid.  Since
\(q_{0,n}=\Phi^{-1}(\delta/2)=-z\), division by \(\sqrt n\) gives
\[
 \sqrt{\frac{k_n}{n}}q_{j_n,n}^+(h)
 =-z+\frac1{\sigma\sqrt n}
     \sum_{i=1}^{j_n}Y_{i,n}^+(h)+o(1)
\]
almost surely, uniformly over the grid, as long as the process has not
stopped.

\textit{Proving that the process has not stopped.}  Let \(E_n\) be the event
that all of the finite-\(n\), stopped maximal inequalities and
variance-estimator inequalities used above hold simultaneously over
\(h\in G_n\) and both signs.  Enlarging the constants in the displayed
bounds if necessary, Bernstein's and Hoeffding's inequalities and the union
bound over \(G_n\) give
\[
 \Pb(E_n^c)\leq Cn^{-2},
 \qquad\text{and hence}\qquad
 \sum_n\Pb(E_n^c)<\infty.
\]
On \(E_n\), the same estimates hold maximally up to the corresponding
first-exit time.  Uniformly over \(h\in G_n\) and
\(i\leq j_n\wedge\tau_{n,h}^+\), they give
  \[
   \sqrt{\frac{r_{i,n}}{n}}\,|q_{i-1,n}^+(h)|
   \leq |z|+O\{\sqrt{\log(n+1)}\}+\log(n+1)+o\{\log(n+1)\}.
  \]
  Here the \(O(\sqrt{\log(n+1)})\) term comes from the centered partial sum,
  whereas the local drift is at most \(|h|\leq\log(n+1)\).  The right-hand side
  is therefore smaller than \(C_0\log(n+1)\) for all sufficiently large \(n\)
  when \(C_0\) is chosen large enough, so the normal-quantile process cannot cross the auxiliary bound
  \(C_0\log(n+1)\sqrt{n/r_{i,n}}\) before \(j_n\).

  On this region,
  \[
   |q|\leq C_0\log(n+1)\sqrt{\frac{n}{r}},
   \qquad r\geq k_n.
  \]
  The Mills bound
  \begin{equation}
   \frac{c}{1+|q|}
   \leq
   \frac{\min\{\Phi(q),1-\Phi(q)\}}{\phi(q)}
   \leq
   \frac{C}{1+|q|},
   \qquad q\in\mathbb R,
   \label{eq:proof-normal-mills-bounds}
  \end{equation}
  leads to the inverse Mills bound \(\psi\{\Phi(q)\}\leq C(1+|q|)\). The inverse Mills bound, together with the
  eventual lower bound on \(v=\widehat v_{i-1}\), consequently gives
  \[
   \frac{\psi\{\Phi(q)\}}{\sqrt{rv}}
   \leq C\left\{\frac1{\sqrt{k_n}}
         +\frac{\sqrt n\log(n+1)}{k_n}\right\}=o(1).
  \]
  Thus the unconstrained betting fraction is eventually smaller than the
  nonnegativity cap: because the local candidates converge uniformly to
  \(\mu\in(0,1)\), the caps remain bounded away from zero, whereas, up to the
  respective first-exit times, the unconstrained fractions satisfy
  \[
   \sup_{\substack{h\in G_n,\ s\in\{+,-\}\\
                    i\leq j_n\wedge\tau_{n,h}^s}}
   \frac{\psi\{\Phi(q_{i-1,n}^s(h))\}}
        {\sqrt{r_{i,n}\widehat v_{i-1}}}=o(1),
   \qquad
   \inf_{h\in G_n}
   \min\!\left\{\frac{c}{m_{n,\ell}(h)},
                \frac{c}{1-m_{n,u}(h)}\right\}
   \longrightarrow
   \min\!\left\{\frac{c}{\mu},\frac{c}{1-\mu}\right\}>0.
  \]
  Thus the minimum defining the betting fraction eventually selects the
  unconstrained fraction rather than the nonnegativity cap.  This argument
  holds for every fixed \(c\in(0,1]\).

  Moreover, since
  \(a=Y/\sqrt{rv}\) and \(Y\) is bounded, the same calculation yields
  \[
   |a|(1+|q|)
   \leq C\left\{\frac1{\sqrt{k_n}}
         +\frac{\sqrt n\log(n+1)}{k_n}\right\}=o(1).
  \]
  Hence every proposed update with
  \(i\leq j_n\wedge\tau_{n,h}^+\) eventually lies in the neighborhood on
  which the Taylor expansion is valid, and none of the stopping conditions is
  triggered.  Since
  \(|a|(1+|q|)=o(1)\) uniformly, the Mills bound~\eqref{eq:proof-normal-mills-bounds} implies that
  \[
   \frac{|\phi(q)a|}
        {\min\{\Phi(q),1-\Phi(q)\}}
   \leq C|a|(1+|q|)=o(1).
  \]
  Thus the increment \(\phi(q)a\) is eventually smaller in absolute value than
  the distance from \(p=\Phi(q)\) to either boundary.  Consequently,
  \[
   0<\Phi(q)+\phi(q)a<1,
  \]
  so the updated relative wealth reaches neither zero nor the rejection
  threshold before \(j_n\).  These strict bounds rule out every possible
  first exit in the definition of \(\tau_{n,h}^+\); the reflected argument
  does the same for \(\tau_{n,h}^-\).  Consequently,
  \[
   E_n\subseteq
   \left\{\min_{h\in G_n,\,s\in\{+,-\}}\tau_{n,h}^s>j_n\right\},
  \]
  and therefore
  \[
   \sum_n\Pb\!\left\{
    \min_{h\in G_n,\,s\in\{+,-\}}\tau_{n,h}^s\leq j_n
   \right\}<\infty.
  \]
  Thus the stopped and original processes agree through \(j_n\) for all
  sufficiently large \(n\), almost surely.

\textit{Approximating the final observations .} It remains only to replace the sum through \(j_n\) by the full standardized
sum.  Hoeffding's inequality gives, for every \(\rho>0\),
\[
 \Pb\!\left\{
  \left|\sum_{i=j_n+1}^n\xi_i\right|>\rho\sqrt n
 \right\}
 \leq2\exp\{-c_\rho n/k_n\}.
\]
These probabilities are summable, and
\(\max_{h\in G_n}k_n|h|/n\to0\).  Since
\[
 \frac1{\sigma\sqrt n}\sum_{i=1}^{j_n}Y_{i,n}^+(h)
 =Z_n+h-\frac1{\sigma\sqrt n}\sum_{i=j_n+1}^n\xi_i-\frac{k_nh}{n},
\]
Borel--Cantelli proves the first line of
\eqref{eq:proof-strong-preterminal-separation}.  Replacing \(\xi_i\) by
\(-\xi_i\) proves the second.

\paragraph{Step 3: summable terminal polarization.}
Define the terminal rejection indicators
\[
 D_n^+(h)=\mathbf1\{p_{n,n}^+(h)=1\},\qquad
 D_n^-(h)=\mathbf1\{p_{n,n}^-(h)=1\}.
\]

The goal of this step is to show that the sign of the preterminal
  normal-quantile state determines the terminal decision with summably small
  error probability.  Starting from \(q_{j_n,n}^+(h)\geq Q\), failure to reject
  can occur only if the process loses a substantial part of its positive
  separation or if no favorable final observation forces an exact threshold
  hit, whereas starting from \(q_{j_n,n}^+(h)\leq-Q\), a false rejection
  requires the process to lose its negative separation.

We control the last \(k_n\) observations directly, rather than extending
the Taylor expansion to rounds in which a single standardized observation
need not be small.  Fix \(\eta>0\) and put
\[
 Q_n:=\frac{\eta}{2}\sqrt{\frac{n}{k_n}}.
\]
There exist \(C,c>0\) such that, for all sufficiently large \(n\), all
\(|h|\leq\log(n+1)\), and all \(Q\geq Q_n\), whenever
\eqref{eq:proof-eventual-variance-bounds} holds,
\begin{align}
\label{eq:proof-strong-terminal-polarization}
 \mathbf1\{q_{j_n,n}^+(h)\geq Q\}
 \Pb\{D_n^+(h)=0\mid\mathcal F_{j_n}\}
 &\leq C\exp\!\left\{-\frac{cQ^2}{\log k_n}\right\}
      +C\exp(-cQ^2),\\
 \mathbf1\{q_{j_n,n}^+(h)\leq-Q\}
 \Pb\{D_n^+(h)=1\mid\mathcal F_{j_n}\}
 &\leq C\exp\!\left\{-\frac{cQ^2}{\log k_n}\right\}. \nonumber
\end{align}
To justify the conditional calculation, first replace only the final-block
variance estimates by their truncations to \([\sigma^2/2,2]\).  The resulting
updates have deterministic upper and lower variance bounds.  Since
\(j_n\to\infty\), \eqref{eq:proof-eventual-variance-bounds} implies that the
truncated and original updates agree for all sufficiently large \(n\), almost
surely.

\textit{Proving the $C\exp\!\left\{-{cQ^2}/{\log k_n}\right\}$ term in~\eqref{eq:proof-strong-terminal-polarization}.}  The \(q\)-scale
Taylor expansion from Step 2 is not used in the final block.  Instead, put
\[
 H_+(q)=\log\{1-\Phi(q)\},\qquad
 \lambda_+(q)=\frac{\phi(q)}{1-\Phi(q)},
\]
and, for the lower tail of the \(p\)-scale,
\[
 H_-(q)=\log\Phi(q),\qquad
 \lambda_-(q)=\frac{\phi(q)}{\Phi(q)}.
\]
These transforms record the logarithmic distances of
\(p=\Phi(q)\) from its two boundaries: \(H_+(q)=\log(1-p)\) is useful when
positive \(q\) moves away from the rejection boundary at one, whereas
\(H_-(q)=\log p\) is useful when negative \(q\) moves away from the lower
boundary toward rejection.  Taking logarithms turns the additive update of
\(p\) into a relative change in the appropriate tail probability, which is
why the two inverse Mills ratios \(\lambda_+\) and \(\lambda_-\) appear below.
Write the capped betting fraction as a predictable multiple
\(\theta_i\in[0,1]\) of the unconstrained fraction and put
\(a_i=Y_{i,n}^+(h)/\sqrt{r_{i,n}\widehat v_{i-1}}\).  On paths that do
not hit the upper threshold, the first identity below is exact; for the
second, truncation at that threshold can only decrease \(H_-\).  Thus, with
boundary values interpreted by continuity,
\begin{align}
 H_+(q_i)-H_+(q_{i-1})
 &=\log\frac{1-p_i}{1-p_{i-1}}=\log\!\left\{
   1-\theta_i
   \frac{\phi(q_{i-1})}{1-\Phi(q_{i-1})}a_i
  \right\}\nonumber\\
 &=\log\{1-\theta_i\lambda_+(q_{i-1})a_i\}
 \leq-\theta_i\lambda_+(q_{i-1})a_i,
 \label{eq:proof-upper-log-tail-update}\\
 H_-(q_i)-H_-(q_{i-1})
 &=\log\frac{p_i}{p_{i-1}}\nonumber\\
 &\leq\log\!\left\{
   \frac{p_{i-1}+\theta_i\phi(q_{i-1})a_i}{p_{i-1}}
  \right\}=\log\!\left\{
   1+\theta_i
   \frac{\phi(q_{i-1})}{\Phi(q_{i-1})}a_i
  \right\}\nonumber\\
 &=\log\{1+\theta_i\lambda_-(q_{i-1})a_i\}
 \leq\theta_i\lambda_-(q_{i-1})a_i.
 \label{eq:proof-lower-log-tail-update}
\end{align}
Thus these bounds remain valid when \(|a_i|(1+|q_{i-1}|)\) is not small;
in particular, no Taylor expansion on the \(q\)-scale is being used.

The Mills bounds in \eqref{eq:proof-normal-mills-bounds} imply, 
\begin{align}
 H_+(v)
 &=\log\{1-\Phi(v)\}\nonumber\\
 &=\log\phi(v)-\log(1+v)+O(1)\nonumber\\
 &=-\frac{v^2}{2}-\log(1+v)
   -\frac12\log(2\pi)+O(1),\qquad v\to\infty,\nonumber\\
 H_-(-v)
 &=\log\Phi(-v)
  =\log\{1-\Phi(v)\}
  =H_+(v),\nonumber
\end{align}
and so, as
\(u\to\infty\),
\begin{align}
 H_+(u/2)-H_+(u)
 &=H_-(-u/2)-H_-(-u)\nonumber\\
 &=\left\{-\frac{u^2}{8}-\log(1+u/2)\right\}
   -\left\{-\frac{u^2}{2}-\log(1+u)\right\}+O(1)\nonumber\\
 &=\frac{3u^2}{8}
   +\log\!\left(\frac{1+u}{1+u/2}\right)+O(1)
  =\frac{3u^2}{8}+O(1)\geq cu^2.
 \label{eq:proof-log-tail-gap}
\end{align}
Consequently, a loss of half the magnitude of \(q\) requires an adverse
log-tail fluctuation of order \(u^2\).  Because the inverse Mills ratios grow with \(|q|\), their contribution cannot
  be bounded uniformly by a multiple of \(Q\) unless the path is localized.
  We therefore divide the possible magnitudes of \(q\) into dyadic ranges and
  prove the downcrossing bound separately on each range.

For \(u=2^sQ\), \(s\geq0\), let \(\tau_u\) be the first final-block round
whose pre-update state satisfies \(u\leq|q_{i-1}|\leq2u\), and let
\(\rho_u\) be the first subsequent exit from \(u/2\leq|q_i|\leq2u\), with
the infimum of the empty set equal to infinity.  Define the stopped-band
downcrossing event
\[
 \mathcal D_u
 =\{\tau_u\leq n,\ \rho_u\leq n,\ |q_{\rho_u}|<u/2\}.
\]
Throughout this stopped segment,
\(\lambda_+(q)\vee\lambda_-(q)\leq Cu\): indeed,
\eqref{eq:proof-normal-mills-bounds} gives
\(\lambda_+(q)\vee\lambda_-(q)\leq C(1+|q|)\), and here
\(|q|\leq2u\) with \(u\geq Q\to\infty\).  After variance truncation,
\(|a_i|\leq\sqrt{2}/\sigma\), so
\(|\Delta H_\pm|\leq Cu\) for the adverse increments on the stopped band.
Thus band-edge overshoot costs only
\(O(u)=o(u^2)\); replacing the gap \(cu^2\) by \(cu^2-O(u)\) leaves the
same exponential bound below.  Writing
\(\bar a_i=\E(a_i\mid\mathcal F_{i-1})\), the martingale parts on the
right-hand sides of \eqref{eq:proof-upper-log-tail-update}--%
\eqref{eq:proof-lower-log-tail-update} are sums of
\[
 \mp\theta_i\lambda_\pm(q_{i-1})(a_i-\bar a_i).
\]
Because \(X_i\in[0,1]\), their conditional Hoeffding variance proxy on this
excursion is at most
\[
 Cu^2\sum_{r\leq k_n}\frac1r\leq Cu^2\log k_n
\]
and their predictable drift has absolute value at most
\[
 Cu\frac{|h|}{\sqrt n}\sum_{r\leq k_n}\frac1{\sqrt r}
 \leq Cu|h|\sqrt{k_n/n}=o(u^2)
\]
uniformly for \(|h|\leq\log(n+1)\) and the levels \(u\geq Q_n\) used below.
Thus, after subtracting the negligible drift, a downcrossing requires a
martingale fluctuation of order \(u^2\).  The conditional martingale
Hoeffding inequality and \eqref{eq:proof-log-tail-gap} bound its probability
by
\[
 C\exp\!\left\{-\frac{cu^4}{u^2\log k_n}\right\}
 \leq C\exp\!\left\{-\frac{cu^2}{\log k_n}\right\}.
\]
Any path that eventually loses half of its initial separation, even if it
first moves farther from zero, must incur \(\mathcal D_u\) on at least one
of these magnitude scales.  A union bound over \(u=2^sQ\) therefore gives
\[
 \sum_{s\geq0}C\exp\!\left\{
  -\frac{c4^sQ^2}{\log k_n}\right\}
 \leq C\exp\!\left\{-\frac{c'Q^2}{\log k_n}\right\},
\]
where the final inequality follows because
\(Q^2/\log k_n\to\infty\).  Consequently,
\begin{equation}
 \Pb\{q\text{ loses half its initial magnitude}\mid\mathcal F_{j_n}\}
 \leq C\exp\!\left\{-\frac{cQ^2}{\log k_n}\right\}.
 \label{eq:proof-log-tail-downcrossing}
\end{equation}

\textit{Proving the $C\exp(-cQ^2)$ term in~\eqref{eq:proof-strong-terminal-polarization}.} It remains to ensure an exact hit of the rejection threshold when the initial
\(q\) is positive.  Put \(L=\lfloor c_0Q^2\rfloor\), where \(c_0>0\) is
sufficiently small.  If \(q\) has not lost half its magnitude, then, whenever
\(r_{i,n}\leq L\),
\[
 \frac{q_{i-1,n}^+(h)}
 {\sqrt{r_{i,n}\widehat v_{i-1}}}
 \geq\frac1{2\sqrt{2c_0}}.
\]
Since \(\sigma^2>0\), there are \(\xi,\pi>0\) such that
\(\Pb\{X_i-\mu\geq2\xi\}\geq\pi\).  Uniformly over
\(|h|\leq\log(n+1)\), this event implies
\(X_i-m_{n,\ell}(h)\geq\xi\) for all large \(n\).

At the beginning of the update, \(p=\Phi(q)\), so its remaining distance to
  the rejection target is \(1-\Phi(q)\).  On a favorable observation
  \(Y\geq\xi\), the uncapped update adds
  \(\phi(q)Y/\sqrt{r\widehat v}\).  Since
  \[
   \frac{q}{\sqrt{r\widehat v}}
   \geq\frac{1}{2\sqrt{2c_0}},
  \]
  choosing \(c_0\leq\xi^2/8\) ensures that
  \(qY/\sqrt{r\widehat v}\geq1\).  Consequently, the normal Mills bound gives
  \[
   \frac{\phi(q)Y}{\sqrt{r\widehat v}}
   =
   \frac{\phi(q)}q\,
   \frac{qY}{\sqrt{r\widehat v}}
   \geq\frac{\phi(q)}q
   \geq1-\Phi(q).
  \]
  Thus the increment is at least the entire distance from \(p=\Phi(q)\) to
  one, and the threshold truncation sets the updated relative wealth equal to
  \(p_i=1\). If the nonnegativity cap is active, the same conclusion holds for
  sufficiently large \(Q\): then \(p=\Phi(q)\) is arbitrarily close to one,
  while \(Y\geq\xi\) makes the capped multiplier strictly larger than one.

The probability that none
of the last \(L\) observations is favorable is thus at most
\((1-\pi)^L\leq C\exp(-cQ^2)\).  This proves
\eqref{eq:proof-strong-terminal-polarization}; reflection gives the same
bounds for \(D_n^-(h)\).

\textit{Combining~\eqref{eq:proof-strong-preterminal-separation} and~\eqref{eq:proof-strong-terminal-polarization}.}
  We restrict attention
  to grid candidates separated by a fixed positive margin from the Gaussian
  boundary, because this margin is converted into the diverging
  raw \(q\)-separation needed for the terminal error probabilities to be
  summable uniformly over the grid.

The fixed \(\eta\) measures separation from the Gaussian boundary on the
  \(Z_n+h-z\) scale. On either margin
\[
 |Z_n+h-z|\geq\eta
 \quad\text{or}\quad
 |-Z_n+h-z|\geq\eta.
\]
The quantity \(\eta\) is a fixed distance on the Gaussian-statistic scale
  \(Z_n+h-z\), which remains of order one.  By
\eqref{eq:proof-strong-preterminal-separation}, the corresponding distance on
the raw \(q\)-scale is at least \(Q_n\), and
\[
 \frac{Q_n^2}{\log k_n}
 \asymp\frac{\sqrt n}{\{\log(n+1)\}^9}.
\]

The bounds in \eqref{eq:proof-strong-terminal-polarization} are summable in
\(n\), even after a union bound over \(G_n\).  Taking expectations in the
conditional bounds and applying Borel--Cantelli gives, almost surely for all
large \(n\),
\begin{align}
\label{eq:proof-strong-grid-decisions}
 D_n^+(h)&=\mathbf1\{Z_n+h>z\}
 &&\text{whenever \(h\in G_n\) and \(|Z_n+h-z|\geq\eta\)},\\
 D_n^-(h)&=\mathbf1\{-Z_n+h>z\}
 &&\text{whenever \(h\in G_n\) and \(|-Z_n+h-z|\geq\eta\)}. \nonumber
\end{align}

\paragraph{Step 4: from grid decisions to the interval endpoints.}
Fix \(0<\varepsilon<z\), take the grid mesh at most
\(\varepsilon/4\), and set \(\eta=\varepsilon/2\).  The law of the iterated
logarithm gives
\[
 |Z_n|=O(\sqrt{\log\log n})=o(\log(n+1))
 \qquad\text{almost surely}.
\]
Thus all local indices used below eventually lie inside the grid.  The
candidate
\[
 m=\bar X_n-\frac{c\sigma}{\sqrt n}
\]
equals \(m_{n,\ell}(h_n)\) for \(h_n=c-Z_n\), and
\(Z_n+h_n-z=c-z\).  Bracket \(h_n\) by its two neighboring grid points.
The upper-tail decision is nondecreasing in \(h\), so
\eqref{eq:proof-strong-grid-decisions} gives, for every fixed \(|c-z| \geq \eta\),
\begin{equation}\label{eq:proof-strong-random-centered-decisions}
 \mathbf1\!\left\{
  K_n^+\!\left(\bar X_n-\frac{c\sigma}{\sqrt n}\right)
  \geq\frac2\delta
 \right\}
 \longrightarrow\mathbf1\{c>z\}
 \qquad\text{almost surely}.
\end{equation}
The reflected argument similarly gives
\[
 \mathbf1\!\left\{
  K_n^-\!\left(\bar X_n+\frac{c\sigma}{\sqrt n}\right)
  \geq\frac2\delta
 \right\}
 \longrightarrow\mathbf1\{c>z\}
 \qquad\text{almost surely}.
\]

Use these relations with \(c=z-\varepsilon\) and
\(c=z+\varepsilon\).  At a lower-side candidate
\(m=\bar X_n-c\sigma/\sqrt n\), the local index for the lower-tail process is
\(Z_n-c\), so its Gaussian boundary expression is
\(-Z_n+(Z_n-c)-z=-c-z<0\); hence that process accepts.  The reflected
calculation shows that the upper-tail process accepts at each upper-side
candidate.  Proposition~\ref{prop:shared-scale-betting-interval} then gives,
eventually almost surely,
\begin{align*}
 \bar X_n-\frac{\sigma(z+\varepsilon)}{\sqrt n}
 &\leq\inf\mathcal I_n
 \leq\bar X_n-\frac{\sigma(z-\varepsilon)}{\sqrt n},\\
 \bar X_n+\frac{\sigma(z-\varepsilon)}{\sqrt n}
 &\leq\sup\mathcal I_n
 \leq\bar X_n+\frac{\sigma(z+\varepsilon)}{\sqrt n}.
\end{align*}
The two inner points are accepted by both one-sided tests, so
\(\mathcal I_n\) is eventually nonempty.  Subtracting the endpoint bounds
gives
\[
 z-\varepsilon
 \leq\frac{\sqrt n}{2\sigma}\operatorname{len}(\mathcal I_n)
 \leq z+\varepsilon
\]
eventually almost surely.  Intersecting the resulting probability-one events
over positive rational \(\varepsilon\downarrow0\) proves the theorem.
\end{proof}

\subsubsection{Proof of Proposition~\ref{prop:shared-scale-betting-interval}}
\label{proof:shared-scale-betting-interval}

\begin{proof}
We first take \(m\in(0,1)\).  At \(m=0\), the upper-tail cap is defined as
\(c/m=+\infty\); at \(m=1\), the reflected lower-tail cap is defined as
\(c/(1-m)=+\infty\).  Substituting these definitions into the update gives
the same order inequalities at the endpoints.
We prove that one upper-tail update is nondecreasing in its current normalized
wealth \(p\) and nonincreasing in the candidate \(m\).

First fix \(p\) and put \(a=f(p)/s\).  Before normalized wealth reaches one,
the uncapped branch is
\[
 p\{1+a(x-m)\},\qquad m\leq c/a,
\]
whereas the capped branch is
\[
 p(1-c+cx/m),\qquad m\geq c/a.
\]
Both are nonincreasing in \(m\), and they have the same value at
\(m=c/a\).  Taking the minimum with one preserves this order.

Now fix \(m\).  Concavity of \(\sigma_f\), together with \(\sigma_f(0)=0\),
implies that \(f(p)=\sigma_f(p)/p\) is nonincreasing.  Thus the capped and
uncapped formulas again agree where their roles switch, and the capped branch
is a nonnegative multiple of \(p\).  On the uncapped branch, write
\[
 R(p)=p+\sigma_f(p)d,\qquad d=\frac{x-m}{s}.
\]
Then \(R'(p)=1+\sigma_f'(p)d\).  The two tangent bounds for a nonnegative
concave function on \([0,1]\) are
\[
 -\frac{\sigma_f(p)}{1-p}
 \leq\sigma_f'(p)\leq\frac{\sigma_f(p)}p.
\]
If \(d>0\) and \(R'(p)<0\), the left inequality forces
\(d>(1-p)/\sigma_f(p)\), whence \(R(p)>1\); the minimum with one makes the
actual update constant there.  If \(d<0\) and \(R'(p)<0\), the right
inequality forces \(-d>p/\sigma_f(p)\), whence \(R(p)<0\), which cannot occur
on a nonnegative uncapped branch.  The update is continuous in \(p\), because
the capped and uncapped formulas agree at every switch; a continuous function
that is locally nondecreasing on an interval is globally nondecreasing.
Thus the update, including the minimum with one, is nondecreasing in \(p\).

Induction over observations makes the upper-tail terminal fraction
nonincreasing in \(m\).  Reflection makes the lower-tail fraction
nondecreasing.  Each one-sided acceptance set at a fixed threshold is an
interval, and their intersection is an interval as well.
\end{proof}

\subsubsection{Proof of Proposition~\ref{prop:terminal-calibration-validity}}
\label{proof:terminal-calibration-validity}

\begin{proof}
For deterministic calibration, Markov's inequality gives
\[
 \Pb\{K_{n,A}^{\pm}(m)\geq2/\delta\}
 \leq\frac\delta2\E K_{n,A}^{\pm}(m)\leq\frac\delta2
\]
under the corresponding one-sided null.

For randomized calibration, condition on the e-value at time \(n\).
Independence and uniformity of \(U_\pm\) give
\[
 \begin{aligned}
 &\Pb\!\left\{U_\pm\leq
   \min\{(\delta/2)K_{n,A}^{\pm},1\}
   \,\middle|\,K_{n,A}^{\pm}\right\}\\
 &\qquad=\min\{(\delta/2)K_{n,A}^{\pm},1\}
 \leq\frac\delta2K_{n,A}^{\pm}.
 \end{aligned}
\]
Taking expectations again bounds the one-sided rejection probability by
\(\delta/2\).  At the point null both one-sided statements apply, so a union
bound gives two-sided coverage at least \(1-\delta\).  Finally, replacing an
accepted set by its convex hull can only enlarge it and therefore cannot
reduce coverage.
\end{proof}

\subsubsection{Proof of Theorem~\ref{thm:wor-main}}
\label{proof:wor-main}

\begin{proof}
The proof has the same structure as the proof of
Theorem~\ref{thm:efficient-betting-efficiency}.  Step 1 establishes the two
finite-sample claims for an arbitrary fixed population.  Steps 2--4 consider
the triangular array in the theorem and identify the endpoints on the
finite-population standard-error scale.

\begin{itemize}
 \item \textit{Step 1: finite-sample validity and interval geometry.}
 At the true population mean, the remaining-population residuals are
 martingale differences.  The clipped wealth processes are therefore
 e-processes, and the shared bridge scale makes their inversion an interval.
 \item \textit{Step 2: strong preterminal bridge tracking.}
 Transform relative wealth by \(\Phi^{-1}\) and stop \(k_N\) rounds before the
 horizon.  A bridge-clock identity cancels the quadratic term in the
 transformed update, leaving the standardized Doob martingale of the terminal
 sample sum.
 \item \textit{Step 3: summable terminal polarization.}
 Treat the last \(k_N\) draws directly.  A positive preterminal separation
 reaches the rejection threshold and a negative separation remains accepted,
 with error probabilities summable over \(N\).
 \item \textit{Step 4: from grid decisions to the interval endpoints.}
 Monotonicity turns the local decisions into almost-sure endpoint bounds at
 \(\bar X_n\pm z_{1-\delta/2}\tau_{N,n}\), which yield the asserted width.
\end{itemize}

We now prove each of the four steps.

\paragraph{Step 1: finite-sample validity and interval geometry.}
Fix a population \(x_{1:N}\) and let its values be revealed in uniformly
random order.  At \(m=\mu_N\), the quantity \(m_i(m)\) is the average of the
\(N-i+1\) values remaining before draw \(i\).  Consequently,
\[
 \E\{Y_i(\mu_N)\mid\mathcal F_{i-1}\}=0.
\]
The upper- and lower-tail fractions are predictable, and their caps keep the
wealth multipliers nonnegative.  Absorption at \(0\) or \(2/\delta\) can only
decrease the uncapped martingale wealth.  Thus \(K_i^+(\mu_N)\) and
\(K_i^-(\mu_N)\) are nonnegative supermartingales starting from one.  Markov's
inequality and a union bound give
\[
 \Pb_{x_{1:N}}\!\left\{K_n^+(\mu_N)\geq\frac2\delta
                 \quad\text{or}\quad
                 K_n^-(\mu_N)\geq\frac2\delta\right\}
 \leq\delta.
\]
The true population mean belongs to \(\mathcal M_n\) on every reveal order,
so this proves \eqref{eq:wor-validity}.

For uniformly randomized Markov calibration, condition on either terminal
wealth \(K\) and its independent uniform randomizer \(U\).  Since
\[
 \Pb\!\left(K\geq\frac{2U}{\delta}\,\middle|\,K\right)
 =\min\left\{\frac\delta2K,1\right\}
 \leq\frac\delta2K,
\]
each one-sided rejection probability is again at most \(\delta/2\).  The same
union bound proves randomized coverage.

It remains to verify the shape of the inversion.  Relative wealth is
\(p=(\delta/2)K\), and the amount bet as a fraction of the rejection threshold
is
\[
 p\psi(p)=\phi\{\Phi^{-1}(p)\},
\]
which is nonnegative and concave on \([0,1]\).  The scale \(b_{i,n}\) is
positive, predictable, and shared across candidate means, while
\[
 m_i(m)=\frac{Nm-S_{i-1}}{N-i+1}
\]
is increasing in \(m\).  The one-step order argument in
Proposition~\ref{prop:shared-scale-betting-interval}, applied with
\(m_i(m)\) in place of \(m\), shows inductively that \(K_n^+(m)\) is
nonincreasing and \(K_n^-(m)\) is nondecreasing.  Intersecting their two
sublevel sets with \(\mathcal M_n\) therefore gives an interval.  This remains
true for any fixed randomized thresholds.

\paragraph{Step 2: strong preterminal bridge tracking.}
Now consider the triangular array in the theorem, write \(n=n_N\), and put
\[
 z=z_{1-\delta/2},\qquad
 s_N=n\tau_{N,n}
 =\sigma_N\sqrt{\frac{n(N-n)}{N-1}},\qquad
 Z_N=\frac{S_n-n\mu_N}{s_N}.
\]
For \(h\in\mathbb R\), define the lower- and upper-side local candidates
\[
 m_{N,\ell}(h)=\mu_N-h\tau_{N,n},
 \qquad
 m_{N,u}(h)=\mu_N+h\tau_{N,n}.
\]
Let \(p_{i,N}^+(h)=(\delta/2)K_i^+\{m_{N,\ell}(h)\}\) and
\(p_{i,N}^-(h)=(\delta/2)K_i^-\{m_{N,u}(h)\}\).  While a process is active,
write \(q_{i,N}^{\pm}(h)=\Phi^{-1}\{p_{i,N}^{\pm}(h)\}\).

\textit{Setting the bridge clock.}
The deterministic bridge clock before draw \(i\) is
\begin{equation}
 r_{i,N}=\frac{(n-i+1)(N-n)}{N-i},
 \qquad i=1,\ldots,n,
 \label{eq:wor-proof-bridge-clock}
\end{equation}
and we put \(r_{n+1,N}=0\).  Recall
\(\gamma_i=(N-n)/(N-i)\).  The scale in
\eqref{eq:wor-predictable-scale} can then be written exactly as
\begin{equation}
 b_{i,n}=\frac{\gamma_i}
 {\sqrt{r_{i,N}\widehat v_{i-1}}}.
 \label{eq:wor-proof-scale-factorization}
\end{equation}
Moreover, uniformly in \(i\leq n\),
\begin{equation}
 d_{i,N}:=r_{i,N}-r_{i+1,N},
 \qquad
 \frac{d_{i,N}}{\gamma_i^2}
 =\frac{(N-n-1)(N-i)}{(N-n)(N-i-1)}
 =1+O(N^{-1}).
 \label{eq:wor-proof-clock-increment}
\end{equation}
The last identity is the discrete finite-population analogue of the variance
clock in \eqref{eq:wor-bridge-clock}.

\textit{Controlling the estimator errors.}
We first record the almost-sure estimates used below.  Let
\(v_{i-1,N}^{\rm rem}\) denote the variance of the population remaining before
draw \(i\), centered at its remaining mean.  Hoeffding bounds for random
permutations, applied to both \(x\) and \(x^2\), followed by a union bound over
the prefixes of each row, imply that, almost surely,
\begin{align}
 |Z_N|&=O(\sqrt{\log(N+1)}),
 \label{eq:wor-proof-row-concentration}\\
 \max_{(\log(N+1))^4\leq t\leq n}
 \sqrt{\frac{t}{\log(N+1)}}
 \left|\widehat v_t-\sigma_N^2\right|
 &=O(1),
 \label{eq:wor-proof-variance-concentration}\\
 \max_{i\leq n}
 \left|v_{i-1,N}^{\rm rem}-\sigma_N^2\right|
 &=O\!\left(\sqrt{\frac{\log(N+1)}N}\right).
 \label{eq:wor-proof-remaining-variance}
\end{align}
To see the middle display, expand
\(X_i-\widehat\mu_{i-1}=(X_i-\mu_N)
-(\widehat\mu_{i-1}-\mu_N)\), sum the squares, and use the prefix bounds for
the two population moments.  The cross term is handled by summation by parts,
and the squared estimation errors contribute
\(O\{\log^2(N+1)\}\).  This gives
\eqref{eq:wor-proof-variance-concentration}.  The remaining population always
has at least \(N-n\asymp N\) values, so the same bounds applied to the
complement of a prefix give \eqref{eq:wor-proof-remaining-variance}.  The
failure probabilities can be taken of order \(N^{-3}\); hence the conclusions
hold for any coupling of the rows by Borel--Cantelli.  Since
\(\sigma_N^2\to\sigma^2>0\), these estimates also give, eventually almost
surely,
\begin{equation}
 \frac{\sigma^2}{2}\leq \widehat v_t\leq2,
 \qquad (\log(N+1))^4\leq t\leq n.
 \label{eq:wor-proof-eventual-variance-bounds}
\end{equation}

The rounds before this lower variance bound becomes available are negligible.
Indeed, with \(L_N=\lceil\{\log(N+1)\}^4\rceil\), regularization gives
\(\widehat v_{i-1}\geq1/(4i)\), while \(r_{i,N}\asymp N\) for
\(i\leq L_N\).  Hence each standardized increment in
\eqref{eq:wor-proof-scale-factorization} is at most
\(C\sqrt{i/N}\), and their sum through \(L_N\) is
\(O(L_N^{3/2}/\sqrt N)=o(1)\).  We may therefore begin the uniform Taylor and
martingale estimates below at \(L_N\); the omitted initial contribution is
absorbed into their remainder.

\textit{Taking a Taylor expansion.}
Choose, only for this proof,
\[
 k_N=\left\lceil\sqrt N\{\log(N+1)\}^8\right\rceil,
 \qquad j_N=n-k_N.
\]
We next identify the transformed one-step update.  The upper-tail argument is
enough, since reflection gives the lower-tail statement.  Put
\(Y_{i,N}(h)=X_i-m_i\{m_{N,\ell}(h)\}\).  Before clipping or absorption, the
relative-wealth update and \eqref{eq:wor-proof-scale-factorization} give
\[
 p_{i,N}^+(h)=p_{i-1,N}^+(h)
 +\phi\{q_{i-1,N}^+(h)\}
   \frac{\gamma_iY_{i,N}(h)}
        {\sqrt{r_{i,N}\widehat v_{i-1}}}.
\]
Taylor expansion of \(\Phi^{-1}\), followed by multiplication by
\(\sqrt{r_{i+1,N}}\), yields
\begin{align}
 &\sqrt{r_{i+1,N}}q_{i,N}^+(h)
  -\sqrt{r_{i,N}}q_{i-1,N}^+(h)\nonumber\\
 &\quad=
 \frac{\gamma_iY_{i,N}(h)}{\sigma_N}
 +\gamma_iY_{i,N}(h)
   \left(\widehat v_{i-1}^{-1/2}-\sigma_N^{-1}\right)\nonumber\\
 &\qquad+
 \frac{q_{i-1,N}^+(h)}{2\sqrt{r_{i,N}}}
 \left\{
   \frac{\gamma_i^2Y_{i,N}(h)^2}{\widehat v_{i-1}}
   -d_{i,N}
 \right\}+e_{i,N}(h).
 \label{eq:wor-proof-scaled-quantile-update}
\end{align}
Uniformly while
\(|q_{i-1,N}^+(h)|\sqrt{r_{i,N}/r_{1,N}}\leq C_0\log(N+1)\) and
\(r_{i,N}\geq k_N/2\), boundedness of the observations and
\eqref{eq:wor-proof-eventual-variance-bounds} give
\begin{equation}
 |e_{i,N}(h)|
 \leq C\left\{
  \frac1{r_{i,N}}
  +\frac{|q_{i-1,N}^+(h)|}{r_{i,N}^{3/2}}
  +\frac{1+|q_{i-1,N}^+(h)|^2}{r_{i,N}}
 \right\}.
 \label{eq:wor-proof-expansion-error}
\end{equation}
Terms arising from replacing \(\sqrt{r_{i+1,N}/r_{i,N}}\) by one are included
in this remainder.

The centered quadratic term in
\eqref{eq:wor-proof-scaled-quantile-update} is the key point.  Conditional on
\(\mathcal F_{i-1}\),
\(\E\{Y_{i,N}(0)^2\mid\mathcal F_{i-1}\}
=v_{i-1,N}^{\rm rem}\).  Equations
\eqref{eq:wor-proof-variance-concentration}--%
\eqref{eq:wor-proof-clock-increment} therefore give
\[
 \frac{\gamma_i^2v_{i-1,N}^{\rm rem}}
      {\widehat v_{i-1}}-d_{i,N}=o(1)
\]
uniformly away from the first \((\log(N+1))^4\) rounds.  Thus the curvature
term from \(\Phi^{-1}\) cancels the decrease in the bridge clock, leaving a
centered martingale fluctuation.  This is precisely the cancellation that
would be missed by using a Brownian-motion clock.

\textit{Controlling the Taylor expansion error within a grid.}
Let \(G_N\) be a deterministic grid in
\([-\log(N+1),\log(N+1)]\) with fixed mesh, so
\(|G_N|=O\{\log(N+1)\}\).  We claim that, almost surely,
\begin{align}
 \max_{h\in G_N}\left|
  \sqrt{\frac{r_{j_N+1,N}}{r_{1,N}}}q_{j_N,N}^+(h)
  -(Z_N+h-z)\right|&\longrightarrow0,
 \label{eq:wor-proof-preterminal-tracking}\\
 \max_{h\in G_N}\left|
  \sqrt{\frac{r_{j_N+1,N}}{r_{1,N}}}q_{j_N,N}^-(h)
  -(-Z_N+h-z)\right|&\longrightarrow0.\nonumber
\end{align}

Here are the details.  For each \(h\in G_N\), let \(\tau_{N,h}^+\) be the
first round \(i\leq j_N\) at which
\[
 |q_{i-1,N}^+(h)|
 \sqrt{r_{i,N}/r_{1,N}}>C_0\log(N+1),
\]
the unconstrained fraction is clipped, or relative wealth is absorbed at zero
or one.  Set \(\tau_{N,h}^+=\infty\) if no such round exists.  All calculations
below are initially made for the process stopped immediately before
\(\tau_{N,h}^+\); define \(\tau_{N,h}^-\) by reflection.
The linear variance-estimation error in
\eqref{eq:wor-proof-scaled-quantile-update} is a martingale transform plus the
local shift
\[
 m_i(\mu_N)-m_i\{m_{N,\ell}(h)\}
 =\frac{Nh\tau_{N,n}}{N-i+1}
 =O\!\left(\frac{|h|}{\sqrt N}\right).
\]
Bernstein's inequality and
\eqref{eq:wor-proof-variance-concentration} make the cumulative linear error
\(o(\sqrt{r_{1,N}})\), uniformly over \(G_N\).  For the quadratic term, write
\[
 Y_{i,N}(0)^2-v_{i-1,N}^{\rm rem}
\]
as a bounded martingale difference.  Its predictable transform in
\eqref{eq:wor-proof-scaled-quantile-update} has conditional variance
\(O\{\log^2(N+1)/k_N\}\) after division by \(r_{1,N}\).  The predictable
remainder is \(o(1)\) by
\eqref{eq:wor-proof-variance-concentration}--%
\eqref{eq:wor-proof-clock-increment}, and the local-shift terms are smaller.
Finally, summing \eqref{eq:wor-proof-expansion-error} and dividing by
\(\sqrt{r_{1,N}}\) gives
\[
 O\!\left\{
   \frac{\log(N+1)}{\sqrt N}
  +\frac{\{\log(N+1)\}^2\sqrt N}{k_N}
 \right\}=o(1).
\]
The Bernstein bounds remain summable after a union bound over \(G_N\).

Summing \eqref{eq:wor-proof-scaled-quantile-update} now telescopes.  We have
\(q_{0,N}=\Phi^{-1}(\delta/2)=-z\) and
\[
 \sum_{i=1}^{j_N}\gamma_iY_{i,N}(h)
 =H_{j_N}^{(n)}\{m_{N,\ell}(h)\}.
\]
Therefore,
\begin{equation}
 \sqrt{\frac{r_{j_N+1,N}}{r_{1,N}}}q_{j_N,N}^+(h)
 =-z+
 \frac{H_{j_N}^{(n)}\{m_{N,\ell}(h)\}}
      {\sigma_N\sqrt{r_{1,N}}}+o(1)
 \label{eq:wor-proof-telescoped-tracking}
\end{equation}
almost surely and uniformly over the grid.

\textit{Proving that the process has not stopped.}
It remains to remove the stopping.
Let \(E_N\) collect the finite-\(N\), stopped maximal concentration and
variance events used above for all \(h\in G_N\) and both signs.  Enlarging
their constants if necessary, the bounds above and a union bound over \(G_N\)
give \(\Pb(E_N^c)\leq CN^{-2}\), and hence
\(\sum_N\Pb(E_N^c)<\infty\).  On
\(E_N\), uniformly up to the corresponding first-exit time,
\[
 \sqrt{\frac{r_{i,N}}{r_{1,N}}}\,|q_{i-1,N}^{\pm}(h)|
 \leq |z|+O\{\sqrt{\log(N+1)}\}+\log(N+1)
       +o\{\log(N+1)\},
\]
which is strictly below \(C_0\log(N+1)\) when \(C_0\) is sufficiently
large.  Moreover, \(r_{i,N}\gtrsim k_N\), the inverse Mills bound, and
boundedness of the observations give
\[
 \sup_{\substack{h\in G_N,\ s\in\{+,-\}\\
                  i\leq j_N\wedge\tau_{N,h}^s}}
 \left[
  \frac{\psi\{\Phi(q_{i-1,N}^s(h))\}\gamma_i}
       {\sqrt{r_{i,N}\widehat v_{i-1}}}
  \vee
  \frac{\gamma_i\{1+|q_{i-1,N}^s(h)|\}}
       {\sqrt{r_{i,N}\widehat v_{i-1}}}
 \right]
 \leq C\left\{
  \frac1{\sqrt{k_N}}
 +\frac{\sqrt N\log(N+1)}{k_N}
 \right\}=o(1)
\]
up to first exit.  The first term inside the supremum is the unconstrained
betting fraction, while the second bounds
\(|a|\{1+|q_{i-1,N}^{s}(h)|\}\), where \(a\) is the corresponding
signed standardized observation
\(\gamma_iY/\sqrt{r_{i,N}\widehat v_{i-1}}\), since the observations are
bounded.  Positive limiting variance forces the remaining-population means
to stay away from zero and one, and the local-candidate perturbation is
uniformly \(o(1)\); hence the solvency caps remain bounded away from zero and
are eventually inactive.

Moreover, \eqref{eq:proof-normal-mills-bounds} and the second bound above
give
\[
 \frac{|\phi(q)a|}
      {\min\{\Phi(q),1-\Phi(q)\}}
 \leq C|a|(1+|q|)=o(1).
\]
Thus the proposed increment is smaller than the distance from
\(p=\Phi(q)\) to either boundary, so the proposed relative wealth remains
strictly between zero and one.  None of the three exit conditions can
therefore be the first one to occur on \(E_N\), and
\[
 E_N\subseteq
 \left\{\min_{h\in G_N,\,s\in\{+,-\}}\tau_{N,h}^s>j_N\right\}.
\]
Consequently,
\[
 \sum_N\Pb\!\left\{
  \min_{h\in G_N,\,s\in\{+,-\}}\tau_{N,h}^s\leq j_N
 \right\}<\infty.
\]
It follows that the stopped and original bridge processes agree through
\(j_N\) for all sufficiently large \(N\), almost surely.

\textit{Approximating the final observations.}
It remains to replace the preterminal Doob martingale by its terminal value.
The final \(k_N\) martingale increments are bounded, and their conditional
variance is \(O(k_N)\).  Hence, for every \(\varepsilon>0\),
\[
 \Pb\!\left(
  \max_{h\in G_N}
  \left|H_n^{(n)}\{m_{N,\ell}(h)\}
       -H_{j_N}^{(n)}\{m_{N,\ell}(h)\}\right|
  >\varepsilon\sqrt N\right)
 \leq C\log(N+1)\exp\{-c_\varepsilon N/k_N\}.
\]
This is summable.  Since
\[
 H_n^{(n)}\{m_{N,\ell}(h)\}
 =S_n-n\mu_N+nh\tau_{N,n}=s_N(Z_N+h)
\]
and \(s_N=\sigma_N\sqrt{r_{1,N}}\), Borel--Cantelli proves the first line of
\eqref{eq:wor-proof-preterminal-tracking}.  Reflection proves the second.

\paragraph{Step 3: summable terminal polarization.}
Let \(D_N^+(h)\) and \(D_N^-(h)\) indicate that the corresponding wealth has
reached \(2/\delta\) by time \(n\).  Fix \(\eta>0\).  Because
\(r_{1,N}\asymp N\) and \(r_{j_N+1,N}\asymp k_N\), put
\[
 Q_N:=\frac\eta2
 \sqrt{\frac{r_{1,N}}{r_{j_N+1,N}}}
 \asymp\sqrt{\frac N{k_N}}.
\]
There exist \(C,c>0\) such that, for all sufficiently large \(N\), all
\(|h|\leq\log(N+1)\), and all \(Q\geq Q_N\), conditionally on
\(\mathcal F_{j_N}\),
\begin{align}
 \mathbf1\{q_{j_N,N}^+(h)\geq Q\}
 \Pb\{D_N^+(h)=0\mid\mathcal F_{j_N}\}
 &\leq C\exp\{-cQ^2/\log k_N\}+C\exp(-cQ^2),
 \label{eq:wor-proof-terminal-polarization}\\
 \mathbf1\{q_{j_N,N}^+(h)\leq-Q\}
 \Pb\{D_N^+(h)=1\mid\mathcal F_{j_N}\}
 &\leq C\exp\{-cQ^2/\log k_N\}.\nonumber
\end{align}
To justify this conditional calculation, first truncate only the final-block
variance estimates to \([\sigma^2/2,2]\).  The resulting updates have
deterministic variance bounds, while
\eqref{eq:wor-proof-eventual-variance-bounds} implies that the truncated and
original updates agree for all sufficiently large \(N\), almost surely.
The variance of each remaining population is likewise bounded above and away
from zero at time \(j_N\) by
\eqref{eq:wor-proof-remaining-variance}.  This is an
\(\mathcal F_{j_N}\)-measurable condition, and removing at most
\(k_N=o(N)\) bounded values from a remaining population of order \(N\)
preserves the same bounds along every continuation of the final block.

\textit{Proving the
\(C\exp\{-cQ^2/\log k_N\}\) term
in~\eqref{eq:wor-proof-terminal-polarization}.}
As in the iid proof, the \(q\)-scale Taylor expansion is not used in the
final block.  Instead, use
\[
 H_+(q)=\log\{1-\Phi(q)\},\qquad
 H_-(q)=\log\Phi(q),
\]
with inverse Mills ratios
\[
 \lambda_+(q)=\frac{\phi(q)}{1-\Phi(q)},\qquad
 \lambda_-(q)=\frac{\phi(q)}{\Phi(q)}.
\]
These transforms measure relative changes in the distances of
\(p=\Phi(q)\) from its two boundaries, so they remain useful when one
update is too large for a Taylor expansion on the \(q\)-scale.

Put
\[
 a_{i,N}(h)=
 \frac{\gamma_iY_{i,N}(h)}
      {\sqrt{r_{i,N}\widehat v_{i-1}}},
\]
and write the capped fraction as a predictable multiple
\(\theta_{i,N}\in[0,1]\) of the unconstrained fraction.  Suppress
\(N\) and \(h\) in \(q_{i,N}^+(h)\) in the next display.  The exact
relative-wealth update gives, with boundary values interpreted by continuity,
\begin{align}
 H_+(q_i)-H_+(q_{i-1})
 &=\log\{1-\theta_{i,N}\lambda_+(q_{i-1})a_{i,N}(h)\}
 \leq-\theta_{i,N}\lambda_+(q_{i-1})a_{i,N}(h),
 \label{eq:wor-proof-upper-log-tail-update}\\
 H_-(q_i)-H_-(q_{i-1})
 &\leq\log\{1+\theta_{i,N}\lambda_-(q_{i-1})a_{i,N}(h)\}
 \leq\theta_{i,N}\lambda_-(q_{i-1})a_{i,N}(h).
 \label{eq:wor-proof-lower-log-tail-update}
\end{align}
The first inequality is used on paths that do not hit the upper threshold;
truncation at that threshold can only decrease the left-hand side of the
second.  Neither inequality requires
\(|a_{i,N}(h)|(1+|q_{i-1}|)\) to be small.

By \eqref{eq:proof-log-tail-gap}, moving from magnitude \(u\) to \(u/2\)
requires an adverse log-tail fluctuation of order \(u^2\).  Because \(|q|\)
may first increase before returning toward zero, consider separately the
scales \(u=2^sQ\), \(s\geq0\).  Let \(\tau_{u,N}\) be the first final-block
round whose pre-update state satisfies \(u\leq|q_{i-1}|\leq2u\), and let
\(\rho_{u,N}\) be the first subsequent exit from
\(u/2\leq|q_i|\leq2u\), with the infimum of the empty set equal to infinity.
Define the stopped-band downcrossing event
\[
 \mathcal D_{u,N}
 =\{\tau_{u,N}\leq n,\ \rho_{u,N}\leq n,\
     |q_{\rho_{u,N}}|<u/2\}.
\]
On this stopped segment,
\eqref{eq:proof-normal-mills-bounds} gives
\(\lambda_+(q)\vee\lambda_-(q)\leq C(1+|q|)\leq Cu\).
After variance truncation, \(|a_{i,N}(h)|\leq\sqrt{2}/\sigma\), so
\(|\Delta H_\pm|\leq Cu\) for the adverse increments on the stopped band.
Thus band-edge overshoot costs only \(O(u)=o(u^2)\), and the required
log-tail gap remains of order \(u^2\).

Write
\(\bar a_{i,N}(h)=\E\{a_{i,N}(h)\mid\mathcal F_{i-1}\}\).
The centered parts of
\eqref{eq:wor-proof-upper-log-tail-update}--%
\eqref{eq:wor-proof-lower-log-tail-update} are martingale transforms of
bounded remaining-population draws.  Their conditional Hoeffding variance
proxy is, by \eqref{eq:wor-proof-clock-increment}, at most
\[
 Cu^2\sum_{i=j_N+1}^n\frac{\gamma_i^2}{r_{i,N}}
 \leq Cu^2\sum_{i=j_N+1}^n\frac{Cd_{i,N}}{r_{i,N}}
 \leq Cu^2\log k_N.
\]
Moreover, the local-shift identity from Step 2 and
\(\sum_{i=j_N+1}^n\gamma_i/\sqrt{r_{i,N}}\leq C\sqrt{k_N}\)
give the predictable-drift bound
\[
 \sum_{i=j_N+1}^n
  \theta_{i,N}\{\lambda_+(q_{i-1})\vee\lambda_-(q_{i-1})\}
  |\bar a_{i,N}(h)|
 \leq Cu|h|\sqrt{k_N/N}=o(u^2)
\]
uniformly for \(|h|\leq\log(N+1)\) and \(u\geq Q_N\).  A downcrossing therefore
requires a centered fluctuation of order \(u^2\).  The conditional
martingale Hoeffding inequality bounds its probability by
\[
 C\exp\!\left\{-\frac{cu^4}{u^2\log k_N}\right\}
 \leq C\exp\!\left\{-\frac{cu^2}{\log k_N}\right\}.
\]
Any path that loses half its initial separation must incur
\(\mathcal D_{u,N}\) at one of the scales \(u=2^sQ\).  Hence
\begin{equation}
 \sum_{s\geq0}C\exp\!\left\{
  -\frac{c4^sQ^2}{\log k_N}\right\}
 \leq C\exp\!\left\{-\frac{c'Q^2}{\log k_N}\right\}.
 \label{eq:wor-proof-log-tail-downcrossing}
\end{equation}

\textit{Proving the
\(C\exp(-cQ^2)\) term
in~\eqref{eq:wor-proof-terminal-polarization}.}
It remains to ensure an exact hit of the rejection threshold when the initial
\(q\) is positive.  Put \(L=\lfloor c_0Q^2\rfloor\), with \(c_0>0\)
sufficiently small.  If \(q\) has not lost half its magnitude, then during
the final \(L\) draws, \(q\geq Q/2\), \(r_{i,N}\leq L\),
\(\gamma_i\) is bounded away from zero, and
\(\widehat v_{i-1}\leq2\).  Hence
\[
 \frac{q_{i-1,N}^+(h)\gamma_i}
      {\sqrt{r_{i,N}\widehat v_{i-1}}}
 \geq\frac{c}{\sqrt{c_0}}.
\]
Because the remaining-population variance is bounded below, there are fixed
\(\xi,\pi>0\) such that, at every such round, at least a proportion \(\pi\)
of the remaining values satisfy
\(X_i-m_i(\mu_N)\geq2\xi\).  The local-candidate shift is uniformly
\(o(1)\), so these values also satisfy \(Y_{i,N}(h)\geq\xi\) for all
large \(N\).

Choose \(c_0\) so that a favorable draw has
\[
 \frac{q_{i-1,N}^+(h)\gamma_iY_{i,N}(h)}
      {\sqrt{r_{i,N}\widehat v_{i-1}}}\geq1.
\]
The normal Mills bound then gives
\[
 \frac{\phi(q)\gamma_iY_{i,N}(h)}
      {\sqrt{r_{i,N}\widehat v_{i-1}}}
 \geq\frac{\phi(q)}q
 \geq1-\Phi(q),
\]
so the update covers the entire remaining distance from \(p=\Phi(q)\) to
one and reaches \(p_i=1\).  If the solvency cap is active, the same
conclusion holds for large \(Q\), as in the iid argument.  Conditionally
at each of the last \(L\) rounds, a favorable draw has probability at least
\(\pi\); therefore,
\[
 \Pb\{\text{no favorable draw in the last }L\text{ rounds}
       \mid\mathcal F_{n-L}\}
 \leq(1-\pi)^L\leq C\exp(-cQ^2).
\]
The same bound holds after conditioning on \(\mathcal F_{j_N}\) by the
tower property.  Together with
\eqref{eq:wor-proof-log-tail-downcrossing}, this proves
\eqref{eq:wor-proof-terminal-polarization}; reflection gives the same
conclusion for the lower-tail wealth.

\textit{Combining~\eqref{eq:wor-proof-preterminal-tracking}
and~\eqref{eq:wor-proof-terminal-polarization}.}
Taking \(Q=Q_N\) in \eqref{eq:wor-proof-terminal-polarization}, we have
\[
 \frac{Q_N^2}{\log k_N}
 \asymp\frac{\sqrt N}{\{\log(N+1)\}^9}.
\]
The error bounds are summable even after a union bound over \(G_N\).  Combining
them with \eqref{eq:wor-proof-preterminal-tracking} and applying
Borel--Cantelli gives, almost surely for all sufficiently large \(N\),
\begin{align}
 D_N^+(h)&=\mathbf1\{Z_N+h>z\}
 &&\text{if \(h\in G_N\) and \(|Z_N+h-z|\geq\eta\)},
 \label{eq:wor-proof-grid-decisions}\\
 D_N^-(h)&=\mathbf1\{-Z_N+h>z\}
 &&\text{if \(h\in G_N\) and \(|-Z_N+h-z|\geq\eta\)}.\nonumber
\end{align}

\paragraph{Step 4: from grid decisions to the interval endpoints.}
Fix \(0<\varepsilon<z\), choose the grid mesh at most
\(\varepsilon/4\), and take \(\eta=\varepsilon/2\).  By
\eqref{eq:wor-proof-row-concentration},
\(|Z_N|=o\{\log(N+1)\}\) almost surely, so the random local indices below
eventually lie inside the grid.  For fixed \(a>0\),
\[
 \bar X_n-a\tau_{N,n}
 =m_{N,\ell}(a-Z_N),
 \qquad Z_N+(a-Z_N)-z=a-z.
\]
Bracketing \(a-Z_N\) by neighboring grid points and using the monotonicity
proved in Step 1 turns \eqref{eq:wor-proof-grid-decisions} into
\[
 \mathbf1\!\left\{
 K_n^+(\bar X_n-a\tau_{N,n})\geq\frac2\delta
 \right\}\longrightarrow\mathbf1\{a>z\}
 \qquad\text{almost surely}
\]
for every fixed \(|a-z|\geq\eta\).  Reflection gives the analogous statement for
\(K_n^-(\bar X_n+a\tau_{N,n})\).

The positive limiting variance implies that \(\mu_N\) stays in a compact
subset of \((0,1)\), while \(n/N\to\rho\in(0,1)\).  The four local candidates
used below therefore belong to the feasible interval \(\mathcal M_n\)
eventually almost surely.  At a lower-side candidate, the lower-tail process
accepts because its limiting boundary expression is \(-a-z<0\); the upper-tail
process likewise accepts at an upper-side candidate.  Applying the preceding
decisions with \(a=z-\varepsilon\) and \(a=z+\varepsilon\) yields
\begin{align*}
 \bar X_n-(z+\varepsilon)\tau_{N,n}
 &\leq\inf\mathcal I_{N,n}^{\rm br}
 \leq\bar X_n-(z-\varepsilon)\tau_{N,n},\\
 \bar X_n+(z-\varepsilon)\tau_{N,n}
 &\leq\sup\mathcal I_{N,n}^{\rm br}
 \leq\bar X_n+(z+\varepsilon)\tau_{N,n}
\end{align*}
eventually almost surely.  The two inner candidates are accepted by both
one-sided tests, so the interval is eventually nonempty.  Subtracting the
endpoint bounds gives
\[
 z-\varepsilon
 \leq\frac{\operatorname{len}(\mathcal I_{N,n}^{\rm br})}
              {2\tau_{N,n}}
 \leq z+\varepsilon
\]
eventually almost surely.  Intersecting over positive rational
\(\varepsilon\downarrow0\) proves \eqref{eq:wor-efficiency}.
\end{proof}

\end{document}